\documentclass[a4paper,autoref]{article}
\usepackage[utf8]{inputenc}
\usepackage{xspace}
\usepackage{amsthm}
\usepackage{amsmath}
\usepackage{tabularx,booktabs}
\usepackage{thmtools, thm-restate}
\usepackage{hyperref}
 \usepackage[a4paper,%
             left=1in,right=1in,top=1in,bottom=1in,%
             ]{geometry}
\usepackage{cleveref}
\usepackage{cite}
\usepackage{graphicx}
\usepackage[normalem]{ulem}
\usepackage[charter]{mathdesign}
\usepackage[scale=0.95]{sourcesanspro}	
\usepackage{inconsolata}
\usepackage{arydshln,cellspace}
\dashlinedash 0.75pt
\dashlinegap 1.5pt
\setlength\cellspacetoplimit{4pt}
\setlength\cellspacebottomlimit{4pt}
% \usepackage{lineno}
% \linenumbers

% \usepackage[sort&compress]{natbib}

%%%%%%%%%%%%%%%%%%%%%%%%%%%%%%%%%%%%%%%
\renewcommand{\paragraph}[1]{\vspace{7pt} \noindent \textbf{#1}}
\newcommand{\R}{\ensuremath{\mathbb R}\xspace}
\newcommand{\etal}{{et al.}\vspace{-1pt}\xspace}

\newcommand{\Q}{\ensuremath{\mathbb Q}\xspace}
\newcommand{\QQ}{\ensuremath{Q}\xspace}

\newcommand{\Z}{\ensuremath{\mathbb Z}\xspace}
\newcommand{\N}{\ensuremath{\mathbb N}\xspace}
\newcommand{\eps}{\ensuremath{\varepsilon}\xspace}
\newcommand {\E}{\ensuremath{\mathbb{E}}}
\newcommand{\ER}{\ensuremath{\exists \mathbb{R}}\xspace}
\newcommand{\ETR}{\ensuremath{\textrm{ETR}}\xspace}
\newcommand{\Opt}{\ensuremath{\textrm{opt}}\xspace}
\newcommand{\NP}{\ensuremath{\textrm{NP}}\xspace}
\newcommand{\PSPACE}{\ensuremath{\textrm{PSPACE}}\xspace}
\newcommand {\SmoothAna}{smoothed analysis\xspace}
\newcommand {\realRam}{real RAM\xspace}
\newcommand {\wordRam}{word RAM\xspace}
\newcommand {\VoronoiDiagram}{Voronoi diagram\xspace}
\newcommand {\voronoicell}{Voronoi cell\xspace}
\newcommand {\realVerify}{real verification algorithm\xspace}

\newcommand {\gvAlgo}{recognition verification algorithm\xspace}
\newcommand {\mono}{monotonous\xspace}
\newcommand{\breakproperty}{moderate\xspace}
\newcommand{\bitaugmented}{smoothable\xspace}
\newcommand{\wordsize}{w\xspace}
\newcommand{\width}{\ensuremath{\omega}\xspace}

\renewcommand{\line}{{\ensuremath{\textrm{segment}}}\xspace}

\newcommand{\Inst}{I}
\newcommand{\Rcert}{x}
\newcommand{\Zcert}{z}
\newcommand{\IfThenElse}[3]{\textsf{if~~} #1 \textsf{~~then~~} #2 \textsf{~~else~~} #3}

\newcommand {\InputBitComp}{input-precision\xspace}
\newcommand {\BitComp}{bit-precision\xspace}
\newcommand {\BitLength}{bit-length\xspace}
\newcommand {\AlgDim}{algebraic dimension\xspace}
\newcommand {\AlgDeg}{arithmetic degree\xspace}
\newcommand {\fewPolynomials}{few polynomials\xspace}
\newcommand {\totalPoly}{total number of polynomials\xspace}
\newcommand {\TotalPoly}{Total number of polynomials\xspace}

\newcommand {\BIT}{\texttt{bit}}
\newcommand {\BitInput}{\texttt{bit}_\texttt{IN}}

\newcommand{\var}[1]{\ensuremath{\llbracket #1 \rrbracket}\xspace}

\newcommand{\floor}[1]{\lfloor #1 \rfloor}

\newcommand{\sign}{\ensuremath{\textnormal{sign}}}		% Not sgn?

\newcommand{\range}[2]{#1\,..\,#2}

%%%%%%%%%%%%%%%%%%%%%%%%%%%%%%

 \newtheorem{theorem}{Theorem}
 \newtheorem{corollary}[theorem]{Corollary}
 \newtheorem{lemma}[theorem]{Lemma}

% \nolinenumbers

\title{
Smoothing the gap between NP and ER
}
\author{Jeff Erickson, University of Illinois, jeffe@illinois.edu\\
Ivor van der Hoog, Utrecht University, i.d.vanderhoog@uu.nl\\
Tillmann Miltzow, Utrecht University, t.mitzow@uu.nl}

% \begin{AMS}
%   68Q25, 68R10, 68U05
% \end{AMS}

\begin{document}

\maketitle
\thispagestyle{empty}

\begin{abstract}
We study algorithmic problems that belong to the complexity class of the existential theory of the reals (\ER). A problem is \ER-complete if it is as hard as the problem ETR and if it can be written as an ETR formula.
Traditionally, these problems are studied in the \realRam, a model of computation that assumes that the storage and comparison of real-valued numbers can be done in constant space and time, with infinite precision.
 The complexity class \ER is often called a \realRam analogue of \NP, since the problem ETR can be viewed as the real-valued variant of SAT.
The \realRam assumption that we can represent and compare arbitrary irrational values in constant space and time is not very realistic. 
Yet this assumption is vital, since some \ER-complete problems have an ``exponential bit phenomenon'' where there exists an input for the problem, such that the witness of the solution requires geometric coordinates which need exponential word size when represented in binary. 
The problems that exhibit this phenomenon are NP-hard (since ETR is NP-hard) but it is unknown if they lie in \NP. \NP membership is often showed by using the famous Cook-Levin theorem which states that the existence of a polynomial-time verification algorithm for the problem witness is equivalent to \NP membership. The exponential bit phenomenon prohibits a straightforward application of the Cook-Levin theorem.

In this paper we first present a result which we believe to be of independent interest: we prove a \realRam analogue to the Cook-Levin theorem which shows that \ER membership is equivalent to having a verification algorithm that runs in polynomial-time on a \realRam. This gives an easy proof of \ER-membership, as
verification algorithms on a \realRam are 
much more versatile than \ETR-formulas.

We use this result to construct a framework to study \ER-complete problems under \SmoothAna.
We show that for a wide class of \ER-complete problems, its witness can be represented with logarithmic \InputBitComp by using \SmoothAna on its \realRam verification algorithm. This shows in a formal way that the boundary between \NP and \ER (formed by inputs whose solution witness needs high \InputBitComp) consists of contrived input.

We apply our framework to well-studied \ER-complete recognition problems which have the exponential bit phenomenon such as the recognition of realizable order types or the Steinitz problem in fixed dimension. Interestingly our techniques also generalize to problems with a natural notion of resource augmentation (geometric packing, the art gallery problem). 

A prior version of this paper appeared in FOCS 2020~\cite{FOCSRobustComputation}.
\end{abstract}

\hfill

\paragraph{Acknowledgements.}
The second author is supported by the Netherlands Organisation for Scientific Research (NWO); 614.001.504.
The third author is supported by the Netherlands Organisation for Scientific Research (NWO) under project no. \textrm{016.Veni.192.25}.
We thank anonymous reviewers for valuable and detailed comments.
\clearpage
\pagenumbering{arabic} 

\section{Introduction}

The RAM is a mathematical model of a computer 
which emulates how 
a computer can access and manipulate data. 
Within computational geometry,
algorithms are often analyzed within the \realRam
\cite{salesin1989epsilon,fortune1993efficient, li2005recent} 
where real values can be stored and 
compared in constant space 
and time. By allowing these infinite precision computations, it 
becomes possible to verify 
% certain 
geometric primitives in constant 
time, which simplifies the analysis of geometric algorithms. 
Mairson and Stolfi~\cite{mairson1988reporting} 
point out that
``without this assumption it is
virtually impossible to prove the correctness 
of any geometric algorithms.''

Intuitively (for a formal definition, skip ahead to the bottom of page 2) we define  the \InputBitComp of a \realRam algorithm $A$ with input $I$ as the minimal word size required to express each input value in $I$, such that the algorithm executes the same operations in the \wordRam as in the \realRam.
The downside of algorithm analysis in the \realRam is that 
it neglects the \InputBitComp required by the underlying
algorithms for correct execution, although they are very important in practice.
Many algorithms, including some verification algorithms of $\ER$-complete problems, inherently require large \InputBitComp~\cite{salesin1989epsilon} and there are even examples which in the worst case require an \InputBitComp exponential in the number of input variables in order to be correctly executed~\cite{kammer2014practical, ExpCoord}.

Often inputs which
require exponential \InputBitComp are contrived and 
do not resemble \emph{realistic} inputs. 
A natural way to capture this from a theoretical 
perspective is \SmoothAna, which \emph{smoothly} interpolates 
 between
worst case analysis and average case analysis~\cite{spielman2004smoothed}.
Practical inputs are constructed inherently
with small amount of noise and random perturbation. 
This perturbation helps to show performance guarantees
in terms of the input size and the magnitude of the perturbation.
By now \SmoothAna is well-established, for instance 
Spielman and Teng received the G\"{o}del Prize for it. 

We give a bird's eye view of the paper before providing proper definitions:  we study the \realRam and its complexity class \ER through the lens of \SmoothAna. 
We start by proving a result separate of \SmoothAna, which we believe to be of independent interest: we show a \realRam analogue of the famous Cook-Levin theorem as we prove that \ER membership is equivalent to the existence of a verification algorithm that runs in polynomial-time on a \realRam.
We then use the existence of this verification algorithm to construct a framework with which \SmoothAna can be applied to a wide class of \ER-complete problems. We show that \ER-complete recognition problems, with a verification algorithm in the \realRam that has polynomially bounded \emph{\AlgDeg}, have witnesses that can be represented with a logarithmic word size under \SmoothAna. 
This implies that these problems have polynomial-time verification algorithms on the \wordRam that succeed for all but a small set of contrived inputs.
Finally, we extend our framework to include \ER-complete problems that have a natural notion of resource augmentation. This generalizes an earlier result of \SmoothAna of the Art Gallery problem~\cite{ArxivSmoothedART}.

\paragraph{RAM computations.}
The Random Access Machine (RAM) 
is a model of computation for the standard computer architecture. 
The precise definition of the RAM varies,
but at its core the RAM has a 
number of \emph{registers}
and a \emph{central processing unit} (CPU),
which can perform operations on register values 
like reading, writing, comparisons, and arithmetic operations.
% ,
% see \Cref{tab:instructions} in the Section for details.
The canonical model of computation within
computer science % Too strong: within the algorithms research/design community?
is the \emph{\wordRam}, 
a variation on the RAM formalized by Hagerup~\cite{h-sswr-98} but 
previously considered by Fredman and Willard \cite{fw-sitbf-93,fw-tams-94}
and even earlier by Kirkpatrick and Reisch \cite{kr-ubsir-84}.
The \wordRam %is designed to model two crucial 
models two crucial
aspects of real-life computing: (1) computers must store values with 
finite precision and (2) computers take more time to perform 
computations if the input of the computation is longer.
%Specifically, this model has a word size $\wordsize$ and it defines 
%constant-time operations between values with at most $\wordsize$~bits.
Specifically, the word RAM supports constant-time operations on
$\wordsize$-bit integers, where the \emph{word size} $\wordsize$ is a parameter of the model.

Many portions of the algorithms community (either explicitly or implicitly) use a different variation of the RAM called the \emph{\realRam},
where registers may contain arbitrary real numbers, instead of just integers;
The usage of the \realRam is prevalent in the field of computational geometry but also in probabilistic algorithm analysis where one wants to reason about continuous perturbations of the input.
The abstraction offered by the \realRam dramatically simplifies the design and analysis of algorithms,  at the cost of working in a physically unrealistic model.
Implementations of \realRam algorithms using finite-precision data types are prone to errors,
not only because the output becomes imprecise, but because rounding errors can lead the algorithm into inconsistent states. Kettner~\cite{kettner2008classroom} provides an overview of complications that arise from the unrealistic precision that the \realRam assumes.

\begin{figure}[t]
    \centering
    \includegraphics{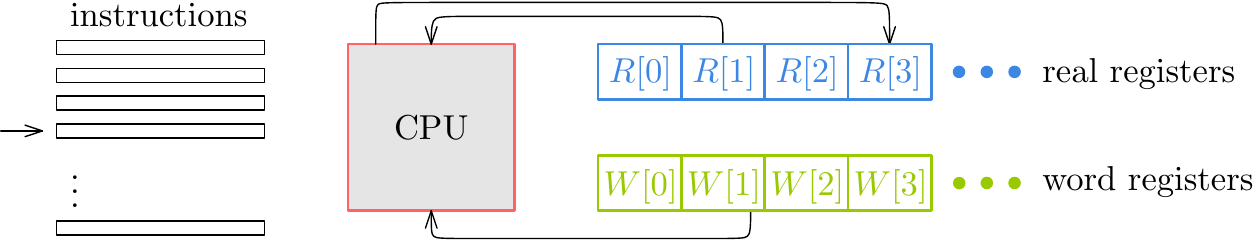}
    \caption{The dominant model in computational geometry is the \realRam. 
    It consists of a central processing unit, which can operate
    on real and word registers in constant time, following a set of instructions.}
    \label{fig:RAM}
\end{figure}

\paragraph{Formally modeling \realRam algorithms.}
The \realRam has been the standard underlying model of computation in computational geometry since the field was founded in the late 1970s \cite{s-cg-79, ps-cgi-85}.  
Despite its ubiquity, we are unaware of \emph{any} published definition of the model that is simultaneously precise enough to support our results and broad enough to encompass most algorithms in the algorithm analysis literature.  The obvious candidate for such a definition is the real-computation model proposed by Blum, Shub, and Smale \cite{blum1989theory, bcss-crc-98}; however, this model does not support the integer operations necessary to implement even simple algorithms:
even though the \realRam is often presented, either formally or intuitively,
as a random access machine that stores and manipulates only exact real numbers,
countless algorithms in this model require decisions based on both exact real
and finite precision integer values.  Consider the following
example: given an array of $n$ real values as input, compute their sum.
Any algorithm that computes this sum must store and manipulate real numbers;
however, the most straightforward algorithm also requires indirect memory access 
through an \emph{integer} array index.
More complex examples include call stack maintenance, discrete symbol manipulation, and
multidimensional array indexing and program slicing.

On the other hand, real and integer operations must be combined with care to avoid  unreasonable \emph{discrete} computation power.  A model that supports both exact constant-time real arithmetic and constant-time conversion between real numbers and integers, for example using the floor function, would also trivially support arbitrary-precision constant-time \emph{integer} arithmetic.  (To multiply two integers, cast them both to reals, multiply them, and cast the result back to an integer.)  Including such constant-time operations allows any problem in PSPACE to be solved in polynomial-time \cite{s-pram-79}; see also~\cite{kr-ubsir-84, hs-pmram-74, bms-scram-85, e-mms-90} for similar results.

To accommodate this mixture of real and integer operations, and to avoid complexity pitfalls, we define the 
\realRam 
as an extension of the standard integer \wordRam 
\cite{h-sswr-98} 
(refer to \Cref{fig:RAM}). 
We define the \realRam in terms of a fixed parameter $w$, called the \emph{word size}.  A \emph{word} is an integer between $0$ and $2^w-1$, represented as a sequence of $w$ bits.
 Mirroring standard definitions for the \wordRam, memory consists of two \emph{random access arrays} $W[\range{0}{2^w-1}]$ and $R[\range{0}{2^w-1}]$, whose elements we call \emph{registers}.  Both of these arrays are indexed/addressed by words; for any word~$i$, register $W[i]$ is a word and register $R[i]$ is an exact real number.  (We sometimes refer to a word as an \emph{address} when it is used as an index into a memory array.)

A program on the \realRam consists of a fixed, finite indexed sequence of read-only instructions.  The machine maintains an integer \emph{program counter}, which is initially equal to $1$.  At each time step, the machine executes the instruction indicated by the program counter.  The \textsf{goto} instruction modifies the program counter directly; the \textsf{halt} and \textsf{accept} and \textsf{reject} instructions halt execution
%(and possibly return a boolean result); 
otherwise, the program counter increases by~$1$ after each instruction is executed.

The input to a \realRam program consists of a pair of vectors $(a,b) \in \R^n \times \Z^m$, for some integers $n$ and $m$, which are suitably encoded into the corresponding memory arrays before the program begins.\footnote{Following standard practice, we implicitly assume throughout the paper that the integers in the input vector~$b$ are actually $w$-bit \emph{words}; for problems involving larger integers, we take $m$ to be the number of \emph{words} required to encode the integer part of the input.}  To maintain uniformity, we require that neither the input sizes $n$ and $m$ nor the word size $w$ is known to any program at “compile time”.  The output of a \realRam program consists of the contents of memory when the program executes the \textsf{halt} instruction.
The \emph{running time} of a \realRam program is the number of instructions executed before the program halts; each instruction requires one time step by definition.
Refer to \Cref{sec:DefRealRam} for a table of the operations and further explanation of the \realRam. 
 Crucially, we consider the \realRam without trigonometric, exponential and logarithmic operations. 
 We do allow for the use of the square root operator for our results regarding \ER-completeness but not for our result on smoothed analysis.
 
\paragraph{A recent addition to the RAM models.}
Having detailed our proposal of RAM model, we wish to briefly mention a recent proposal made by Demaine, Hesterberg and Ku~\cite{demaine2020finding}.
In this recent paper, the authors note similarly that the word RAM does not support operations with the appropriate degree of precision for algorithms in computational geometry. Our definition of the real RAM offers a way to theoretically model the theoretical computational power needed for constant-time real-valued computations and we obtain this model by extending the word RAM model.
Demaine, Hesterberg and Ku propose their own extension of the word RAM model, which is theoretically less general but which has a more direct relation to real-life computation power.

The authors note that if a constant-size radical expression (this includes the addition, multiplication and division of square roots) depends on variables of $b$ bits, then an algorithm can decide if the expression is greater than $0$ in $O(b)$ time. 
Following this observation, the authors assume a word RAM, which allows memory cells to not only store expressions of values with wordsize $b$, but these constant-size radical expressions. This makes their model more general than a traditional word RAM. This generalization is useful, as for example geometric operations or algorithms that require the computation of Euclidean distance are hereby supported. 

We want to mention that their model of computation is not directly applicable to our results for smoothed analysis or \ER-completeness, as both problems per definition require a way to model constant-time arithmetic on arbitrary reals. 
However we found it important to note that there are recent discussions on how and when radical expressions can be allowed in RAM computations.

\paragraph{Formally defining \BitComp and \InputBitComp.}
We say two inputs  
$I = (a,b) \in \R^n\times \Z^m$
and 
$I'= (a',b) \in \R^n\times \Z^m$
are equivalent with respect to an algorithm $A$ on a 
\realRam,
$(I \cong_A I')$ if 
every comparison operation gives the same result.
In particular, this implies that at every step the 
program counter is at the same position.
For each integer $z \in \Z$, we denote by $\BIT(z)$ the length of its binary representation.
% , i.e., $\lfloor \log_2 ( |z| + 1) \rfloor + 1$. 
For each rational number $y= p/q \in \Q$, with $p,q$ irreducible, we denote by $\BIT(y) := \BIT(p) + \BIT(q)$ the length of its binary representation.

First we consider $I = (a,b) \in \Q^n \times \Z^m$ as input for a \realRam algorithm $A$. 
We denote by $\BitInput(I) =
\max_i \; \max \{\BIT(a_i), \BIT(b_i) \}$ the \emph{input \BitLength} of $I$.
We denote by $C(I)$ the set of all values of all registers during the execution of $A$ with $I$, and by $\BIT(C(I)) = \underset{c \in C(I)}{\max} \{ \BIT(c)\}$ as the \emph{ execution \BitLength} of $I$.

Now, we are ready to define the \BitComp and \InputBitComp of \emph{real} input.
The \BitComp of $A$ with input $I = (a,b)\in\R^n\times \Z^m$ is: 
\[
\BIT(I, A) := {\min} \{\,  \BIT(C(I')) \mid I' \in \Q^n \times \Z^m,  I \cong_A I' \, \} .
\]
In the same manner, the  \InputBitComp 
is defined as :
\[
\BitInput(I, A) := {\min} \{\,  \BitInput(I') \mid I' \in \Q^n \times \Z^m,  I \cong_A I' \, \} .
\]
If there is no equivalent input $I' = (a,b)\in \Q^n \times \Z^m$, then we say $\BIT(I,A) = \BitInput(I, A) = \infty$.
It is now straightforward to \emph{simulate} an execution 
of a \realRam algorithm $A$ on input 
$I = (a,b) \in \R^n \times \Z^m$
on a \wordRam with word size $w = O(\BIT(I,A))$.

\paragraph{Arithmetic degree.}
Liotta, Preparata and Tamassia~\cite{liotta1998robust} studied the required \BitComp for \realRam proximity algorithms. To aid their analysis, they defined the \emph{arithmetic degree} of an algorithm. We express their definition in our \realRam model and add the notion of \AlgDim for our later analysis.
It follows from our definition of \realRam and our list of operations in \Cref{tab:instructions} that at all times during the computation, a 
real register holds a value which can be described
as the quotient of two polynomials $\frac{p}{q}$ 
of the real input values $a$. Observe that adding, subtracting, or multiplying two rational functions yields another rational function, possibly of higher degree; for example, $\tfrac{p_1}{ q_1} + \tfrac{p_2}{ q_2} = \tfrac{p_1q_2 + p_2 q_1}{q_1q_2}$.
We say an algorithm~$A$ has \emph{\AlgDeg}
$\Delta$, if $p$ and $q$ always have
total degree at most $\Delta$.
Similarly, $A$ has \emph{\AlgDim} $d$,
if the  number of variables in $p$ and $q$ is always 
at most $d$.
Finally, we define the \emph{max (integer) coefficient} $c$, if the
absolute value of the largest coefficient of $p$ and $q$ is always bounded from above by $c$.
Bounded \AlgDim, \AlgDeg and max coefficient give an interesting relation between the \InputBitComp and the \BitComp:

\begin{lemma}
\label{lem:intermediate}
    Let $I\in (a,b) \in \R^n \times \Z^m$ be some input and $A$ be a \realRam algorithm with $\BitInput(I, A) = p$, \AlgDim $d$, \AlgDeg $\Delta$ and max coefficient $c$. Then its \BitComp is bounded  from above  by
    $O(p \Delta^2 \log d \log c)$.
\end{lemma}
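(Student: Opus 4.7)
The plan is to bound the execution bit-length by choosing a good rational equivalent input and then evaluating the rational function stored in each register. First I would invoke the definition of $\BitInput(I,A) = p$ to obtain an equivalent input $I' = (a',b) \in \Q^n \times \Z^m$ with $I \cong_A I'$ and $\BitInput(I') = p$. By the definition of $\BIT(I,A)$ as a minimum over such $I'$, it suffices to bound $\BIT(C(I'))$ for this particular rational input, whose components $a'_i = \alpha_i/\beta_i$ satisfy $\BIT(\alpha_i), \BIT(\beta_i) \le p$.

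Next I would invoke the three structural hypotheses together. At every step each real register holds a value of the form $P(a')/Q(a')$, where $P$ and $Q$ are polynomials in at most $d$ of the variables of $a'$, have total degree at most $\Delta$, and have integer coefficients of magnitude at most $c$. Such a polynomial has at most $N \le \binom{d+\Delta}{\Delta} \le (d+\Delta)^\Delta$ monomials, so $\log N = O(\Delta \log d)$.

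The main step is then to evaluate $P(a')$ over the single common denominator $D := \prod_i \beta_i^\Delta$, whose bit-length is at most $d\Delta p$. Each monomial $c_e \prod_i (\alpha_i/\beta_i)^{e_i}$, once rewritten over $D$, contributes to the numerator an integer of absolute value at most $c \cdot 2^{d\Delta p}$, and hence of bit-length $O(\log c + d\Delta p)$. Summing the $N$ such integer contributions yields $\BIT(P(a')) = O(d\Delta p + \log c + \Delta \log d)$, and an identical bound applies to $\BIT(Q(a'))$. Forming the quotient $P(a')/Q(a')$ at most doubles the bit-length, so each register value has bit-length $O(d\Delta p + \log c + \Delta \log d)$, which (absorbing lower-order additive terms under the mild regime $\Delta, d, c \ge 2$) is dominated by the claimed $p \Delta^2 \log d \log c$.

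The hard part will be the monomial-summation step: a naive pairwise addition would multiply the bit-length by a factor of $N$ at every sum, so I must normalize all $N$ monomials to the single common denominator $D$ up front and bound each per-monomial numerator independently, rather than iterating addition term by term.
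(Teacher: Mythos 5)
Your derivation up to the penultimate sentence is correct and, in fact, more careful than the paper's own proof. The paper's argument uses the rule ``summing $t$ numbers of bit-length $p$ gives bit-length $O(p \log t)$'' applied to the $O(d^\Delta)$ monomial values, but that rule is only valid for \emph{integers}; for rational summands the cost of taking a common denominator can be multiplicative rather than logarithmic. You correctly identify that the common denominator $D = \prod_i \beta_i^\Delta$ governs the cost and that $\BIT(D)$ can be as large as $d\Delta p$, which yields your (correct) bound $O(d\Delta p + \log c + \Delta\log d)$.

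The gap is your final sentence: the bound $O(d\Delta p + \log c + \Delta\log d)$ is \emph{not} dominated by the stated $O(p\Delta^2\log d\log c)$. Comparing the leading terms, you would need $d = O(\Delta\log d\log c)$, which fails whenever $d$ grows while $\Delta$ and $c$ stay bounded. Concretely, take $a'_i = 1/\beta_i$ for $d$ distinct primes $\beta_i$ of bit-length $\Theta(p)$ and $P(x)=\sum_{i<j}x_i x_j$ (so $\Delta = 2$, $c = 1$): the reduced value $P(a')$ has denominator $\prod_i \beta_i$ of bit-length $\Theta(dp)$, which exceeds $p\Delta^2\log d\log c = \Theta(p\log d)$ for $d$ large. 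So the quantity you derived is genuinely incomparable to the stated bound, and ``absorbing lower-order terms'' does not close the gap. (This also indicates that the paper's own proof, which omits the common-denominator contribution, does not actually establish the stated bound; your more careful analysis surfaces a missing factor of $d$, and your expression appears to be the correct upper bound on the execution bit-length.)
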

\begin{proof}
If we multiply a number of \BitLength $b$ with a number $c$ then the resulting number has bit complexity at most $O(b \log c)$.
If we multiply $\Delta$ numbers of \BitLength $p$, the total \BitLength is at most $\Delta p$.
If we sum $t$ numbers of \BitLength $p$, the total \BitLength is at most $O(p \log t)$.
In a polynomial in $d$ variables, we have at most $O(d^{\Delta})$ monomials.
Thus the \BitComp is bounded  from above by 
$O(\log c \log( d^\Delta) \Delta p) = O(p \Delta^2 \log d \log c)$.
\end{proof}

This allows us to only focus on \InputBitComp in our \SmoothAna which we explain next:  
% -----------------------------------------------------------------------------

\paragraph{Smoothed analysis.}
In \emph{\SmoothAna{}}, the performance of an algorithm 
is studied for worst case input which is 
randomly perturbed 
by a magnitude of~$\delta$. (that is, we consider out of all inputs the input that is least-beneficial to perturb from) 
Intuitively, \SmoothAna interpolates between average case 
and worst case analysis (\Cref{fig:ConceptSmoothedAnalysis}). 
The smaller~$\delta$, the closer
we are to true worst case input.
Correspondingly larger~$\delta$ is closer to 
the average case analysis. The key difficulty in applying \SmoothAna
is that one has to argue about both worst case 
and average case input.
Spielman and Teng explain their analysis by applying it to 
the simplex algorithm,
which was known for a particularly good performance in practice 
that was seemingly impossible to 
verify theoretically~\cite{klee1970good}.
Since the
introduction of \SmoothAna, 
it has been applied to numerous problems~\cite{nemhauser1969discrete,KnapsackSmooth, arthur2006worst, englert2007worst,MaxCUTsmoothed}.
Most relevant for us is the \SmoothAna of the 
art gallery problem~\cite{ArxivSmoothedART} 
and of order types~\cite{van2019smoothed}, which we generalize  in \Cref{sub:resourceAugmentation}.

\begin{figure}[t]
    \centering
    \includegraphics{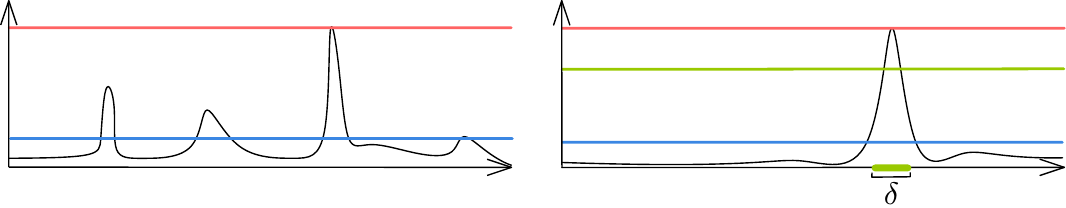}
    \caption{ The $x$-axis represents all inputs for the algorithmic problem. The $y$-axis the associated cost for every input. The red line indicates the worst case cost. The blue line average cost. If we pick a parameter $\delta$ and a point $a$ on the $x$-axis, then all permutations from $a$ of at most $\delta$ form some area centered around $a$. We can compute the average cost of points in this area. The cost under \SmoothAna is the maximum over all points $a$, of this average cost and is shown by the green line.}
    \label{fig:ConceptSmoothedAnalysis}
\end{figure}

Formally we define \SmoothAna as follows: let us fix an algorithm $A$ and some $\delta \in [0,1]$, 
which describes the \emph{magnitude of perturbation}.
We denote by $I= (a,b)\in [0,1]^{n}\times \Z^m$ the input of $A$.
We scale the real-valued input to lie in $[0,1]$,
to normalize $\delta$.
Unless explicitly stated, we assume that each real number is perturbed \emph{independently uniformly at random} (each real number receives an offset sampled uniformly at random from $[\frac{-\delta}{2}, \frac{\delta}{2}]$). We assume that the integer input stays as it is,  since we assume that it already fits into main memory.
We denote by ($\Omega_\delta$, $\mu_\delta$) 
the probability space where each $x \in \Omega_\delta$
defines for each instance $I$ a new `perturbed' instance $I_x = (a+x,b)$. 
We denote by $\mathcal{C}(\cdot)$ an arbitrary cost function and by $\mathcal{C}(I_x)$ 
the cost of instance $I_x$. 
Traditionally in \SmoothAna, this cost is the runtime required for an algorithm in order to compute its solution but in this paper we consider the \InputBitComp and the \BitComp as the cost functions. 
The expected cost of instance $I$ equals:
\[ \mathcal{C}_\delta(I, A) = \underset{x\in\Omega_\delta}{\E} \left[ \mathcal{C}(I_x) \right] = \int_{\Omega_\delta} \mathcal{C}(I_x)\mu_\delta(x) \; \mathrm{d}x. \]
We denote by $\Lambda_{n, m}$ the set of all instances in $[0, 1]^n \times \Z^m$. Henceforth, we implicitly assume that for all integer values $b_i$, $\BIT(b_i) \leq \log \; m$ and $m = O(n)$. We drop the $m$ from all future equations and consideration. 
The smoothed \InputBitComp equals:
\[ \mathcal{C}_{\textrm{smooth}}(\delta, n, A) =
 \max_{I\in\Lambda_{n,m}} \mathcal{C}_\delta(I, A)   
.\]

This definition formalizes the intuition mentioned before: 
not only do we require that the majority of instances behave nicely, 
but actually in every neighborhood 
(bounded by the maximal perturbation $\delta$) the majority of 
instances must behave nicely. Following~\cite{dadush2018friendly, spielman2004smoothed} 
we say an algorithm runs in polynomial cost in practice,
if the smoothed cost of the algorithm is polynomial in 
the input size $n$ and in $1 / \delta$. If the smoothed cost is small in 
terms of $1/\delta$ then we have 
a theoretical verification of the hypothesis 
that worst case examples are sparse.
We defined the \InputBitComp of an algorithm $A$ with input $I$ as the minimum word size required for the correct execution of the algorithm on a \wordRam. To model possible disparities between \realRam and \wordRam execution we define an operation called \emph{snapping} in the next paragraph.

\paragraph{Snapping.}
Our real-valued input can 
be represented as a higher-dimensional point $a \in [0,1]^n$. 
If we want to express $a$ with only $w$ bits, then  the corresponding integer-valued input $a'$ with 
limited precision is the closest point to $a$ in the scaled integer lattice 
$\Gamma_\width = \width \Z^d$ with $\width =  2^{-w}$. We call the transformation of $a$ into $a'$ \emph{snapping}. 
We give a lower bound on the scale factor $\width$ for which $(a,b) \cong_A (a',b)$. 
From this point on, whenever we refer to bounding  from above the required \InputBitComp, we refer to a lower bound on the scale factor $\width$ such that $(a,b) \cong_A (a',b)$.
Note that this lower bound, implies an bound  from above  on the \InputBitComp of $A$. That is, $\width$ implies a  bound  from above on the number of bits needed to simulate the execution of an algorithm $A$ with input $(a, b)$ on the \wordRam. Naturally, it could be that there is a better way to represent or round $(a, b)$ to a value $(a'', b)$ than our snapping procedure. However, we show that our upper bound is logarithmic in the size of the real input $n$. A natural lower bound for the \InputBitComp is $\Omega(\log n)$ and thus our future results can be considered tight.
The algorithms that we study under 
this snapping operation are algorithms relevant 
to the complexity class $\ER$, which we discuss next.

\paragraph{The Existential Theory of the Reals.}
It is often sometimes to describe a real-valued witness to an \NP-hard
problem, but the \InputBitComp required to verify the witness is unknown.
A concrete example is the recognition of segment
intersection graphs: given a graph, can we represent
it as the intersection graph of segments?
Matou{\v{s}}ek~\cite{matousek2014intersection} 
comments on this as follows:  
\begin{quote}
\hspace{-15pt}{\small\emph{{Serious people seriously
conjectured that the number of digits can be polynomially bounded—but it cannot.}} }
\end{quote}
Indeed, there are examples which require an exponential word size 
in any numerical representation.
% to be represented with 
% of bits. 
This \emph{exponential bit phenomenon} occurs not only for segment intersection
graphs, but also for many other natural algorithmic problems~\cite{abel, ARTETR, etrPacking, bienstock1991some, cardinal2017intersection, cardinal2017recognition, AreasKleist, NestedPolytopesER, erickson2019optimal, garg2015etr,  kang2011sphere, LindaPHD, AnnaPreparation, mcdiarmid2013integer,  mnev1988universality, richter1995realization,
Schaefer2010, schaefer2013realizability, Schaefer-ETR,   shitov2016universality, Shitov16a, shor1991stretchability, TrainNN, ContinuousCSP}.
It turns out that all of those algorithmic problems do not 
accidentally require exponential \InputBitComp, but are closely linked,
as they are all complete for a certain complexity class called \ER.
Thus either all of those problems belong to \NP, or none of them do.
Using our results on \SmoothAna, we show 
that for many \ER-hard problems
the exponential \InputBitComp phenomenon only occurs 
for near-degenerate input.

The complexity class \ER can be defined as the set of decision
problems that can be reduced in polynomial time on a \wordRam to deciding if
a formula of the 
\emph{Existential Theory of the Reals} (\ETR) is true or not.
An \ETR formula has the form: $\Psi = \exists x_1,\ldots,x_n \quad \Phi(x_1,\ldots,x_n),$ where $\Phi$ is a well-formed sentence over the alphabet $\Sigma =\{0,1,x_1,\ldots, +, \cdot, =, \leq , < , \land
, \lor, \lnot\}.$
More specifically, $\Phi$ is quantifier-free and 
$x_1,\ldots,x_n$ are all variables of $\Phi$.
We say the ETR formula $\Psi$ is true if there are
real numbers $x_1,\ldots,x_n\in \R$ such that
$\Phi(x_1,\ldots,x_n)$ is true.

\begin{figure}
    \centering
    \includegraphics{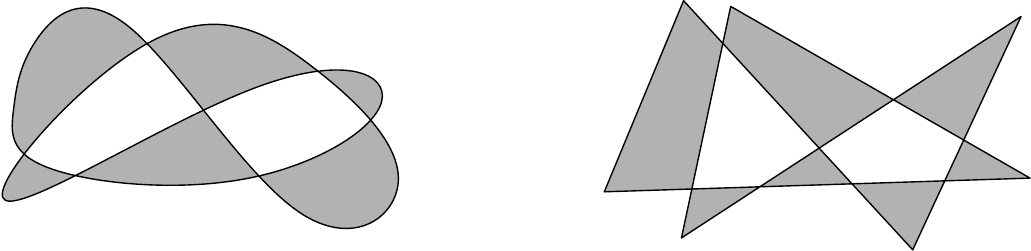}
    \caption{Given two simple closed curves in 
    the plane it is straightforward to design an algorithm 
    that checks if the two curves are equivalent. But it is 
    not straightforward to describe an \ETR formula for it. }
    \label{fig:CurveStraigthening}
\end{figure}

\paragraph{Paper structure.}
In \Cref{sec:membership}, we give a structural result that gives an alternative definition of \ER in terms of \realRam algorithms.
The corresponding proofs can be found in \Cref{sec:Cook-Levin}.
In \Cref{sec:Smoothed-Recognition}, we apply \SmoothAna, first to the \realRam and then to recognition problems.
The corresponding proofs can be found in \Cref{sec:HittingCubes} and \Cref{sec:Smoothing}.
In \Cref{sub:resourceAugmentation}, we apply \SmoothAna to resource augmentation problems.
The corresponding proofs can be found in \Cref{sec:ResourceAugmentation}.
In \Cref{sec:discussion} we discuss the results in a broader context. 
Specifically, our results justify the usage of the \realRam.

% 
%%%%%%%%%%%%%%%%%%%%%%%%%%%%%%%%%%%%%%%%%%%%%%%%%%%%%%%%%%%%%
% 
\section{Algorithmic membership in \texorpdfstring{\ER}{ER}}
\label{sec:membership}
%%%%%%%%%%%%%%%%%%%%%%%%%%%%%%%%%%%%%%%%%%%%%%%%%%%%%%%%%%%%%

    The complexity class \ER is often called a 
    ``real analogue’’ of \NP, because it deals with real-valued
    variables instead of Boolean variables.
    This is because the real-valued ETR problem plays a role which is analogous to SAT when it comes to hardness and membership. 
    However, the most common way to think about \NP-membership is in terms of 
    certificates and verification algorithms.

    The seminal theorem of Cook and Levin shows the equivalence 
    of the two perspectives on~\NP-membership, see~\cite{levin1973universal,
    cook1971complexity}. We show a similar equivalence between
    \ETR-formulas and \emph{\realVerify{}s}.
    Intuitively, a \realVerify is an
    algorithm
    that runs on the \realRam
    %---the standard model of computation    for geometric algorithms---
    that accepts as input both an \emph{integer}
    instance $I$ and a witness, and verifies that the witness describes a valid solution to the instance in polynomial-time.
    Note that as we define the \realRam as an extension of the \wordRam, it is also possible to guess and operate with integers.
    
    Formally we say a \emph{discrete decision problem} is any function $\QQ$ from arbitrary integer vectors to the booleans $\{\textsc{True},\allowbreak \textsc{False}\}$.
    %(or equivalently, any language over the alphabet $\set{0,1}$).  
    An integer vector $I$ is called a \emph{yes-instance} of $\QQ$ if $\QQ(\Inst) = \textsc{True}$ and a \emph{no-instance} of $\QQ$ if $\QQ(\Inst) = \textsc{False}$.  Let $\circ$ denote the concatenation operator.
    A \emph{real verification algorithm} for $\QQ$ is
    formally defined as
    a \realRam algorithm $A$ that satisfies the following conditions, for some constant $c\ge 1$: (1) $A$ halts after at most $N^c$ time steps given any input of total length~$N$ where each value uses word size $\lceil c\log_2 N \rceil$. (2)
For every yes-instance $I\in\Z^n$, there is a real vector $\Rcert$ and an integer vector $\Zcert$, each of length at most $n^c$, such that $A$ \textsf{accept}s input $(\Rcert, \Inst\circ \Zcert)$ and (3) for every no-instance $I$, for every real vector $x$ and every integer vector $\Zcert$, $A$ \textsf{reject}s input $(\Rcert, \Inst\circ \Zcert)$.
A \emph{certificate} (or \emph{witness}) for yes-instance $\Inst$ is any vector pair $(\Rcert, \Zcert)$ such that $A$ accepts $(\Rcert, \Inst\circ\Zcert)$. We show the following theorem:
    \begin{restatable}{theorem}{CookLevin}
    \label{thm:Cook-Levin}
        For any discrete decision problem \QQ, 
        there is a \realVerify for \QQ 
        $\Leftrightarrow$ 
        $\QQ\in\ER$.
    \end{restatable}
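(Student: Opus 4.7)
The plan is to prove both directions of this equivalence; each mimics one half of the classical Cook--Levin theorem, but translated to the \realRam.

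\textbf{Forward direction ($\QQ \in \ER \Rightarrow$ a \realVerify exists).} Suppose $\QQ \in \ER$; by definition, there is a polynomial-time \wordRam reduction mapping each instance $\Inst$ to an \ETR formula $\exists x_1, \ldots, x_n\colon \Phi(x_1, \ldots, x_n)$ that is true iff $\QQ(\Inst) = \textsc{True}$. I would build a \realVerify that, given $\Inst$, first runs this reduction (using only the \wordRam part of the \realRam), then reads a real witness $\Rcert \in \R^n$ and evaluates the quantifier-free formula $\Phi(\Rcert)$ using constant-time real arithmetic and comparisons. Since $\Phi$ has polynomial size and each atomic operation is a unit-time \realRam instruction, evaluation runs in polynomial time; correctness of the reduction gives the equivalence.

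\textbf{Backward direction (a \realVerify exists $\Rightarrow \QQ \in \ER$).} This is the substantive direction. Given a \realVerify $A$ with time bound $T = N^c$, I would produce in polynomial \wordRam time an \ETR formula $\Psi_\Inst$ that is satisfiable iff $A$ accepts $(\Rcert, \Inst \circ \Zcert)$ for some $(\Rcert, \Zcert)$. The strategy is a Cook--Levin style tableau: introduce existential variables encoding the relevant configuration of $A$ at each of the $T$ time steps (program counter, contents of all real and word registers actually touched), and add polynomial constraints enforcing (i) that the initial configuration encodes $\Inst \circ \Zcert$ together with the real witness $\Rcert$, (ii) that consecutive configurations are related by the instruction indicated by the current program counter, and (iii) that the final state executes \textsf{accept}. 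Because $A$ touches at most $T$ distinct registers across $T$ steps, only $O(T)$ registers per step need to be represented, keeping the variable count polynomial.

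The main obstacle will be encoding two subtle features of the \realRam faithfully in \ETR. First, word registers hold $w = O(\log N)$-bit integers that support bitwise operations with no direct \ETR counterpart; I plan to represent each word by its $w$ binary digits $b_i$, enforce each with the constraint $b_i(b_i-1)=0$, and rewrite every \wordRam instruction as a polynomial identity on those bits. Second, the \realRam supports indirect addressing (instructions like $R[W[i]] \gets R[W[j]] + R[W[k]]$), so the address being read or written is itself a variable. I would handle this with pairwise consistency clauses: for every ordered pair of time steps $t_1 < t_2$, add a constraint expressing ``if the address used at $t_2$ equals the address written at $t_1$ and no step strictly between them overwrote that address, then the value read at $t_2$ equals the value written at $t_1$.'' Each such clause is a polynomial condition, and there are $O(T^2)$ of them, still polynomial in $N$. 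Combining these transition and consistency constraints with the initial/accepting clauses yields the required \ETR formula, and a routine check shows that its syntactic construction can be carried out in polynomial time on a \wordRam, completing the reduction.
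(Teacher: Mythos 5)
Your forward direction matches the paper's: the paper constructs a \realVerify directly for ETR (a recursive-descent evaluator of the quantifier-free predicate on the guessed reals) and then notes that composition with the reduction covers every problem in $\ER$; your version just makes that composition explicit. The substance is the same.

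Your backward direction is also correct, but it diverges from the paper on the one genuinely subtle point: how to handle indirect memory access in $\ETR$. The paper exploits the transdichotomous assumption $w = O(\log N)$ to conclude that $2^w$ is polynomial in $N$, and so it simply introduces a register variable $\var{W(i,t)}$ and $\var{R(i,t)}$ for \emph{every} address $i \in \{0, \dots, 2^w - 1\}$ and every time step $t \le T$; indirect reads/writes are then encoded by brute-force disjunction over all $2^w$ candidate addresses. This gives a completely local transition relation (each $\texttt{Update}(t,\ell)$ only relates times $t-1$ and $t$), at the cost of having $\Theta(2^w T)$ register variables. You instead track only the $O(T)$ registers that the program actually touches, and replace the local frame axioms by $O(T^2)$ pairwise read/write consistency constraints across time steps (``if no intervening write to this address, the read at $t_2$ sees the write at $t_1$''). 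Both yield formulas of polynomial size; your variable count is smaller, but your transition relation is global, and you would have to verify carefully that each consistency clause — including the ``no intervening overwrite'' subformula, which is itself a conjunction over intermediate times of equalities on $w$-bit addresses — can be written as a polynomial-size $\ETR$ sentence over the bit-level word representation. Your route is closer to the classical RAM-to-circuit simulations and would in principle work even if $2^w$ were super-polynomial, whereas the paper's route leans on the transdichotomous bound to keep the argument entirely local and thereby avoids all cross-time bookkeeping. Either way the theorem follows; the paper's is arguably the simpler proof to make airtight, yours is the more portable one.
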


    Our proof 
    in \Cref{sec:Cook-Levin}
    follows classical simulation arguments reducing
    nondeterministic polynomial-time (integer) random access machines
    to polynomial-size circuits or Boolean formulas, either
    directly~\cite{r-rfcs-91} or via nondeterministic polynomial-time
    Turing machines \cite{cr-tbram-73,e-mms-90,c-spfrnc-87,cook1971complexity}.
    We wish to state that the analysis in \Cref{sec:Cook-Levin} is broad enough to support the square root operator in the \realRam.
    
    The complexity class
    \ER\ is known to be equivalent to the discrete portion of the
    Blum-Shub-Smale complexity class $NP_{\mathbb{R}}^0$---real sequences that
    can be accepted in polynomial-time by a non-deterministic BSS machine
    \emph{without constants}, and the equivalence of BSS machines
    without constants and ETR formulas is already well-known
    \cite{blum1989theory,bcss-crc-98}.
    However as we explained in the introduction, the BSS-machine does not directly support the \emph{integer}
    computations necessary for common standard programming paradigms such as
    indirect memory access and multidimensional arrays.  The \realRam model
    originally proposed by Shamos \cite{s-cg-79,ps-cgi-85} \emph{does} support
    indirect memory access through integer addresses; however, Shamos did not
    offer a precise definition of his model, and we are not aware of any
    published definition precise enough to support a simulation result like
    \Cref{thm:Cook-Levin}.  We rectify this gap with our precise definition of the \realRam in the previous section together with a table of operations presented in \Cref{sec:DefRealRam}. Our proposal generalizes
    both the \wordRam and BSS models, we believe it formalizes the
    intuitive model implicitly assumed by computational geometers.
    
    \Cref{thm:Cook-Levin} not only strengthens the intuitive analogy
    between~\NP and~\ER, but also enables much simpler proofs of \ER-membership
    in terms of standard geometric algorithms.
    Our motivation for developing \Cref{thm:Cook-Levin} was 
    Erickson’s \emph{optimal curve straightening} problem~\cite{erickson2019optimal}:
    Given a closed curve~$\gamma$ in the plane (Given as a plane $4$-regular graph) and an integer $k$, is any
    $k$-vertex polygon topologically equivalent to~$\gamma$?
    (See \Cref{fig:CurveStraigthening}.)  The \ER-hardness
    of this problem follows from an easy reduction from stretchability of pseudolines,
    but reducing it directly to ETR proved much more complex; in light of
    \Cref{thm:Cook-Levin}, membership in \ER\ follows almost immediately
    from the standard Bentley-Ottman sweep-line algorithm \cite{bo-arcgi-79}.
    The theorem also applies to geometric packing problems~\cite{etrPacking}, where the input is a set of geometric objects and a container and the output is a pairwise disjoint placement of the objects in the container. Its \realVerify is straightforward.
    To further illustrate the power of our technique, we also consider a new
    topological problem in \Cref{sec:Cook-Levin}, which we call
    \emph{optimal unknotted extension}: Given a simple polygonal \emph{path} $P$ in $\R^3$
    and an integer $k$, can we extend $P$ to an \emph{unknotted} closed polygon
    with at most $k$ additional vertices?  
    Note that the path~$P$ only becomes a closed curve after we added the additional vertices.
    In light of \Cref{thm:Cook-Levin},
    the proof that this problem is in \ER\ is straightforward: To verify a positive
    instance, guess the $k$ new vertices and verify that the resulting polygon is unknotted
    using existing NP algorithms~\cite{hlp-ccklp-99,l-pubrm-15}.
    
    \begin{figure}
    \centering
    \includegraphics{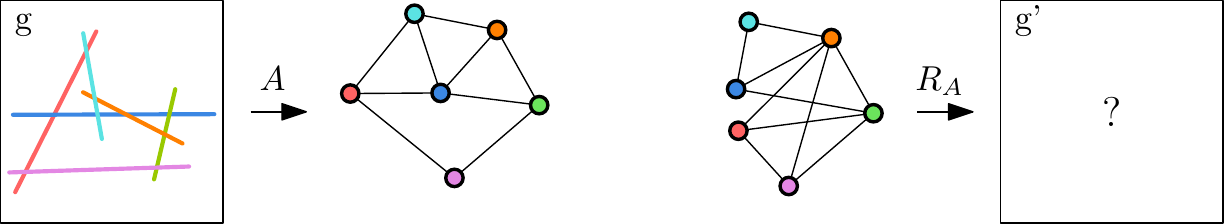}
    \caption{Left: given a set of segments $g$, algorithm $A$ constructs 
    segment intersection graph $c = A(g)$. \
    Right: given a graph $c'$, algorithm $R_A$ searches for a set of segments $g'$ such that $A(g') = c'$.}
    \label{fig:segments}
\end{figure}

    \begin{restatable}{corollary}{ApplicationLevin}
    \label{cor:ApplicationLevin}
        The following discrete decision problems are in \ER: The art gallery problem~\cite{ARTETR}, the optimal curve straightening problem~\cite{erickson2019optimal}, geometric packing, and the optimal unknotted extension problem.
    \end{restatable}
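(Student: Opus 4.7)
The plan is to apply \Cref{thm:Cook-Levin} directly: for each of the four problems we exhibit a \realVerify, i.e.\ a polynomial-time \realRam algorithm that, given the integer instance together with a short real/integer witness, decides whether the witness certifies a yes-answer. Since \Cref{thm:Cook-Levin} is an ``iff'', producing such a verifier is sufficient to place each problem in \ER. The common template is: guess the continuous geometric data (guard positions, vertex coordinates, placements, extension vertices) as the real part $\Rcert$ of the witness, guess any combinatorial data (visibility face structure, planar rotation system, collision pattern, unknotting certificate) as the integer part $\Zcert$, and then run a standard geometric/topological algorithm that only performs real arithmetic of polynomially bounded \AlgDeg.

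For the art gallery problem, $\Rcert$ encodes the coordinates of the $k$ guard positions inside the input polygon $P$; the verifier constructs the union of their visibility polygons using a standard visibility algorithm and checks that it equals $P$. For optimal curve straightening, $\Rcert$ encodes the $k$ vertex coordinates of the candidate polygon $\Pi$; the verifier runs the Bentley–Ottmann sweep on $\Pi$ to extract its $4$-regular planar intersection graph (with rotation system) and tests graph isomorphism against the input $\gamma$ -- Bentley–Ottmann uses only comparisons of rational functions of constant degree in the input, so it runs in polynomial time on the \realRam. For geometric packing, $\Rcert$ encodes the rigid motion (translation/rotation parameters) placing each of the input objects; the verifier tests containment in the container and pairwise disjointness of the transformed objects, both of which reduce to polynomially many polynomial inequalities.

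For optimal unknotted extension, $\Rcert$ encodes the coordinates of the $k$ new vertices that close the polygonal path~$P$ into a closed polygon $P'$ in $\R^3$; the verifier first checks (again by constant-degree predicates) that $P'$ is simple, and then uses $\Zcert$ to carry the NP-certificate of unknottedness guaranteed by Hass--Lagarias--Pippenger~\cite{hlp-ccklp-99} or Lackenby~\cite{l-pubrm-15}. Because \Cref{thm:Cook-Levin} allows the verifier to ``guess and operate with integers'' exactly as a nondeterministic word RAM would, we may invoke these NP algorithms for unknottedness directly as subroutines of the \realVerify.

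In each case the only nontrivial issue is confirming that the verifier is genuinely a \realRam algorithm under our formal definition: it must run in polynomial time and use only the allowed operations (arithmetic, comparisons, addressing, and for curve straightening we also need integer bookkeeping for the sweep-line data structures). That part is routine once the witness is chosen, so the main obstacle -- identifying a suitable witness whose correctness can be phrased as a polynomial-time geometric/topological test rather than as an explicit \ETR formula -- is handled uniformly by \Cref{thm:Cook-Levin}, which is precisely the advantage the theorem gives over ad hoc \ETR reductions. A complete statement of the four verifiers is deferred to \Cref{sec:Cook-Levin}.
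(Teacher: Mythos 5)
Your proposal is correct and takes essentially the same approach as the paper: for each of the four problems you exhibit a real verification algorithm built from a standard geometric or topological subroutine (visibility-polygon union, Bentley--Ottmann sweep plus plane-graph comparison, intersection/containment tests, and an NP unknottedness certificate from Hass--Lagarias--Pippenger or Lackenby), with the real witness carrying the continuous geometric data and the integer witness carrying any combinatorial certificate, and then invoke \Cref{thm:Cook-Levin}. The only small detail the paper makes explicit that you leave implicit is the bound $k = O(n)$ in each case (otherwise the instance is trivially a yes-instance), which is needed to guarantee the verifiers run in time polynomial in the instance length.
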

    
    Let us remark that researchers are sometimes using the sine or cosine functions as elementary operations of the \realRam. 
    So one may wonder whether, it is possible
    to extent the definition of the \realRam to include those functions as well. 
    We want to point out that the \realRam would then easily be powerful enough to decide undecidable problems, by Richardson's Theorem~\cite{richardson1969some}.
    Therefore, we refrain from using those functions for the \realRam.

%%%%%%%%%%%%%%%%%%%%%%%%%%%%%%%%%%%%%%%%%%%%%%%%%%%%%
\section{Smoothed Analysis of Recognition Problems}
\label{sec:Smoothed-Recognition}
%%%%%%%%%%%%%%%%%%%%%%%%%%%%%%%%%%%%%%%%%%%%%%%%%%%%%

    In computational geometry, we study many different
    geometric objects like point sets, polytopes,
    disks, balls, line-ar\-rangements, segments, and 
    rays. Many algorithms \emph{only} use combinatorial
    properties of the underlying geometry. 
    A recent impressive example is the EPTAS for 
    the clique problem on
    disk intersection graphs by Bonamy \etal~\cite{EptasClique}. In the paper they 
    first derive a set of properties for disk intersection
    graphs and then they use \emph{only} 
    those properties to find a new EPTAS.
    More classical results include methods for robust computations (e.g. computing the determinant for an orientation test, comparing a quadtratic polynomial for an in-circle test). Refer to a recent set of examples by Kettner~\etal~\cite{kettner2008classroom} or results on Exact Geometric Computation (EGC) such as the results by Yap~\cite{yap1997towards}.
    
    The above motivates that sometimes it is nice to reason about geometric problems based on only the combinatorial properties of the geometric configuration. 
    A related question, is whether an arbitrary combinatorial set, has an accompanying geometrical configuration. This question is the crux of a geometric recognition problem.

     Formally (\Cref{fig:segments}), we say  $A$ is a \emph{polynomial-time} \emph{\gvAlgo}
    if it takes some real-valued geometric input $g \in [0,1]^n$ and outputs a combinatorial object $A(g) \in \Z^m$ in time polynomial in $n$ (note that this implies that $m$ is polynomial in $n$). 
 
    Note that there is no need to give a witness for a negative answer. Although this maybe at first confusing, recall that \NP algorithms are only required to give a witness for positive answers.
    We define a \emph{recognition problem} as a discrete decision problem~$R_A$ that
    takes a combinatorial object~$c$ as input,
    and returns $\textsc{True}$ if there exists a vector $g \in [0,1]^n$ (which we shall call a \emph{witness}) for which $A(g) = c$.  For notational convenience, we denote for a given $c$ by $R_A(c)$ \emph{an arbitrary} witness of $c$ (as in the \SmoothAna that is to come, we consider the worst choice over all witnesses for $c$).

    \paragraph{\TotalPoly.}
    When we consider our applications of \SmoothAna to recognition problems, we can observe that regardless of the input or output, the total number of polynomials that may ever be evaluated by an algorithm that constructs the output is polynomially bounded in the number of (real) input variables. 
    To illustrate this, consider order type realizability. Given a set of $n$ points,
    an algorithm may need to check the order type of some triple of points. This may be done through verifying the sign of the determinant associated with that triple of points. 
    Since there are at most $O(n^3)$ different triples of points, there are at most $O(n^3)$ different polynomials that may be evaluated by an algorithm that identifies the order type of a point set. 
    A similar example exists for computing the intersection graph for objects in the plane. 
    For instance for unit disk intersection graphs, where it is sufficient to check for intersections between any pair of disks.
    Since there are at most $O(n^2)$ such pairs, any algorithm that identifies the disk intersection graph can evaluate only polynomials from a set of at most $O(n^2)$ polynomials. 
    For a given algorithm where the input consists of $n$ reals, we denote by $C(n)$ the total number of possible polynomials that may be checked by the algorithm.
    We say that an algorithm has a \emph{\fewPolynomials} if the \totalPoly $C(n) \leq n^{O(1)}$ (note that if some algorithm has running time $T(n)$ then $C(n) \leq 2^{T(n)}$.
    This upper bound comes from the fact that the algorithm branches at most $T(n)$ times. )

    In a previous version \cite{FOCSRobustComputation}
    of this article, we erroneously claimed that $C(n)$ is bounded from above  by~$T(n)$.
    To see an illustrative counter-example, consider the \textrm{even-interval} problem.
    In the \textrm{even-interval} problem we are given a real number $x \in [0,1]$ 
    and $k\in \N$ given in unary.
    We ask if $x$ is contained in one of the intervals 
    $I_i = [\frac{2i}{k}, \frac{2i+1}{k}]$, for some~$i$.
    For this problem it makes sense to measure the length of the input as
    the number of bits of~$k$.
    This problem can easily be solved with binary search in $O(\log k)$ steps where at every step we check one polynomial that verifies if $x$ lies in some interval $I_i$. 
    Yet, depending on the value of $x$, the set of intervals $I_i$ that we need to check wildly varies and is roughly $O(k)$. Hence $C(k) = O(k)$ and thus not bounded from above by the running time $O(\log k)$.
    Note that all of those polynomials have \AlgDeg and \AlgDim equal to one.
    Specifically, the example shows that \Cref{thm:Smooth-Real-RAM} is not true, without the assumption that there are \fewPolynomials.
    
    \paragraph{Defining \SmoothAna on recognition problems.}
\begin{figure}
    \centering
    \includegraphics{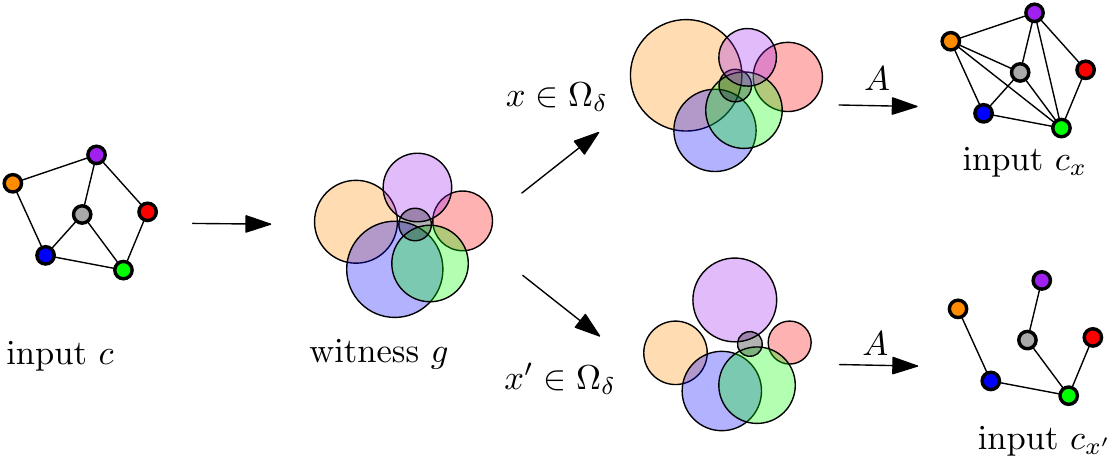}
     \caption{We define a perturbation on a combinatorial structure $c$ through a witness $g$.  A distribution of witnesses defines a distribution of discrete structures. }
    \label{fig:recognitionperturbation}
\end{figure}
    Traditionally in \SmoothAna one perturbs the input of an algorithm and measures the expected cost of executing the algorithm with the new input.  The real-valued geometric input $g$ offers a straightforward way to perturb it by adding to each $g_i \in [0,1]^n$ a random offset $x_i \in [\frac{-\delta}{2}, \frac{\delta}{2}]$ chosen uniformly at random. 
   It is not as easy to define a perturbation on the combinatorial (discrete) input that a recognition problem requires. 
    This is why we define the perturbation in terms of the geometric witness:
    given any input and witness $(c,g)$, we slightly perturb the witness $g$ to a new geometric object $g_x = g + x$ and
    reconstruct the corresponding combinatorial input $c_x$ (\Cref{fig:recognitionperturbation}). 
    A probability distribution over possible output $g_x$, combined with a recognition algorithm $A$, implies a probability distribution over all input $c_x$.

    We want for an instance $c$ of $R_A$, to find a witness $g$ of bounded \BitLength.  Note that the witness $g \in [0,1]^n \times \Z^0$. The witness $g$ is thus a valid input for a \realRam program with input size $n$ and $m= 0$. Our definition is formulated in terms of the \InputBitComp of $A$:
\[ 
\BIT_{\delta}(g, A) := \underset{x \in \Omega_\delta}{\E} 
\left[ \BIT_{IN}(g + x, A) \right], 
\]

\[
\BIT_{\delta}(c, R_A) := \underset{x\in\Omega_\delta}{\E} 
\left[ \BIT_{IN}(g_x = R_A(c) + x, A) \right]. 
% \ivor{what if the verification is false?}
\]
Note that the bit-complexity of $c$ is zero in case that 
there is no witness. Similarly to an \NP-algorithm running in constant time for a no-instance.

Next, we define the smoothed precision. We denote by $\Lambda_m$ the set of all combinatorial instances $c$ of size $m$ (that is, all combinatorial instances $c$ that can be described by a vector $b \in \Z^m$):
\[ \hspace{-1.8cm} \BIT_{\textrm{smooth}}(\delta, n, A) = \max_{g \in [0,1]^n} \; \BIT_{\delta}(g, A), 
\quad
\BIT_{\textrm{smooth}}(\delta, m, R_A) =  \max_{c \in \Lambda_m} \ \BIT_{\delta}(c, R_A).
\]
Now, we are ready to state our main theorem about smoothed analysis of recognition verification algorithms, whose proof can be found in \Cref{sec:Smoothing}. 

\begin{restatable}{theorem}{RealSmooth}
    \label{thm:Smooth-Real-RAM}
    {Let $A$ be
    a polynomial-time \gvAlgo with \AlgDeg $\Delta$, \AlgDim $d$ and few polynomials.}
    Under perturbations of $g \in [0,1]^n$ by $x \in \left[\frac{\delta}{2},\frac{\delta}{2} \right]^n$ chosen uniformly at random, the algorithm $A$ has a smooth
    \InputBitComp 
    of at most:
    \[ \BIT_{\textrm{smooth}}(\delta, n, A) = O\left( d \log  \frac{d \Delta n}{\delta}\right). \]
\end{restatable}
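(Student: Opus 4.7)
The plan is to bound the expected \InputBitComp by controlling how coarsely we can snap the perturbed input $g+x$ to an integer lattice $\Gamma_\omega = \omega \Z^n$ while preserving every comparison made during the execution of $A$. Since snapping to width $\omega = 2^{-t}$ yields an input that is $\cong_A$-equivalent to $g+x$ whenever the grid cell containing $g+x$ is not crossed by the zero set of any polynomial evaluated by $A$, and since such a snapped input has \InputBitComp $O(t)$ by construction, the task reduces to estimating the expectation of $\log(1/\omega^\ast)$, where $\omega^\ast$ is the largest snapping width that preserves $\cong_A$-equivalence.

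First I would invoke the \emph{few polynomials} hypothesis. Every comparison performed by $A$ is a sign evaluation of a polynomial drawn from a fixed set $\mathcal{P}$ with $|\mathcal{P}| \leq C(n) \leq n^{O(1)}$. Equivalence under snapping fails only if, for some $p \in \mathcal{P}$, $\sign(p(g+x)) \neq \sign(p(\mathrm{snap}(g+x)))$, which in turn requires the grid cell of $\Gamma_\omega$ containing $g+x$ to be crossed by the hypersurface $\{p = 0\}$. A union bound over $\mathcal{P}$ reduces the task to bounding, for a single polynomial $p$, the probability that $g+x$ lands in such a ``bad'' cell.

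Second, the core geometric step is the \emph{hitting cubes} lemma of \Cref{sec:HittingCubes}: for any polynomial $p$ of \AlgDeg $\Delta$ and \AlgDim $d$, this probability (under the uniform perturbation $x \in [-\delta/2,\delta/2]^n$) is bounded by an expression polynomial in $d$, $\Delta$, and the ratio $\omega/\delta$. The underlying ingredients are classical: the restriction of $p$ to any axis-parallel line has at most $\Delta$ real zeros, so the number of grid cells of $\Gamma_\omega$ crossed by $\{p = 0\}$ inside a box of side $\delta$ can be controlled by a direction-by-direction counting argument. Because $p$ depends on only $d$ of the $n$ coordinates, perturbations of the remaining $n-d$ coordinates are irrelevant and can be projected out, keeping the probability bound independent of $n$.

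Combining the union bound with the hitting-cubes estimate yields an upper bound on $\Pr[\omega^\ast < \omega]$ that decays polynomially in $\omega$, with a prefactor that is a polynomial in $C(n)$, $d$, $\Delta$, and $1/\delta$. Using $\E[\log(1/\omega^\ast)] = \int_0^\infty \Pr[\log(1/\omega^\ast) > t]\, dt$ and integrating the resulting geometric tail gives the claimed bound $O(d \log(d \Delta n / \delta))$, where the factor of $d$ out front reflects that snapping acts on each of $d$ relevant coordinates simultaneously, so the argument of the logarithm enters to the $d$-th power. The main obstacle is the hitting-cubes lemma: obtaining a probability bound that is independent of the ambient dimension $n$ requires carefully factoring out the irrelevant coordinates while handling the product measure on the full box $[-\delta/2,\delta/2]^n$. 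A secondary subtlety is that the polynomials actually evaluated by $A$ depend on $x$; the few-polynomials hypothesis resolves this by bounding the universe of candidate polynomials independently of the realization of $x$, so that the union bound remains valid.
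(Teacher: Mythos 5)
Your proposal follows the paper's proof essentially step for step: snap the perturbed input to a lattice $\Gamma_\omega$, invoke the hitting-cubes bound together with the few-polynomials hypothesis to union-bound the probability that any sign evaluation flips under snapping, and sum/integrate the resulting geometric tail to bound the expectation. One small caveat on your aside sketching the hitting-cubes lemma: a direction-by-direction count of crossings with axis-parallel hyperplanes is not by itself sufficient, because $\{p=0\}$ can have bounded connected components lying entirely inside a single cell and meeting no facet; the paper handles these by invoking the Oleinik--Petrovski/Thom/Milnor bound on the number of connected components of a real variety. (Also, registers in this model hold quotients $p/q$ of polynomials, so the single-polynomial argument is applied to numerator and denominator separately and the two probabilities are union-bounded.)
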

%
% Recall that by~\cref{thm:Cook-Levin}, a discrete algorithmic problem lies in \ER if and only if there exists a polynomial-time verification algorithm.
% Hence by~\cref{thm:Cook-Levin} this result is widely applicable to resource augmentation problems in \ER.
% 
% Furthermore, w
We stated our results only in terms of expected value, when in fact our proof also readily implies the same statement with high probability. 
We do this in order to follow the tradition in \SmoothAna that focuses more on expectations.
Note that statements \textit{with high probability} and \textit{in expectation} can be linked with Chebyshev's~inequality. 
Statements \textit{with high probability} do usually imply statements about \textit{expectations}, 
by some standard arguments.
 
The core idea of the proof of \Cref{thm:Smooth-Real-RAM} (presented at the end of this section and illustrated by \Cref{fig:Snapping}) is to consider the \gvAlgo $A$ with perturbed 
input $g_x = g + x$, where $g$ is an arbitrary value in $[0,1]^n$ and $x$ is a small perturbation  chosen uniformly at random in $\left[\frac{-\delta}{2}, \frac{\delta}{2}\right]^n$.
We model the perturbed input $g_x$ 
as a high-dimensional point which we snap to a fine grid to 
obtain $g'$ (input which can be described using bounded precision). 
We then show that for any algorithm $A$ that meets our prerequisites, 
 $\BIT_{IN}(g_x, A)$ is low with high probability. 
For the snapping we consider a sufficiently small $\width$ and we snap the point $g_x$ to a point in $\width \Z^n$. The \VoronoiDiagram of the points in $\width \Z^n$ forms a fine grid in $[0,1]^n$. 
As we explained in the introduction, the content of a \realRam register for a specific comparison instruction is per assumption the quotient of two
polynomials whose variables depend on the input. The core 
argument is that if the point $g_x$ lies in a \voronoicell of a point with limited word size 
which does not intersect the variety of either of the two polynomials, 
then the comparison instruction will be computed correctly. 
We bound  from above  the proportion of
\voronoicell{}s that are intersected by the variety of 
a polynomial in \Cref{thm:HittingCubes} in \Cref{sec:HittingCubes}.  

The algebraic proof of \Cref{thm:HittingCubes} has been placed in \Cref{sec:HittingCubes} to not distract from the main story line. 
% 
% Since $A$ is a polynomial-time algorithm, it performs at most a  polynomial number of comparison instructions. 
By our assumption $A$ has \fewPolynomials.
The union bound over the \totalPoly bounds  from above  the chance that the execution of $A$ differs for the input $g'$ and $g_x$. 
The union bound will show that with high probability, for any perturbed input $g_x$, $\BIT_{IN}(g_x, A)$ is low. 
The complete proof of \Cref{thm:Smooth-Real-RAM} can be found in \Cref{sec:Smoothing}.
\Cref{thm:Smooth-Real-RAM} implies a result of \SmoothAna of recognition problems:
    
    \begin{restatable}{theorem}{SmoothedRecognition}
    \label{thm:Smooth-Recognition}
        Let $R_A$ be a recognition problem with \gvAlgo $A$. 
        If $A$ is a polynomial-time algorithm with constant \AlgDeg, \AlgDim and \fewPolynomials then 
        \[\BIT_{\textrm{smooth}}(\delta, m, R_A) = O(\log(m/\delta)) .\]
    \end{restatable}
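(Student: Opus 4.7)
The plan is to derive \Cref{thm:Smooth-Recognition} as a direct corollary of \Cref{thm:Smooth-Real-RAM}. The central observation is that $\BIT_{\textrm{smooth}}(\delta, m, R_A)$ is defined as the worst case, over combinatorial instances $c \in \Lambda_m$, of the expected \InputBitComp of the perturbed witness $R_A(c) + x$, where by the stated convention $R_A(c)$ itself ranges over the worst witness for $c$. The quantity being maximised is therefore of exactly the same form as $\BIT_{\delta}(g, A)$ appearing in the definition of $\BIT_{\textrm{smooth}}(\delta, n, A)$, except restricted to those $g$ that happen to be a witness of some $c \in \Lambda_m$. Consequently, for every such $c$,
\[
\BIT_{\delta}(c, R_A) \;\le\; \max_{g \in [0,1]^n} \BIT_{\delta}(g, A) \;=\; \BIT_{\textrm{smooth}}(\delta, n, A),
\]
where $n$ is the length of the witness.

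Next, I would invoke \Cref{thm:Smooth-Real-RAM} on the \gvAlgo $A$. Under the hypotheses that $A$ is polynomial time, has constant \AlgDeg~$\Delta$, constant \AlgDim~$d$, and \fewPolynomials, the theorem yields
\[
\BIT_{\textrm{smooth}}(\delta, n, A) \;=\; O\!\left( d \log \tfrac{d \Delta n}{\delta} \right) \;=\; O\!\left(\log(n/\delta)\right),
\]
since $d$ and $\Delta$ are now absorbed into the constant.

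The last step is to translate the witness length $n$ into the instance size $m$. Since $A$ runs in time polynomial in $n$ and produces an output of size $m$, necessarily $m \le n^{O(1)}$. Conversely, we need the standing assumption that the witness is of size polynomial in $m$, i.e.\ $n \le m^{O(1)}$; this is the natural setting for recognition problems (order type realisability, segment intersection recognition, the fixed-dimension Steinitz problem, etc.), and if it were violated the theorem would be vacuous because one could not hope to encode a superpolynomial-length witness in polylogarithmic bit complexity per coordinate. Granted $n = \mathrm{poly}(m)$ we have $\log n = O(\log m)$, and taking the maximum over $c \in \Lambda_m$ in the inequality above gives $\BIT_{\textrm{smooth}}(\delta, m, R_A) = O(\log(m/\delta))$, as claimed.

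The main obstacle is really only the bookkeeping between the witness size $n$ and the instance size $m$; all the substantive algebraic and probabilistic content sits inside \Cref{thm:Smooth-Real-RAM}. Once the polynomial relation $n = \mathrm{poly}(m)$ is acknowledged as part of what it means for $R_A$ to be a well-posed recognition problem, the corollary follows by a one-line substitution.
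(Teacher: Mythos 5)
Your proposal is correct and takes essentially the same approach as the paper: bound $\BIT_{\textrm{smooth}}(\delta, m, R_A)$ by $\BIT_{\textrm{smooth}}(\delta, n, A)$, invoke \Cref{thm:Smooth-Real-RAM}, and translate $n$ to $m$ via the polynomial relationship between witness and instance size. You are also right that the direction $n \le m^{O(1)}$ is not literally entailed by the stated definition of a \gvAlgo (which only guarantees $m = n^{O(1)}$); the paper's proof silently asserts $n = \Theta(m^{\mathit{const}})$, so your explicit acknowledgement of this as a standing assumption is, if anything, slightly more careful than the paper's treatment.
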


 \begin{proof}
        This can be shown simply by 
        using the definition and the result of \Cref{thm:Smooth-Real-RAM}.
     \begin{align*}
       \hspace{-2.5cm} \BIT_{\textrm{smooth}}(\delta, m, R_A) &= \hspace{0.25cm} \max_{c \in\Lambda_m}\; \hspace{0.25cm}
         \underset{x\in\Omega_\delta}{\E} 
\left[\BIT_{IN}(g_x, A) \mid g_x = R_A(c) + x \right] \\
         &= \max_{g : A(g) \in \Lambda_m}\;
         \underset{x\in\Omega_\delta}{\E}
\left[\BIT_{IN}(g_x, A) \mid g_x = g + x \right] 
\end{align*}
    By definition of a \gvAlgo, if $c = A(g)$ has size $m$ then $g$ has size $n = \Theta(m^{\emph{const}})$
    for some
    fixed constant $\emph{const} > 0$. 
    Thus \Cref{thm:Smooth-Real-RAM} implies:

    \begin{align*}
            \BIT_{\textrm{smooth}}(\delta, m, R_A) &\leq  \max_{g \in [0,1]^{n}}\; \BIT_\delta(g, A) \\
            &= \BIT_{\textrm{smooth}}(\delta, n , A)\\
            &=  O(\log ({n} / {\delta})) = O(\log ({m^{const}} / {\delta})) = O(\log({m} / {\delta})). \; \quad
            % \qedhere
    \end{align*}
    \end{proof}
    
    \begin{restatable}{corollary}{CorRecognitionProblems}
    \label{cor:ImportantRecognitionProblems}
        The following recognition problems under uniform perturbations of the witness of magnitude $\delta$ have a smoothed \BitComp of $O(\log n/\delta)$: recognition of realizable order types~\cite{van2019smoothed}, disk intersection graphs, segment intersection graphs, ray intersection graphs and the Steinitz Problem in fixed dimension.
    \end{restatable}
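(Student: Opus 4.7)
The plan is to apply \Cref{thm:Smooth-Recognition} to each of the listed problems individually. That theorem reduces the task to exhibiting, for each problem, a polynomial-time \gvAlgo $A$ with constant \AlgDeg, constant \AlgDim, and \fewPolynomials; once these three properties are in place, the bound $\BIT_{\textrm{smooth}}(\delta,m,R_A)=O(\log(m/\delta))$ from \Cref{thm:Smooth-Recognition} transfers to $O(\log(n/\delta))$ since in every case the combinatorial size $m$ is polynomial in the number $n$ of real witness coordinates.

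I would then work through the problems one by one. For order-type realizability, the canonical algorithm takes $n$ points in $[0,1]^2$ and outputs the sign of the $3\times 3$ orientation determinant for every triple; each such determinant is a polynomial of degree $2$ in the $6$ coordinates of the triple, and there are $\binom{n}{3}=O(n^3)$ of them, so $\Delta,d\in O(1)$ and $C(n)=O(n^3)$. For disk intersection graphs the pairwise test $\lVert c_i-c_j\rVert^2 \le (r_i+r_j)^2$ is a single polynomial of degree $2$ in at most $6$ variables, with $O(n^2)$ polynomials overall. Segment and ray intersection graphs each reduce to a constant number of orientation tests per pair of objects, so again $\Delta,d\in O(1)$ and $C(n)=O(n^2)$. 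For the Steinitz problem in a fixed dimension $d$, the witness is a vertex assignment for the proposed face lattice, and a verifier confirms realization by checking, for each combinatorial facet, that the points on the facet lie on a common supporting hyperplane and that every remaining vertex lies strictly on the correct side; each such side-test is a $(d+1)\times(d+1)$ determinant of degree $d+1$ depending on at most $(d+1)d$ coordinates, which is constant for fixed $d$, and the total number of such determinants is polynomial in the combinatorial size of the input lattice.

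The only genuinely delicate point I anticipate is that \Cref{thm:Smooth-Recognition} speaks about a fixed algorithm $A$ rather than about the recognition problem in the abstract, so I should be explicit that the verifiers proposed above are the ones that the theorem is applied to, and that their $\Delta$, $d$, and $C(n)$ are bounded as stated. Once this bookkeeping is done, substituting $\Delta=O(1)$, $d=O(1)$ and $C(n)=n^{O(1)}$ into \Cref{thm:Smooth-Real-RAM} gives $\BIT_{\textrm{smooth}}(\delta,n,A)=O(\log(n/\delta))$, and \Cref{thm:Smooth-Recognition} then yields the claimed bound $O(\log(n/\delta))$ on $\BIT_{\textrm{smooth}}(\delta,m,R_A)$ for each of the five problems; since the bit-complexity bound is logarithmic, the polynomial gap between $m$ and $n$ is absorbed into the constant in front of the logarithm.
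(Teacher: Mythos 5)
Your proposal is correct and matches the paper's (implicit) approach: the paper gives no explicit proof of this corollary, but the verifiers you describe — orientation determinants for order types (illustrated in the paper itself with exactly the $O(n^3)$ count), pairwise distance/orientation tests for disk, segment, and ray intersection graphs, and facet-hyperplane determinants for Steinitz in fixed dimension — are precisely the routine checks the authors intend. One small point worth making explicit if you write this up: the corollary asserts a bound on smoothed \BitComp, while \Cref{thm:Smooth-Recognition} bounds smoothed \InputBitComp; the bridge is \Cref{lem:intermediate}, which needs constant max coefficient in addition to constant $\Delta$ and $d$, but this holds for all the determinant- and distance-based polynomials involved, so the conclusion goes through.
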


        \begin{figure}[htbp]
    \centering
    \includegraphics{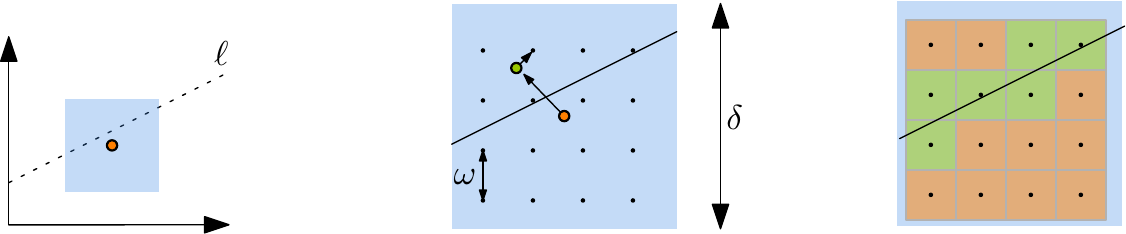}
    \caption{Given $g = (g_1,g_2) \in [0,1]^2$, we want to decide if the 
    point $g$ (in orange)
    lies above or below the line $\ell$ ($y = x/2 + 1$).
    If $g$ 
    is perturbed slightly to a point $g_x$, 
    low precision is usually sufficient. }
    \label{fig:Snapping}
    \vspace{-0.3cm}
\end{figure}

\paragraph{Limitations.}
    We want to point out that we cannot handle all recognition problems.
    First, not all recognition verification algorithms meet the
    conditions of \Cref{thm:Smooth-Real-RAM}.
    For example, consider the case that the problem deals
    with unbounded dimension. We still get some bounds on the \InputBitComp but they may be less desirable.
    A concrete example is the recognition of ball 
    intersection graphs,
    where the dimension of the ball is 
    part of the input. 
    Second, perturbing a witness may not
    be sensible. It does not mean that our theorems do not 
    apply in a mathematical sense, but rather that 
    in reality the result may be less desirable. For example, 
    this limitation applies to problems that rely on 
    degeneracies in one way or another.
    A concrete example is the point visibility 
    graph recognition problem. 
    Given a set of points $P$, we define a graph 
    by making two points  $p,q\in P$ adjacent if 
    line segment $pq$ contains no other point of $P$.
    This in turn defines a recognition problem where 
    we are given a visibility graph $G$ and we look for 
    a point set $P_G$ that realizes this visibility graph.
    If we perturb the real-valued witness $P_G$ then 
    with probability~$1$ there are no three collinear points. 
    Thus, the point visibility graph will always be  a clique.

\paragraph{Preliminaries for proving \Cref{thm:Smooth-Real-RAM}.}
Per definition, the \wordRam has a word size $\wordsize$ which allows us to express $2^\wordsize = \frac{1}{\omega}$ different values for each coordinate. 
We consider the \gvAlgo  $A$ with real-valued input $g \in [0,1]^n$. 
Since $A$ runs in polynomial-time it
can make at most a polynomial number of comparison operations (\Cref{tab:instructions}). 
At every such binary decision
the algorithm looks at a real- or integer-value register and verifies if the value at the 
register is $0$ or strictly greater than $0$.
For every real-valued register, per 
assumption the value at that register is the quotient of 
two $d$-variate polynomials $p_i$ and $q_i$ with maximum
degree $\Delta$ whose variables depend only on the values in the input register.
Let $z$ be the $d$-dimensional vector of input variables in $g$ that $p_i$ and $q_i$ depend on.

The evaluation of $p_i(z) / q_i(z)$ depends on the evaluation of the 
polynomials $p_i(z)$ and $q_i(z)$.
During \SmoothAna we perturb our input $g$ into new input $g_x = (g+x)$ with $x$ a value chosen uniformly at random in $[\frac{-\delta}{2},\frac{\delta}{2}]^m$.
Thereafter, we snap $g_x$ to $g'$
and we are interested in the 
probability that the execution of $A$ under both inputs ($g_x$ and $g'$ ) is the 
same, ergo the chance that for all comparison operations, 
the algorithms give the same answer.

    Fix a vector $z= (z_1,\ldots,z_d) \in \Z^d$ with integer coordinates. 
We denote by $C_{z}$ the unit
(hyper)cube, which has $z$ as its minimal corner: 
$C_z := [0,1]^d + z$. 
We denote by
$C(d, k) := \{ C_{z} \mid z \in \Z^d \cap [0, k)^d \}$ a $(k \times k \times \ldots \times k)$-grid of unit cubes that cover $[0, k]^d$.
Let $C = [0,k]^d$ be a cube partitioned by unit cubes $C(d,k)$. Every facet of $C$ intuitively is a grid of $(d-1)$-dimensional cubes $C(d-1, k)$. We argue about varieties that intersect cubes in $C(d,k)$ by induction on the dimension.

To that end, we define an equivalence relation on these sets of cubes. 
Let $C$ be a $d$-dimensional cube, partitioned 
by $d$-dimensional cubes of equal width. 
We say $C$ is \emph{equivalent} to $C(d,k)$, 
denoted by $C \cong C(d,k)$, if there exists 
an affine transformation $\tau$ of $C$ such that there 
is a one-to-one correspondence between cubes 
in $\tau(C)$ and $C(d,k)$ where corresponding cubes coincide. 
We give two examples of this equivalence relation that 
is often used in the remainder of the paper: 
(1) Consider any $d, k$ and any orthogonal $(d-1)$-dimensional, 
 hyperplane $H$ that intersects a $d$-dimensional 
cube $C(d,k)$ we have that $C(d, k) \cap H \cong C(d-1, k)$. 
(2) Consider the $d$-dimensional
grid $\Gamma_\width = \{C(y) : y \in \width\Z^d \cap [0,1]^d \} \cong  C(d,  1 /  \width)$, and a 
cube $C$ which has one corner on the origin and width 
$k \width$. The intersection $C \cap \Gamma_\omega$ 
is equivalent to $C(d, k)$.
Using these definitions and a well-known theorem by Milnor~\cite{basu2006algorithms} we obtain in \Cref{sec:HittingCubes} \Cref{thm:HittingCubes}.
See Theorem 7.23 in the printed second edition and Theorem 7.25 in the online version~\cite{basu2006algorithms}.

    \begin{restatable}{theorem}{HittingCubes}[Hitting Cubes, \Cref{sec:HittingCubes}]
\label{thm:HittingCubes}
    Let $p\neq 0$ be a $d$-variate polynomial 
    with maximum degree $\Delta$ and let $k \ge 2\Delta + 2$. 
    Then the variety $V(p)$ of $p$ intersects at most 
    $k^{d-1}\Delta 3^{d}(d+1)!$ unit cubes in $C(d, k)$.
    \end{restatable}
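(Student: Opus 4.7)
My plan is to prove the statement by induction on the dimension $d$. In the base case $d=1$, the univariate polynomial $p$ has at most $\Delta$ real roots, each lying in at most one unit interval, so $V(p)$ meets at most $\Delta \leq 6\Delta = k^{0}\,\Delta\cdot 3^{1}\cdot 2!$ unit cells of $C(1,k)$.

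For the inductive step, I would partition the hit cubes in $C(d,k)$ into two disjoint types: type (A), where $V(p)$ meets at least one $(d-1)$-dimensional facet of the cube, and type (B), where $V(p)\cap C_z$ lies entirely in the open interior of $C_z$. To bound type (A), I would slice $[0,k]^d$ by each of the $d(k+1)$ axis-aligned hyperplanes $H_{i,j}=\{x_i=j\}$. On each such hyperplane the restriction of $p$ is a $(d-1)$-variate polynomial of degree at most $\Delta$, whose variety hits at most $k^{d-2}\,\Delta\, 3^{d-1} d!$ unit $(d-1)$-cubes by the inductive hypothesis. Since every type-(A) cube has at least one hit facet and each facet is shared by at most two $d$-cubes, summing over all hyperplanes gives a type-(A) bound of $2d(k+1)\, k^{d-2}\,\Delta\, 3^{d-1} d!$; invoking $k\ge 2\Delta+2\ge 4$ to absorb $k+1$ into a small multiple of $k$, this fits inside $k^{d-1}\Delta\, 3^d(d+1)!$ with room to spare.

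For type (B), the key tool is Milnor. A type-(B) cube contains a compact nonempty subset of $V(p)$ disjoint from $\partial C_z$; by connectedness, a connected component of $V(p)\cap[0,k]^d$ that touches $C_z$ cannot escape $C_z$, so distinct type-(B) cubes host disjoint connected components. Hence the number of type-(B) cubes is at most the number of connected components of $V(p)\cap[0,k]^d$. The Oleinik-Petrovsky-Thom-Milnor bound (Theorem~7.23 of~\cite{basu2006algorithms}) controls this count by $\Delta(2\Delta-1)^{d-1}\le 2^{d-1}\Delta^d$. Using the hypothesis $k\ge 2\Delta+2$ to rewrite $2^{d-1}\Delta^{d-1}\le k^{d-1}$, the type-(B) contribution is at most $k^{d-1}\Delta$, again comfortably inside the target bound.

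I expect two places to require technical care. First, the inductive hypothesis cannot be applied when $p$ restricts to the zero polynomial on some $H_{i,j}$; such degenerate slices correspond to axis-aligned linear factors of $p$, of which there are at most $\Delta$ in total across all directions. Each degenerate slice can make both adjacent cube layers entirely hit, adding at most $2k^{d-1}$ cubes per slice and $2\Delta\, k^{d-1}$ in total, which is easily absorbed into the final bound. Second, the Milnor bound is usually stated for $V(p)\subseteq\R^d$, so to count components inside $[0,k]^d$ I would apply the semi-algebraic version to the set cut out by $p=0$ together with the $2d$ linear inequalities defining the cube; this perturbs constants but preserves the $\Delta^{O(d)}$ shape that the final arithmetic needs. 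The heart of the argument is really the interplay of these two cases: combinatorial slicing handles the cubes the variety ``passes through,'' while a global algebraic-topological count handles the cubes in which it ``hides.''
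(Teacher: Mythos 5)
Your proof is correct and follows essentially the same strategy as the paper's: induction on the dimension, with Milnor's bound controlling the cubes where $V(p)$ hides entirely in the interior and a hyperplane-slicing recursion (via the inductive hypothesis applied to the restricted polynomial, as in the paper's \Cref{lem:HyperplaneRestriction}) controlling the cubes whose facets are hit. The only cosmetic difference is in handling degeneracies: the paper factors out all linear factors of $p$ upfront before running the argument, whereas you account for the degenerate axis-aligned slices (where the restriction is identically zero) by bounding their contribution directly, but this changes only the bookkeeping, not the substance.
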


%%%%%%%%%%%%%%%%%%%%%%%%%%%%%%%%%%%%%%%%%%%%%%%%%%%%%%%%%%%%%
\section{Smoothed Analysis of Resource Augmentation}
\label{sub:resourceAugmentation}
%%%%%%%%%%%%%%%%%%%%%%%%%%%%%%%%%%%%%%%%%%%%%%%%%%%%%%%%%%%%%
    The predominant approach to find decent solutions for hard optimization
    problems is to compute an approximation. An alternative approach is resource augmentation, where you consider an optimal solution subject to slightly weaker problem constraints.
    This alternative approach has received considerably less attention in 
    theoretical computer science. We want to emphasize that resource augmentation 
    algorithms find a solution which does not compromise the 
    optimality, in the sense that it gives the optimal solution to the augmented problem. Using \SmoothAna, we argue that in practice some $\ER$-complete optimization problems have an optimal solution that can be represented with a logarithmic word size by applying \SmoothAna to the augmentation parameter. The proofs are deferred to \Cref{sec:ResourceAugmentation}.

For the art gallery problem, Dobbins, Holmsen and
Miltzow~\cite{ArxivSmoothedART} 
showed augmenting the polygon by so-called edge-inflations,
makes guarding the polygon easier.
This leads to small expected 
\InputBitComp under \SmoothAna. 
Their proof consists of a problem specific part
and a calculation of probabilities and 
expectations.
We generalize their idea to a widely applicable framework 
for \SmoothAna.

\paragraph{Definition.} 
Let us fix an algorithmic optimization problem $P$ with \realVerify $A$. 
For it to be
a \emph{resource-augmentation} problem,
we require three specific conditions:
First, each input consists of an $I \in [0,1]^n \times \Z^m$ together with 
an \emph{augmentation-parameter} $\alpha\in[0,1]$. 
Secondly, we assume that there is an implicitly defined
\emph{solution space} $\chi_I[\alpha] = \chi[\alpha]$ where each element in $\chi[\alpha]$ is considered a solution for the problem $P$ (with input $I$ and augmented by $\alpha$)  which does not have to be optimal. 
We require that for each $\alpha, \alpha'$ with $\alpha < \alpha'$ it holds that $\chi[\alpha] \subset \chi[\alpha']$.
For example, in the art gallery problem $I$ is the real-valued vertices of a simple polygon and an augmentation can be an inflation of the polygon by $\alpha$ (see \cite{ArxivSmoothedART}). The set $\chi_I[\alpha]$ is the set of all guard placements which guard the inflated polygon.
Thirdly, we assume that there exists an evaluation function
$f : \chi[1] \rightarrow \N$.
The aim, given $\alpha$, is to find a solution $x^* \in \chi[\alpha]$ for which $f(x^*)$ 
is the maximum or minimum denoted by $\Opt(\chi[\alpha])$.
Henceforth, for notational convenience, we assume that $P$ is a maximization problem.

\paragraph{Smoothed analysis of resource augmentation.}
For any $x$, we intuitively consider the minimal word size required for each coordinate in $x$ such that the verification algorithm $A$ of $P$ can verify if $x \in \chi_I[\alpha]$ on the \wordRam . 
We denote by $\BIT(\chi_I[\alpha]) = \min \{ \BIT_{IN}(x, A) \mid f(x) = \Opt(\chi_I[\alpha]), x \in \chi_I[\alpha]  \}$ the minimal precision 
needed to express an optimal solution in the 
solution space $\chi_I[\alpha]$.
For the \SmoothAna of resource augmentation, we choose $\alpha$ uniform at random in $[0, \delta]$ (since we assume that a negative augmentation is not well-defined). Just as in the previous section, we first define the expected cost of a given input:
$\BIT_\delta(I, P) =  \underset{\alpha \in\Omega_\delta}{\E} \left[ \BIT(\chi_I[\alpha])  \right]$ and the smoothed cost:
\[
\BIT_{\textrm{smooth}}(\delta, n, m, P) = \max_{I \in [0,1]^n \times \Z^m} \; \BIT_\delta(I, P).
\]
 
In this paper, we study resource augmentation problems with three additional but natural properties:

\begin{itemize}
    \item The \emph{\mono} property, which states that for all inputs $I$, for all $\alpha, \alpha' \in [0,1]$ if $\alpha \le \alpha'$
then $\chi_I[\alpha] \subseteq \chi_I[\alpha']$. Note that this property implies that in a more augmented version of the problem, the optimum is at least as good.
\item We define a breakpoint  as an $\alpha \in [0,1]$ such that $\forall \eps>0, \; \Opt(\chi_I[\alpha- \eps]) \neq \Opt(\chi_I[\alpha + \eps])$.
    Then the \emph{\breakproperty} property requires that there are at most~$n^{O(1)}$ breakpoints, for all inputs. 
   
\item  The \emph{\bitaugmented} property
    requires 
    that 
    for all 
    $x$ 
    optimal in $\chi_I[\alpha]$ and for all $\eps > 0$,  there is an
    $x'\in\chi_I[\alpha+\eps]$ with \InputBitComp with respect to its \realVerify of
    $\leq c \log ( n / \eps )$, 
    for some $c\in \N$,
    and $f(x) \le f(x')$ (recall that we assumed that $P$ is a maximization problem. Else the statement is symmetric). Ergo, given some $x$ we can increase the augmentation parameter
    by \eps, to obtain an equally 
    good solution $x' \in \chi_I[\alpha +\eps]$.
    Furthermore, $x'$ has low \InputBitComp. Note that $x'$ is not necessarily
    optimal for $\chi_I[\alpha +\eps]$.
    \end{itemize}

We apply \SmoothAna to resource-augmentation 
problems with these three properties by choosing 
uniformly at random the augmentation
$\alpha \in [0,\delta]$. In \Cref{sec:ResourceAugmentation} we prove the following theorem:

\begin{restatable}{theorem}{THMaugmentation}
    \label{thm:augmentation}
    Let $P$ be a resource augmentation problem
    that is 
    \mono, \breakproperty and \bitaugmented and $m \le O(n^c)$ then $\BIT_{\textrm{smooth}}(\delta, n, m, P)$ is bounded  from above by  $O(\log(n / \delta))$.
\end{restatable}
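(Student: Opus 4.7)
}
The plan is to exploit the three structural properties in tandem: \mono together with \breakproperty force $\Opt(\chi_I[\cdot])$ to behave like a step function with only $n^{O(1)}$ jumps, and \bitaugmented then lets us transform a nearby optimum into a low-bit optimum at~$\alpha$.

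First, I would argue that $\Opt(\chi_I[\cdot])$ is a non-decreasing step function with at most $n^{O(1)}$ jumps. The monotone property makes $\Opt$ non-decreasing, and the definition of a breakpoint combined with monotonicity implies that on any open interval free of breakpoints, $\Opt$ must be constant (otherwise a supremum-of-the-smaller-value argument exhibits a breakpoint strictly inside). Let $B \subseteq [0,\delta]$ be the set of breakpoints in $[0,\delta]$ together with $0$, so $|B| \le n^{O(1)}$. For $\alpha \in [0,\delta] \setminus B$, define $\alpha^- := \max\{ b \in B : b < \alpha \}$. The preceding observation then gives $\Opt(\chi_I[\alpha']) = \Opt(\chi_I[\alpha])$ for every $\alpha' \in (\alpha^-, \alpha]$.

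Second, I would use \bitaugmented to produce, for each such $\alpha$, an optimal witness in $\chi_I[\alpha]$ with small $\BitInput$-cost. Pick any $\eps \in (0,\, \alpha - \alpha^-)$ and let $x$ be optimal in $\chi_I[\alpha - \eps]$. Applying the smoothability property at $\alpha - \eps$ gives an $x' \in \chi_I[(\alpha-\eps)+\eps] = \chi_I[\alpha]$ with $\BitInput(x', A) \le c \log(n/\eps)$ and $f(x') \ge f(x) = \Opt(\chi_I[\alpha - \eps]) = \Opt(\chi_I[\alpha])$; hence $x'$ is itself an optimum of $\chi_I[\alpha]$. Letting $\eps \to (\alpha - \alpha^-)^-$ yields the pointwise bound $\BIT(\chi_I[\alpha]) \le c \log\bigl( n / (\alpha - \alpha^-) \bigr)$.

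Third, I would integrate. The event $\{\alpha - \alpha^- < t\}$ is covered by $\bigcup_{b \in B}(b, b+t)$, which has Lebesgue measure at most $|B|\, t$, so $\Pr_\alpha[\alpha - \alpha^- < t] \le |B|\, t/\delta \le n^{O(1)}\, t/\delta$. Setting $Y := \alpha - \alpha^- \in (0,\delta]$ and using the tail identity $\E[-\log Y] = \int_0^\infty \Pr[Y < e^{-s}]\, ds$, split the integral at $s^\ast = \log(n^{O(1)}/\delta) = O(\log(n/\delta))$: for $s \le s^\ast$ use the trivial bound $1$, and for $s > s^\ast$ use $n^{O(1)} e^{-s}/\delta$. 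This gives $\E[-\log Y] = O(\log(n/\delta))$, and combining with the pointwise bound yields $\E_\alpha[\BIT(\chi_I[\alpha])] \le c\bigl(\log n + \E[-\log Y]\bigr) = O(\log(n/\delta))$. Taking the maximum over $I$ and using $m \le n^{O(1)}$ absorbs $m$ into the asymptotic notation, completing the bound $\BIT_{\textrm{smooth}}(\delta, n, m, P) = O(\log(n/\delta))$.

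The main subtlety I expect to navigate is the first step: carefully deducing from the definition of breakpoint plus monotonicity that $\Opt$ really is constant on every open interval between consecutive breakpoints, so that the smoothability-produced $x'$ is genuinely optimal (not merely feasible) at $\alpha$. Once this structural fact is in hand, the smoothability property plugs in directly and the remainder is a routine tail integral.
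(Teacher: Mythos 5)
Your proposal is correct and follows essentially the same route as the paper: use the \mono and \breakproperty properties to argue that $\Opt$ is almost surely locally constant at a random $\alpha$, invoke the \bitaugmented property to convert a nearby optimum into a low-\InputBitComp optimum at $\alpha$, and finish with a tail-sum/integral calculation. The only cosmetic difference is that you extract an explicit pointwise bound $\BIT(\chi_I[\alpha]) \le c\log\bigl(n/(\alpha-\alpha^-)\bigr)$ and integrate over the random variable $Y=\alpha-\alpha^-$, whereas the paper works directly with the tail probabilities $\Pr(\BIT(\chi[\alpha]) \ge c\log(n/\eps))$ and sums over integer thresholds $k$ with $\eps_k = n/2^{k/c}$; these are two parameterizations of the same estimate.
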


    \paragraph{Implications of \Cref{thm:augmentation}.}
    To illustrate the applicability
    of our findings, we give the following two corollaries. 
    The first result was already
    shown in~\cite{ArxivSmoothedART}.
    The art gallery problem has been shown to 
    be \ER-complete \cite{ARTETR}, and currently \ER-completeness 
    of the packing problems is in preparation~\cite{etrPacking}.
    Our results imply that apart 
    from near-degenerate conditions the solutions to 
    these problems have logarithmic \InputBitComp.
    
    \begin{corollary}
    \label{cor:ApplyAugmentation}
         Under perturbations of the augmentation of magnitude $\delta$, the following problems have an optimal solution with an expected \InputBitComp of $O(\log(n / \delta))$:         (1) the art gallery problem under perturbation of edge inflation~\cite{ArxivSmoothedART}, (2) packing polygonal objects into a square
        container under perturbation of the container side-length.
    \end{corollary}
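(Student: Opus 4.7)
The plan is to apply \Cref{thm:augmentation} to each of the two problems, which reduces the task to verifying that each problem is \mono, \breakproperty, and \bitaugmented with respect to the chosen augmentation, and that it admits a \realVerify as guaranteed by \Cref{thm:Cook-Levin}. I would treat the two problems in parallel, first setting up the solution space $\chi_I[\alpha]$ explicitly in each case, and then discharging the three conditions one at a time.

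For the art gallery problem with edge inflation, let $I$ encode the vertices of a simple polygon $P$ and let $\chi_I[\alpha]$ denote the sets of at most $k$ point guards that collectively see every point of the $\alpha$-inflated polygon $P_\alpha$ (the polygon obtained by translating every edge outward by $\alpha$). Monotonicity follows because inflating more only adds area that guards must cover, and every guard placement that covers the larger region also covers the smaller; more precisely one checks that the visibility constraints become strictly weaker as $\alpha$ grows, so $\chi_I[\alpha]\subseteq\chi_I[\alpha']$ for $\alpha\le\alpha'$. The \breakproperty property follows because, between two successive breakpoints, the optimal value is constant; a change of the optimum can occur only at values of $\alpha$ where a combinatorial visibility incidence appears or disappears, and there are only polynomially many such incidences (each determined by a constant number of polygon features), giving $n^{O(1)}$ breakpoints. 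For the \bitaugmented property, I would appeal to the construction of Dobbins, Holmsen and Miltzow~\cite{ArxivSmoothedART}: given an optimal guard set $x$ for $\chi_I[\alpha]$, they show one can round every guard to a nearby rational with denominator polynomial in $n/\eps$, and the rounded positions still guard $P_{\alpha+\eps}$; this gives the required $x'\in\chi_I[\alpha+\eps]$ with the same value and \InputBitComp at most $c\log(n/\eps)$ with respect to the natural visibility-verification algorithm.

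For packing polygonal objects into a square container under perturbation of the container side-length, let $I$ encode the polygonal objects together with a base side-length $s$, let the augmentation parameter $\alpha$ grow the side-length to $s+\alpha$, and let $\chi_I[\alpha]$ be the set of pairwise-disjoint placements (rigid motions) of the objects inside the $(s+\alpha)$-container. Monotonicity is immediate: any placement that fits in a smaller square fits in a larger concentric square. For the \breakproperty property, the optimum (e.g.\ the number of objects packed, or any combinatorial measure) changes only at values of $\alpha$ where the feasibility of some combinatorial packing pattern flips, and the number of such patterns is polynomial in $n$ for a fixed set of objects, so there are $n^{O(1)}$ breakpoints. For the \bitaugmented property I would argue as follows: given any optimal placement $x\in\chi_I[\alpha]$, there is slack $\eps$ in the container; I can snap each translation and each rotation angle of every object to the nearest grid point of resolution $\eps/\textrm{poly}(n)$, and a standard computation shows that the induced displacement of any object vertex is at most $\eps/\text{const}$, so all objects still lie in the $(s+\alpha+\eps)$-container and remain pairwise disjoint. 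The snapped placement has \InputBitComp at most $c\log(n/\eps)$ with respect to the straightforward disjointness-and-containment \realVerify, and achieves the same objective value.

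With all three conditions checked for both problems, \Cref{thm:augmentation} delivers $\BIT_{\textrm{smooth}}(\delta,n,m,P)=O(\log(n/\delta))$ in each case, which is exactly the claim. The main obstacle is the \bitaugmented property for packing: rotations interact non-linearly with positions, so making the rounding argument quantitatively tight — i.e.\ showing that rounding both the translation and the rotation to grid resolution $\eps/\text{poly}(n)$ costs only $O(\eps)$ vertex displacement — requires a careful Lipschitz bound on the rigid-motion action on each polygonal object. The other verifications, while geometric, are largely bookkeeping once the combinatorial structure of packings and of visibility is fixed.
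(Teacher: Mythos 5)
Your overall plan — check the three hypotheses and invoke \Cref{thm:augmentation} — is correct and matches the paper's structure. But there are two substantive problems in how you discharge the hypotheses.

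First, both of your \breakproperty arguments are shakier than they need to be. For the art gallery problem you invoke "combinatorial visibility incidences," and for packing you invoke "combinatorial packing patterns," claiming these number $n^{O(1)}$. Neither claim is obviously true (the number of combinatorial packing patterns can easily be exponential), and neither is the argument the paper uses. The paper's argument is much simpler and airtight: the objective $f$ takes integer values, is bounded (by $n/3$ guards for the art gallery problem via Chv\'atal/Fisk, by $n$ objects for packing), and by the \mono property $\Opt(\chi_I[\alpha])$ is monotone in $\alpha$. A monotone integer-valued function with range of size $n^{O(1)}$ has at most $n^{O(1)}$ breakpoints. You should use this argument; your version would require an extra lemma you haven't supplied.

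Second, and more seriously, your \bitaugmented argument for packing has a genuine gap. You write that given an optimal placement $x\in\chi_I[\alpha]$, "there is slack $\eps$ in the container; I can snap each translation and each rotation angle... and a standard computation shows the induced displacement of any object vertex is at most $\eps/\text{const}$, so all objects still lie in the $(s+\alpha+\eps)$-container and remain pairwise disjoint." The last conclusion does not follow. Slack in the container gives you room near the boundary, but in an optimal packing the objects can be \emph{in contact with each other}, with zero separation. Small vertex displacement does \emph{not} preserve pairwise disjointness when objects touch — arbitrarily small perturbation can create overlap. The missing idea is the paper's ordering trick: before snapping, one defines linear orders $\pi_x,\pi_y$ on the objects (compatible with horizontal/vertical sweep order, which exists by \emph{convexity} of the objects) and translates the $i$th/$j$th object by $(i\eps/(n+2),\,j\eps/(n+2))$. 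This actively creates $\Omega(\eps/n)$ separation between every pair of objects (and between objects and the boundary), and \emph{only then} does snapping to grid resolution $O(\eps/n)$ preserve disjointness. Note also that the paper handles rotation not by snapping an angle but by rounding the unit direction vector and re-projecting onto the unit circle (Section 7.1), which keeps the rotation matrix exact; your "snap the rotation angle" approach would break the orthogonality of the rotation matrix. Finally, the ordering trick relies on convexity of the pieces; the corollary as stated in \cite{ArxivSmoothedART}-style packing is really for convex objects, so your claim for general polygonal objects would need an extra reduction.

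For the art gallery \bitaugmented property, you correctly defer to \cite{ArxivSmoothedART}, which is what the paper does.
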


%%%%%%%%%%%%%%%%%%%%%%%%%%%%%%%%%%%%%%%%%%%%%%%%%%%%%%%%%%%%%
%%%%%%%%%%%%%%%%%%%%%%%%%%%%%%%%%%%%%%%%%%%%%%%%%%%%%%%%%%%%%
\section{Discussion}
\label{sec:discussion}
%%%%%%%%%%%%%%%%%%%%%%%%%%%%%%%%%%%%%%%%%%%%%%%%%%%%%%%%%%%%%
%%%%%%%%%%%%%%%%%%%%%%%%%%%%%%%%%%%%%%%%%%%%%%%%%%%%%%%%%%%%%
Here, we discuss the implications of our results to the real RAM and to the gap between \ER and \NP. 

\paragraph{The \realRam}
In \Cref{sec:Cook-Levin}, we define formally a model of the \realRam. 
This model corresponds to the model that is intuitively used by researchers in computational geometry for decades.
Yet, it is still conceivable that one can design a polynomial time algorithm for SAT on the \realRam, even if P~$\neq$~\NP.
Yet, this paper gives several arguments that support the intuition that those pitfalls will not happen.
Although, the \realRam was invented purely in order to avoid to deal with 
cumbersome precision issues, its role in \Cref{thm:Cook-Levin} is completely different.
In particular, suddenly all algorithms analyzed on the \realRam help us now to establish \ER-membership, exactly because they do allow real inputs. 
Furthermore, assume that we have an algorithm \textrm{SAT-SOLVE} that can determine in polynomial time on a \realRam if a given SAT formula is true,
then \Cref{thm:Cook-Levin} implies co-\NP $\subseteq$ \ER.
In addition, if there were an algorithm that would solve true quantified Boolean formulas (TQBF) in polynomial time on a \realRam then \Cref{thm:Cook-Levin} would even imply $\ER = \PSPACE$.
Still, it seems difficult to rule out a polynomial time algorithm for SAT on the \realRam, as we can also not rule out a SAT algorithm on the \wordRam.

\paragraph{Smoothing the gap between \texorpdfstring{\NP}{NP} and \texorpdfstring{\ER}{ER}.}
\label{sub:verificationsummary}
%%%%%%%%%%%%%%%%%%%%%%%%%%%%%%%%%%%%%%%%%%%%%%%%%%%%%%%%%%%%%
 In \Cref{sec:membership} we strengthened the intuitive analogy
    between~\NP and~\ER by showing that for both of them membership is equivalent to the existence of a polynomial-time verification algorithm 
    in their respective RAM. 
    It is well-known that $\NP \subseteq \ER$, but it is  unknown if the two complexity classes are the same. The gap between the two complexity classes could be formed by inputs for \ER-complete problems, for which the witness cannot (to the best of our knowledge) be represented using polynomial \InputBitComp.
    % , which makes it hard to design a polynomial-time verification algorithm on the \wordRam. 
    In \Cref{sec:Smoothed-Recognition} and \Cref{sec:ResourceAugmentation} 
    we show that for a wide class of \ER-complete problems that have such an ``exponential \InputBitComp phenomenon'' their witness can, under smoothed analysis, be presented using logarithmic \InputBitComp. Thus, the gap between 
\ER and \NP (if it exists) is 
formed by near-degenerate input. 
    \begin{figure}[htbp]
    \centering
    \includegraphics[page = 3]{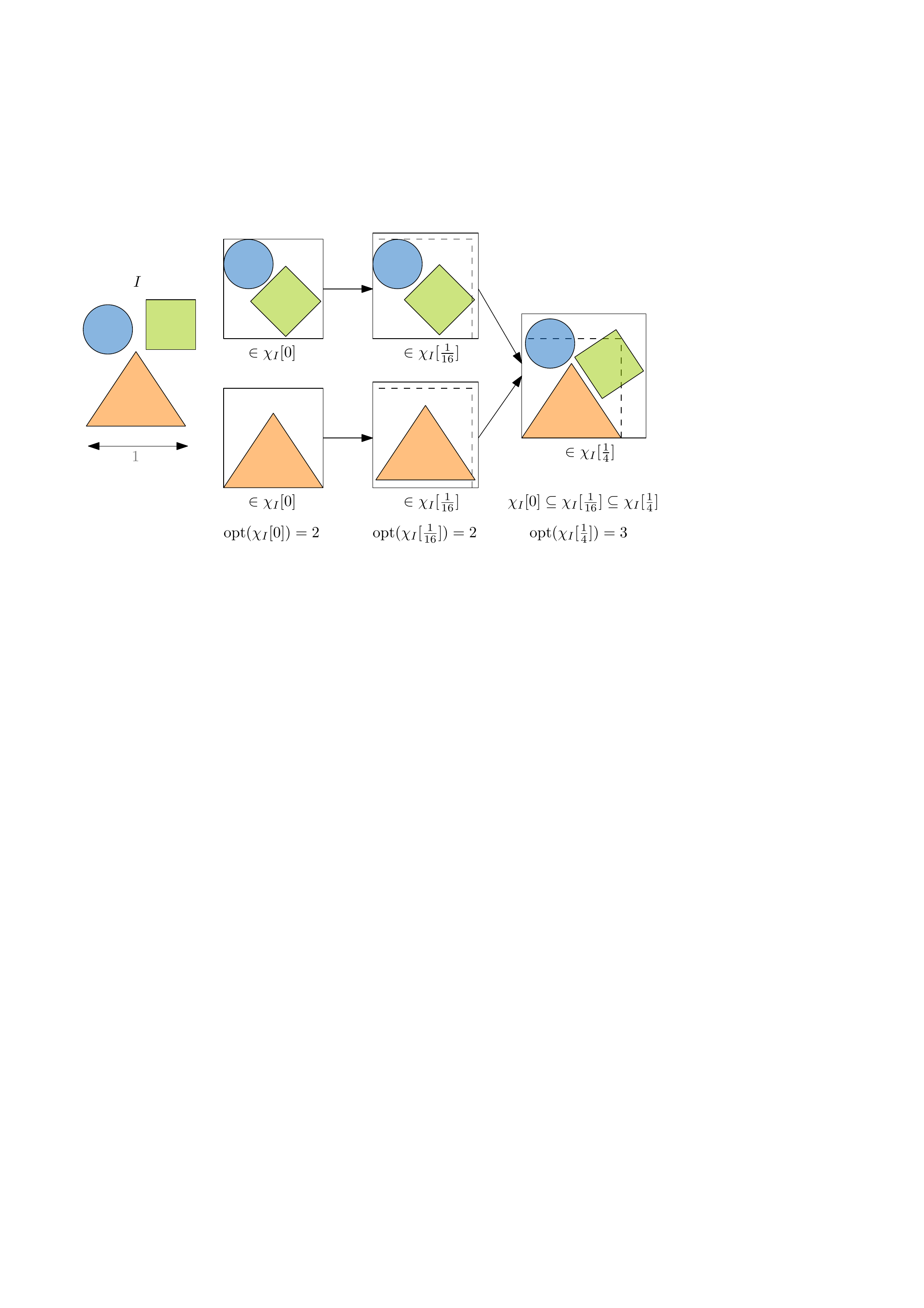}
    \caption{We augment the container from left to right. This
    extra space can lead to a better solution. If
    the optimal solution does not change, the extra
    space allows for a solution with low \InputBitComp.}
    \label{fig:ResourceAugmentation}
\end{figure}

\section{Algorithmic Membership in \texorpdfstring{\ER}{ER}}
\label{sec:Cook-Levin}
%%%%%%%%%%%%%%%%%%%%%%%%%%%%%%%%%%%%%%%%%%%%%%%%%%%%%%%%%%%%

% To be determined:

This section is devoted to proving the following theorem:

\CookLevin*

To this end, we finish the formalization of the \realRam that we started in the Introduction.

\subsection{What is the Real RAM?}
\label{sec:DefRealRam}

\begin{table}
 \centering\footnotesize\sffamily
 \def\arraystretch{1.3}
 \begin{tabular}{|c|c:c|}
 \hline
 	\textbf{Class} & \textbf{Word} & \textbf{Real} \\
 \hline
 	Constants
 			& $W[i]\gets j$			& $R[i] \gets 0$		\\
 			&						& $R[i] \gets 1$		\\
 \hdashline
 	Memory
 			& $W[i] \gets W[j]$		& $R[i] \gets R[j]$		\\
 			& $W[W[i]] \gets W[j]$	& $R[W[i]] \gets R[j]$	\\
 			& $W[i] \gets W[W[j]]$	& $R[i] \gets R[W[j]]$ 	\\
 \hdashline
 	Casting	& --- 			& $R[i] \gets j$	\\
 			& --- 			& $R[i] \gets W[j]$	\\
 %			& ---			& $R[i] \gets W[W[j]]$	\\
 %			& ---			& $R[R[i]] \gets W[j]$	\\
 \hdashline
 	Arithmetic and boolean
 			& $W[i] \gets W[j] \boxplus W[k]$	& $R[i] \gets R[j] \oplus R[k]$		\\
 \hdashline
 	Comparisons
 			& if $W[i] = W[j]$ goto $\ell$ & if $R[i] = 0$ goto $\ell$ \\
 			& if $W[i] < W[j]$ goto $\ell$ & if $R[j] > 0$ goto $\ell$ \\
 \hdashline
 	Control flow
 			& \multicolumn{2}{c|}{goto $\ell$} \\
 			& \multicolumn{2}{c|}{halt / accept / reject} \\
 %\hdashline
 %	Nondeterminism$^*$
 %			& guess $W[i]$ & guess $R[i]$ \\
 \hline
 \end{tabular}
 \caption{Constant time RAM operations. The values $i,j,k$ are constant words used for indexing.}
 \label{tab:instructions}
 \end{table}

We explained in the introduction that the input to a \realRam algorithm consists of a pair of vectors $(a,b) \in \R^n \times \Z^m$, for some integers $n$ and $m$, which are suitably encoded into the corresponding memory arrays before the algorithm begins. To maintain uniformity, we require that neither the input sizes $n$ and $m$ nor the word size $w$ is known to any algorithm at “compile time”.  The output of a \realRam algorithm is the contents of memory when the algorithm executes the \textsf{halt} instruction.

Following Fredman and Willard \cite{fw-sitbf-93, fw-tams-94} and later users of the \wordRam, we assume that $w = \Omega(\log N)$, where $N = n+m$ is the total size of the problem instance at hand.  This so-called \emph{trans\-dichotomous} assumption implies direct constant-time access to the input data. %
\Cref{tab:instructions} summarizes the specific instructions our model supports.  All \emph{word} operations operate on words and produce words as output; all \emph{real} operations operate on real numbers and produce real numbers as output.  Each operation is parametrized by a small number of constant words $i$,~$j$, and~$k$.
The \emph{running time} of a \realRam algorithm is the number of instructions executed before its program halts; each instruction requires one time step by definition.

Our model supports the following specific word operations; all arithmetic operations interpret words as non-negative integers between $0$ and $2^w-1$.
\begin{itemize}
\item
addition: $x \gets (y+z) \bmod 2^w$
\item
subtraction: $x \gets (y-z) \bmod 2^w$
\item
lower multiplication: $x \gets (yz) \bmod 2^w$
\item
upper multiplication: $x \gets \floor{yz / 2^w}$
\item
rounded division: $x \gets \floor{y/z}$, where $z\ne 0$
\item
remainder: $x \gets y \bmod z$, where $z\ne 0$
\item
bitwise nand: $x \gets y \mathbin\uparrow z$ (that is, $x_i \gets y_i\mathbin\uparrow z_i$ for every bit-index $i$)
\end{itemize}
(Other bitwise boolean operations can of course be implemented by composing bitwise nands.)  
Similarly, our model supports the following \emph{exact} real operations.
\begin{itemize}
\item
addition: $x \gets y+z$
\item
subtraction: $x \gets x-y$
\item
multiplication: $x \gets y\cdot z$
\item
exact division: $x \gets y/z$, where $z\ne 0$
\item
(optional) exact square root: $x \gets +\sqrt{y}$, where $y\ge 0$
\end{itemize}

To avoid unreasonable computational power, our model does not allow casting real variables to integers (for example, using the floor function $\floor{\cdot}$), or testing whether a real register actually stores an integer value, or any other access to the binary representation of a real number.  However, we \emph{do} allow casting integer variables to reals. 

\subsection{Proofs of Algorithmic Membership}
This section is dedicated to prove \Cref{thm:Cook-Levin}.
As usual, we prove this theorem in two stages.  First, we describe a trivial real-verification algorithm for ETR.  Second, for any discrete decision problem $\QQ$ with a real-verification algorithm, we describe a polynomial-time algorithm \emph{on the \wordRam} that transforms every yes-instance of \QQ into a true \ETR formula and transforms every no-instance of \QQ into a false \ETR formula.  The first reduction implies that every problem in $\ER$ has a real-verification algorithm; the second implies that every real-verifiable discrete decision problem is in~$\ER$.

\begin{lemma}
ETR has a \realVerify.
\end{lemma}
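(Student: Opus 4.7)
The plan is to exhibit a straightforward \realRam algorithm that, given an \ETR formula $\Psi = \exists x_1,\ldots,x_n\; \Phi(x_1,\ldots,x_n)$ encoded as the integer instance $I$, uses the tuple $(x_1,\ldots,x_n)\in\R^n$ as the real witness $\Rcert$ and simply evaluates $\Phi$ at these values, accepting iff the result is $\textsc{True}$. No integer witness $\Zcert$ is needed (take it to be empty). The witness length is at most the number of variables in $\Phi$, which is bounded by $|I|$, so the length requirement is trivially met.

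For the verification procedure itself, I would first parse the encoding of $\Phi$ on the \wordRam portion of the machine in polynomial time, producing an expression tree (or equivalently a list of subexpressions in evaluation order) whose leaves are either the constants $0$ and $1$ or the variable indices $1,\ldots,n$, and whose internal nodes are labeled by one of the symbols $+,\cdot,=,\le,<,\land,\lor,\lnot$. This parsing phase uses only integer/word operations of the \wordRam, and its running time is polynomial in $|I|$.

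Next I would evaluate the expression tree bottom-up. Each arithmetic node ($+$, $\cdot$) is computed in constant time using the real operations of the \realRam, reading the child values out of the real register array $R[\cdot]$ and writing the result to a fresh real register. Each comparison node ($=$, $\le$, $<$) is evaluated with the constant-time comparison instructions of the \realRam (e.g., computing $R[j]-R[k]$ and testing against $0$), storing the Boolean result in a word register. Each Boolean node ($\land,\lor,\lnot$) combines word registers using bitwise operations. The algorithm \textsf{accept}s iff the word register holding the value at the root equals $1$, and \textsf{reject}s otherwise. Every node is processed in $O(1)$ \realRam steps, so the total running time is proportional to the size of $\Phi$, hence polynomial in $|I|$.

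Correctness is then immediate from the semantics of \ETR: if $\Psi$ is true, there exist reals $x_1,\ldots,x_n$ for which $\Phi$ evaluates to $\textsc{True}$, and the evaluator accepts on that witness; if $\Psi$ is false, then for every real tuple $\Phi$ evaluates to $\textsc{False}$, and the evaluator rejects on every witness. The only subtlety to double-check is that the \realRam operations we invoke are exactly those of \Cref{tab:instructions} (addition, subtraction, multiplication, equality/positivity tests) and that no unsupported operation such as casting a real to an integer or a floor is ever used; this is the main thing to verify carefully, but since the \ETR alphabet contains only $+,\cdot$ and order/equality and Boolean connectives, all primitives line up cleanly with the supported instruction set.
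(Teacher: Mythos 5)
Your proposal is correct and follows essentially the same approach as the paper: take the reals $x_1,\ldots,x_n$ as the real witness, parse the encoded predicate $\phi$, and evaluate $\phi(x)$ using only the supported $+,\cdot$, comparison, and Boolean operations, accepting iff the result is true. The paper phrases the evaluation as a recursive-descent parse rather than an explicit expression-tree pass, but this is a presentational difference, not a substantive one.
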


\begin{proof}
Let $\Phi = \exists x_1, \dots, x_n \colon \phi(x_1, \dots, x_n)$ be an arbitrary formula in the existential theory of the reals.
The underlying predicate $\phi$ is a string over an alphabet of size $n+O(1)$ (the symbols $0, 1, +, \cdot, =, \leq , < , \land, \lor, \lnot$, and the variables $x_1, \dots, x_n$), so we can easily encode $\phi$ as an integer vector.  Our \realVerify takes the (encoded) predicate~$\phi$ and a real certificate vector $x\in \R^m$ as input, and evaluates the predicate $\phi(x)$ using (for example) a standard recursive-descent parser.  This algorithm clearly runs in time polynomial in the length of $\Phi$.
\end{proof}

\begin{lemma}
Every discrete decision problem with a \realVerify is in \ER.
\end{lemma}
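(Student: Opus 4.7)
The plan is to reduce $\QQ$ to \ETR in polynomial time on the \wordRam by encoding the full computation of the \realVerify $A$ into a single quantifier-free predicate. Given an integer instance $I$ of size $N$, we know $A$ runs for at most $T := N^c$ steps with word size $w = \lceil c\log_2 N\rceil$, so it touches at most $M = O(T)$ distinct word and real registers. The formula $\Phi_I$ existentially quantifies over (i) the two certificate vectors $\Rcert \in \R^{N^c}$ and $\Zcert$ (the latter represented by its bits), and (ii) a complete \emph{trace}: a program counter $pc_t$ and the contents of every touched word and real register at each time step $t = 0,1,\dots,T$. This produces polynomially many real variables, most of which will be constrained to model integer or bit values.

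The predicate is a conjunction of an \emph{initialization} clause (loading $I$, $\Rcert$, and $\Zcert$ into the input registers and setting $pc_0 = 1$), a \emph{transition} conjunction over all time steps, and an \emph{acceptance} clause asserting that the \textsf{accept} instruction is reached by some step. For each fixed line $\ell$ of $A$'s program, the transition at step $t$ contains the implication ``$pc_t = \ell \Rightarrow$ (state at $t{+}1$ equals the result of executing line $\ell$ on the state at $t$)''. Since the program has $O(1)$ lines and each implication is polynomial-size, every step contributes polynomial size. Indirect memory access $W[W[i]]$ or $R[W[i]]$ is expressed as a disjunction $\bigvee_{j=1}^{M} (W[i] = j \land \ldots)$ over the polynomial set of register indices. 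Real arithmetic is native to \ETR{}, and the optional $R[i] \gets \sqrt{R[j]}$ is handled by the auxiliary constraint $R[i]^2 = R[j] \land R[i] \geq 0$.

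The technical core is encoding the word operations of \Cref{tab:instructions} in \ETR. Every word register is represented by its $w$ bit-variables $b_1,\ldots,b_w$ (each constrained by $b_k(b_k-1)=0$) together with the identity $W[i] = \sum_k 2^{k-1}b_k$. Using this representation, modular addition and subtraction reduce to the schoolbook carry chain; lower and upper multiplication reduce to writing $y\cdot z = q\cdot 2^w + r$ with $0\leq r < 2^w$ and expanding $q$ and $r$ bitwise; rounded division and remainder reduce to $y = q z + r$ with $0 \leq r < z$; and bitwise \textsc{nand} is handled bit-by-bit. Since $w = O(\log N)$, each such gadget has polynomial size, and the whole construction of $\Phi_I$ from $I$ runs in polynomial time on the \wordRam. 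Correctness is then a routine induction on $t$: a satisfying assignment of $\Phi_I$ describes exactly an accepting run of $A$ on some certificate, and vice versa. The main obstacle is managing this bookkeeping uniformly for all instruction types --- in particular, encoding the less routine word operations (upper multiplication, floor division, bitwise nand) as polynomial-size quantifier-free gadgets and wiring them into the trace variables without allowing inconsistent intermediate states.
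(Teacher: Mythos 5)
Your proposal follows essentially the same route as the paper: existentially quantify over a full execution trace (register contents and program counter at each of $T=N^c$ time steps), represent word registers via $w=O(\log N)$ bit variables constrained by $x(x-1)=0$, encode each instruction as a polynomial-size gadget guarded by the program-counter value, handle indirect addressing by a polynomial-size disjunction, and observe that the whole formula is polynomial-size and constructible in polynomial time on the word RAM. The only cosmetic differences are in a few gadgets (e.g.\ you use a carry chain for modular addition where the paper uses an auxiliary variable plus a single conditional, and you explicitly add $R[i]\ge 0$ to the square-root encoding), none of which change the argument.
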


\begin{proof}
Fix a \realVerify $A$ for some discrete decision problem $\QQ$.  We argue that for any integer vector $\Inst$, we can compute in polynomial-time \emph{on the \wordRam} a corresponding ETR formula $\Phi(I)$, such that $\Phi(I)$ is true if and only if $\Inst$ is a yes-instance of $\QQ$.  Mirroring textbook proofs of the Cook-Levin theorem, the formula $\Phi$ encodes the complete execution history of $A$ on input $(\Rcert, \textnormal{ } \Inst\circ\Zcert)$.  The certificate vectors $\Rcert$ and $\Zcert$ appear as existentially quantified variables of $\Phi$; the input integers $\Inst$ are hard-coded into the underlying proposition $\Phi$.

Now fix an instance $\Inst\in \Z^n$ of $\QQ$.  Let $N = n+2n^c$, let $w = \lceil c\log_2 N \rceil = c^2\log_2 n + O(1)$, 
and let $T = N^c = O(n^{c^2})$.  
Recall that the constant $c$ stems from the definition of a verifier.
Thus, $w$ is an upper bound on the word size and $T$ is an upper bound on the running time of~$A$ given input $(\Rcert, \Inst\circ\Zcert)$, for any certificates $\Rcert$ and $\Zcert$ of length at most $n^c$.  Our output formula $\Phi(I)$ includes the following \emph{register variables}, which encode the complete state of the machine at every time step~$t$ from $0$ to $T$:
\begin{itemize}
\item
For each address $i$, variable $\var{W(i,t)}$ stores the value of word register $W[i]$ at time $t$.
\item
For each address $i$, variable $\var{R(i,t)}$ stores the value of real register $R[i]$ at time~$t$.
\item
Finally, variable $\var{pc(t)}$ stores the value of the program counter at time~$t$.
\end{itemize}
Altogether $\Phi(I)$ has $(2\cdot 2^w + 1)T = O(n^{2c^2})$ register variables.  These are not the only variables in $\Phi(I)$; we will introduce additional variables as needed as we describe the formula below.

Throughout the following presentation, all indexed conjunctions ($\bigwedge$), disjunctions ($\bigvee$), and summations ($\sum$) are notational shorthand; each term in these expressions appears explicitly in the actual formula $\Phi(I)$.  For example, the indexed conjunction
\[
	\bigwedge_{b=1}^{w} \big(\var{2^b} = \var{2^{b-1}} + \var{2^{b-1}} \big)
\]
is shorthand for the following explicit sequence of conjunctions
\[
	\big(\var{2^1} = \var{2^0} + \var{2^0}\big) ~\land~
	\big(\var{2^2} = \var{2^2} + \var{2^1}\big) ~\land~ \cdots ~\land~
	\big(\var{2^w} = \var{2^{w-1}} + \var{2^{w-1}}\big).
\]

\smallskip
\paragraph{Integrality.}
To constrain certain real variables to have integer values, we introduce new global variables $\var{2^0}$, $\var{2^1}$, $\var{2^2}$, \dots, $\var{2^{w}}$ and equality constraints
\[
	\texttt{PowersOf2} :=
	(\var{2^0} = 1) \land \bigwedge_{b=1}^{w} (\var{2^b} = \var{2^{b-1}} + \var{2^{b-1}})
\]
The following ETR expression forces the real variable $X$ to be an integer between $0$ and $2^{w-1}$:
\[
	\texttt{IsWord}(X) ~:=~
	\exists x_0, x_1, \dots, x_{w-1} \colon 
			\left(X = \sum_{b=0}^{w-1} x_b \var{2^b}\right)\land
				\left(\bigwedge_{b=0}^{w-1} \big(x_b(x_b-1) = 0\big)\right)
\]
We emphasize that each invocation of \texttt{IsWord} requires $w$ new variables $x_b$; 
each $x_b$ stores the $b$th bit in the binary expansion of $X$.  
Although, this is a fairly lengthy way to check if a variable is an integer in a certain range, recall that ETR only has the constants $0,1$ in their alphabet. 
Our final formula $\Phi(I)$ includes the following conjunction, which forces the initial values of every word register variable to actually be a word:
\[
	\texttt{WordsAreWords} :=
		\bigwedge_{i=0}^{2^w-1} \texttt{IsWord}\left({\var{W(i,0)}}\strut\right)
\]
Altogether this subexpression involves $w2^w$ new one-bit variables and has length $O(w2^w)$.  Similarly, we can force variable $X$ to take on a fixed integer value $j$ by explicitly summing the appropriate powers of $2$:
\[
	\texttt{Equals}(X, j) ~:=~
			\left(X \!=\! \sum_{b \colon j_b = 1} \var{2^b}\right)
\]

\smallskip
\paragraph{Input and Output.}
We hardcode the fixed instance $\Inst$ into the formula with the conjunction
\[
	\texttt{FixInput} ~:=~ \bigwedge_{i=0}^{n-1} \texttt{Equals}\big({\var{W(i,0)}}, ~I[i] \big)
\]
(Here $I[i]$ denotes the $i$th coordinate of $I$.)  We leave the remaining initial word register variables $\var{W(i,0)}$ and all initial real register variables $\var{R(i,0)}$ unconstrained to allow for arbitrary certificates.

\smallskip
\paragraph{Execution.}
Finally, we add constraints that simulate the actual execution of $A$.  Let $L$ denote the number of instructions (“lines”) in program of $A$; recall that $L$ is a constant independent from $I$.  For each time step $t$ and each instruction index $\ell$, we define a constraint $\texttt{Update}(t, \ell)$ that forces the memory at time $t$ to reflect an execution of line $\ell$, given the contents of memory at time $t-1$.  Our formula~$\Phi(I)$ then includes the conjunction
\[
	\texttt{Execute} ~:=~
		\bigwedge_{t=1}^T \bigwedge_{\ell=1}^L
				\big(  (\var{pc(t)} = \ell) \Rightarrow \texttt{Update}(t, \ell) \big)
\]

The various expressions $\texttt{Update}(t, \ell)$ are nearly identical.  In particular, $\texttt{Update}(t, \ell)$ includes the constraints $\var{W(i,t)} = \var{W(i,t-1)}$ and $\var{R(j,t)} = \var{R(j,t-1)}$ for every word register $W[i]$ and real register $R[j]$ that are not changed by instruction $\ell$.  Similarly, unless instruction $\ell$ is a control flow instruction, $\texttt{Update}(t, \ell)$ includes the constraint
\[
	\texttt{Step}(t) ~:=~ \left(\var{pc(t)} = \var{pc(t-1)} + 1\right).
\]
\Cref{T:ram-encoding-easy,T:ram-encoding-indirect,T:ram-encoding-word-stuff} lists the important constraints in $\texttt{Update}(t, \ell)$ for three different classes of instructions.
\begin{itemize}
\item
Encoding constant assignment, direct memory access, real arithmetic (including square roots), and control flow instructions is straightforward; see \Cref{T:ram-encoding-easy}.  For an \textsf{accept} instruction, we set all future program counters to $0$, which effectively halts the simulation.  Similarly, we encode the \textsf{reject} instruction as the trivially false expression $(0=1)$.

\begin{table}[htb]
\centering\small\sffamily
\def\arraystretch{1.4}
\begin{tabular}{c:c}
	Instruction & Constraint 
\\
\hline
	$W[i] \gets j$			& $\texttt{Equals}(\var{W(i,t)}, j)$ \\
	$R[i] \gets 0$			& $\big(\var{R(i,t)} = 0\big)$\\
	$R[i] \gets 1$			& $\big(\var{R(i,t)} = 1\big)$\\
 	$R[i] \gets j$			& $\texttt{Equals}(\var{R(i,t)}, j)$\\
\hdashline
 	$W[i] \gets W[j]$		& $\big(\var{W(i,t)} = \var{W(j,t-1)}\big)$\\
%  	~\land~ \texttt{Step}(t)$ \\
	$R[i] \gets R[j]$		& $\big(\var{R(i,t)} = \var{R(j,t-1)}\big)$\\
% 	~\land~ \texttt{Step}(t)$ \\
 	$R[i] \gets W[j]$		& $\big(\var{R(i,t)} = \var{W(j,t-1)}\big)$\\
%  	~\land~ \texttt{Step}(t)$ \\
\hdashline
	$R[i] \gets R[j] + R[k]$		& 	$\var{R(i,t)} = \var{R(j,t-1)} + \var{R(k,t-1)}$
\\	$R[i] \gets R[j] - R[k]$		&	$\var{R(i,t)} = \var{R(j,t-1)} - \var{R(k,t-1)}$
\\	$R[i] \gets R[j] \cdot R[k]$	&	$\var{R(i,t)} = \var{R(j,t-1)} \cdot \var{R(k,t-1)}$
\\	$R[i] \gets R[j] / R[k]$		&	$\var{R(i,t)} \cdot \var{R(k,t-1)} = \var{R(j,t-1)}$
\\	$R[i] \gets \sqrt{R[j]}$		&	$\var{R(i,t)} \cdot \var{R(i,t)} = \var{R(j,t-1)}$
\\
\hdashline
 	if $W[i] = W[j]$ goto $\ell$	&
		$\IfThenElse{(\var{W(i,t-1)} = \var{W(j,t-1)})}
					{(\var{pc(t)} = \ell)}
					{\texttt{Step}(t)}$
\\
	if $W[i] < W[j]$ goto $\ell$  &
		$\IfThenElse{(\var{W(i,t-1)} < \var{W(j,t-1)})}
					{(\var{pc(t)} = \ell)}
					{\texttt{Step}(t)}$
\\
	if $R[i] = 0$ goto $\ell$ &
		$\IfThenElse{(\var{R(i,t-1)} = 0)}
					{(\var{pc(t)} = \ell)}
					{\texttt{Step}(t)}$
\\
	if $R[j] > 0$ goto $\ell$ &
		$\IfThenElse{(\var{R(i,t-1)} > 0)}
					{(\var{pc(t)} = \ell)}
					{\texttt{Step}(t)}$
\\
\hdashline
 	goto $\ell$ 				& $\var{pc(t)} = \ell$ \\
 	accept					& $\displaystyle\bigwedge_{i=t}^T (\var{pc(i)} = 0)$ \\
 	reject					& $0 = 1$ \\
\hline
\end{tabular}
\caption{$^{\strut}$ Encoding constant assignment, direct memory access, real arithmetic, and  control-flow instructions as formulae; “$\IfThenElse{A}{B}{C}$” is shorthand for {$(A\land B) \lor (\lnot A\land C)$}}
\label{T:ram-encoding-easy}
\end{table}

\item
Because there is no indirection mechanism in ETR itself, we are forced to encode indirect memory instructions using a brute-force enumeration of all $2^w$ possible addresses.  For example, our encoding of the instruction $W[W[i]] \gets W[j]$ actually encodes the brute-force linear scan “for all words $k$, if $W[i]=k$, then $W[k]\gets W[j]$”.  See \Cref{T:ram-encoding-indirect}.

\begin{table}[htb]
\centering\small\sffamily
\def\arraystretch{1.4}
\begin{tabular}{c:c}
	Instruction & Constraint \\
\hline
 	$W[W[i]] \gets W[j]$
							& $\displaystyle
								\bigvee_{k=0}^{2^w-1^{\strut}}
									\Big(\big(\var{W(i,t-1)} = k\big) \land
										\big(\var{W(k,t)} = \var{W(j,t-1)}\big)\Big)$
\\[3ex]
 	$W[i] \gets W[W[j]]$	
							& $\displaystyle
								\bigvee_{k=0}^{2^w-1}
									\Big(\big(\var{W(j,t-1)} = k \big) \land
										\big(\var{W(i,t)} = \var{W(k,t-1)}\big)\Big)$
\\[3ex]
	$R[W[i]] \gets R[j]$	
							& $\displaystyle
								\bigvee_{k=0}^{2^w-1}
									\Big(\big(\var{W(i,t-1)} = k\big) \land
										\big(\var{R(k,t)} = \var{R(j,t-1)}\big)\Big)$
\\[3ex]
	$R[i] \gets R[W[j]]$ 	
							& $\displaystyle
								\bigvee_{k=0_{\strut}}^{2^w-1}
									\Big(\big(\var{W(j,t-1)} = k\big) \land
										\big(\var{R(i,t)} = \var{R(k,t-1)}\big)\Big)$
\\
\hline
\end{tabular}
\caption{Encoding indirect memory instructions as formulae$^{\strut}$}
\label{T:ram-encoding-indirect}
\end{table}

\item
Finally, \Cref{T:ram-encoding-word-stuff} shows our encodings of arithmetic and bitwise boolean operations on words.  For addition and subtraction, we store the result of the integer operation in a new variable~$z$, and then store $z\bmod 2^w$ using a single conditional.  For upper and lower multiplication, we define two new \emph{word} variables $u$ and $l$, declare that $u\cdot 2^w+l$ is the actual product, and then store either $u$ or $l$.  Similarly, to encode the division operations, we define two new word variables that store the quotient and the remainder.  Finally, we encode bitwise boolean operations as the conjunction of $w$ constraints on the one-bit variables defined by \texttt{IsWord}.
\end{itemize}

\begin{table}[htbp]
\centering\small\sffamily
\def\arraystretch{1.4}
\begin{tabular}{@{}c:c@{}}
	Instruction & Constraint \\
\hline
 	$W[i] \gets (W[j] + W[k]) \bmod 2^w$	&
			$\begin{gathered}
				\exists z\colon (z = \var{W(j,t-1)} + \var{W(k,t-1)}) ~\land^{\strut}
				\\ 
				\big(
						\IfThenElse	{(z < \var{2^w})}
									{(\var{W(i,t)} = z)}\\
									{(\var{W(i,t)} = z - \var{2^w})}
					\big)
			\end{gathered}$				
\\[6ex]
 	$W[i] \gets (W[j] - W[k]) \bmod 2^w$	&
			$\begin{gathered}
				\exists z\colon (z = \var{W(j,t-1)} - \var{W(k,t-1)}) ~\land
				\\ \big(
						\IfThenElse	{(z \ge 0)}
									{(\var{W(i,t)} = z)}\\
									{(\var{W(i,t)} = z + \var{2^w})}
					\big)
			\end{gathered}$				
\\[4ex]
\hdashline
 	$W[i] \gets (W[j] \cdot W[k]) \bmod 2^w$	&
			$\begin{gathered}
				\exists u, l \colon
				\textsf{IsWord}(u) ~\land~ \textsf{IsWord}(l)^{\strut}
				~\land~ (\var{W(i,t)} = l)
				~\land~ \\
				(u\cdot \var{2^w} + l = \var{W(j,t-1)} \cdot \var{W(k,t-1)}) \\
			\end{gathered}$				
\\[4ex]
 	$W[i] \gets \floor{W[j] \cdot W[k] / 2^w}$	&
			$\begin{gathered}
				\exists u, l \colon
				\textsf{IsWord}(u) ~\land~ \textsf{IsWord}(l)
				~\land~ (\var{W(i,t)} = u)
				~\land~ \\
				(u\cdot \var{2^w} + l = \var{W(j,t-1)} \cdot \var{W(k,t-1)}) \\
			\end{gathered}$				
\\[2ex]
\hdashline
 	$W[i] \gets W[j] \bmod W[k]$	&
			$\begin{gathered}
				\exists q, r \colon
				\textsf{IsWord}(q) ~\land~ \textsf{IsWord}(r)
				~\land~ (\var{W(i,t)} = r)
				~\land~^{\strut} \\
				(r +  \var{W(k,t-1)} \cdot q = \var{W(j,t-1)} )
				\\
				~\land~ (r < \var{W(k,t-1)})
			\end{gathered}$				
\\[6ex]
 	$W[i] \gets \floor{W[j] / W[k]}$	&
			$\begin{gathered}
				\exists q, r \colon
				\textsf{IsWord}(q) ~\land~ \textsf{IsWord}(r)
				~\land~ (\var{W(i,t)} = q)
				~\land~ \\
				(r + \var{W(k,t-1)} \cdot q = \var{W(j,t-1)} )
				\\
				~\land~ (r < \var{W(k,t-1)})
			\end{gathered}$				
\\[4ex]
\hdashline \\[-2ex] 
 	$W[i] \gets W[j] \mathbin\uparrow W[k]$	&
			$\begin{gathered}	
				\textsf{IsWord}(\var{W(i,t)}) \land
					\textsf{IsWord}(\var{W(j,t-1)}) 
					\land\\
					\textsf{IsWord}(\var{W(k,t-1)}) \land^{\strut}{}
				\\[-1ex]
				\displaystyle
						\bigwedge_{b=0}^{w-1} \big(
							\var{W(i,t)}_b = 1 - \var{W(j,t-1)}_b \cdot \var{W(k,t-1)}_b
						\big)
			\end{gathered}$
\\[6ex]
\hline
\end{tabular}
\caption{$^{\strut}$ Encoding word arithmetic and boolean instructions as formulae, where “$\IfThenElse{A}{B}{C}$” is shorthand for $(A\land B) \lor (\lnot A\land C)$.}
\label{T:ram-encoding-word-stuff}
\end{table}

% \newpage
\paragraph{Summary.}
Our final ETR formula $\Phi(I)$ has the form 
\[
	\exists \text{[variables]}  \colon
	\texttt{PowersOf2}
	\land
	\texttt{FixInput}
	\land
	\texttt{WordsAreWords}
	\land
	\texttt{Execute}
	\land
	(\var{pc(T)} = 0)
\]
Now suppose $I$ is a yes-instance of $\QQ$.  If we set the initial register variables to reflect the input $(\Rcert, I\circ \Zcert)$ for some certificate $(\Rcert, \Zcert)$, then \texttt{Execute}  forces the final program counter $\var{pc(T)}$ to $0$, at the time step when $A$ accepts $(\Rcert, I\circ \Zcert)$).  It follows that $\Phi(I)$ is true.

On the other hand, if $I$ is a no-instance of $\QQ$, then no matter how we instantiate the remaining initial register variables, the \texttt{Execute} subexpression will include the contradiction $(0=1)$ at the time step when $A$ rejects.  It follows that $\Phi(I)$ is false.

Altogether, $\Phi(I)$ has $O(2^w (T + w) + T L w)$ existentially quantified variables and total length $O(2^w TL) = O(n^{2c^2})$, which is polynomial in $n$.  
(Recall that~$L$ denotes the number of instruction lines in the program of~$A$.)
Said differently, the length of $\Phi(I)$ is at most a constant times the \emph{square} of the running time of  $A$ on input $(\Rcert, I\circ \Zcert)$, where $\Rcert$ and $\Zcert$ are certificate vectors of maximum possible length.

We can easily construct $\Phi(I)$ in polynomial-time \emph{on the word RAM} by brute force.  We emphasize that constructing $\Phi(I)$ requires no manipulation of real numbers, only of symbols that represent existentially quantified real variables.
\end{proof}

\subsection{Examples}

To illustrate the usefulness of \Cref{thm:Cook-Levin}, we give simple proofs that four example problems are in \ER.  
Recall the problems as stated in \Cref{cor:ApplicationLevin}.
\begin{enumerate}
    \item The art gallery problem
    \item Optimal straight curve straightening problem
    \item Geometric packing
    \item Optimal unknotted extension problem.
\end{enumerate}
For two of these problems, membership in \ER was already known \cite{ARTETR,erickson2019optimal}; however, our proofs are significantly shorter and follow from known standard algorithms.  
We introduce the fourth problem specifically to illustrate the mixture of real and discrete non-determinism permitted by our technique.

\begin{proof}[Proof of \Cref{cor:ApplicationLevin}.]
Recall that the input to the art gallery problem is a polygon $P$ with rational coordinates and an integer $k$; the problem asks whether there is a set $G$ of $k$ guard points in the interior of $P$ such that every point in $P$ is visible from at least one point in $G$.  To verify a yes-instance, it suffices to guess the locations of the guards (using $2k$ real registers), compute the visibility polygon of each guard in $O(n\log n)$ time~\cite{heffernan1995optimal}, compute the union of these $k$ visibility polygons in $O(n^2k^2)$ time, and finally verify that the union is equal to $P$.  We can safely assume $k<n$, since otherwise the polygon is trivially guardable, so the verification algorithm runs in polynomial-time.

The optimal curve-straightening problem was introduced by the first author \cite{erickson2019optimal}.  The input consists of an integer $k$ and a suitable abstract representation of a closed non-simple curve $\gamma$ in the plane with $n$ self-intersections; the problem asks whether there is a $k$-vertex polygon $P$ that is isotopic to $\gamma$, 
meaning that the image graphs of $P$ and $\gamma$ are isomorphic as plane graphs.  
To verify a yes-instance of this problem, it suffices to guess the vertices of the $k$-gon $P$ (using $2k$ real registers), compute the image graph of $P$ in $O((n+k)\log n)$ time using a standard sweep-line algorithm~\cite{bo-arcgi-79}, and then verify by brute force that $P$ and $\gamma$ have identical crossing patterns.  
Again, we can safely assume that $k = O(n)$, since otherwise the curve is trivially straightenable, so the verification algorithm runs in polynomial-time.

Recall that the input for geometric packing is a container, the corresponding pieces and a number~$k$ of how many pieces we wish to pack. 
The real witness is a description of the corresponding movement of each piece.
The verification algorithm places all pieces and checks that there are no intersections between pieces and all pieces are contained in the container.
This can be done with some standard sweep-line algorithm~\cite{bentley1979algorithms}.

Finally, the input to the optimal unknotted extension problem consists of a polygonal path~$P$ in $\R^3$ with integer vertex coordinates, along with an integer $k$; the problem asks whether $P$ can be extended to an \emph{unknotted} closed polygonal curve in $\R^3$ with at most $k$ additional vertices.  Like the two previous problems, this problem is trivial unless $k<n$.  To verify an yes-instance of this problem, it suffices to guess the coordinates of $k$ new vertices (using $3k$ real registers), and then check that the resulting closed polygonal curve is unknotted in \emph{nondeterministic} polynomial-time (using a polynomial number of additional word registers), either using the normal-surface algorithm of Hass \emph{et al.}~\cite{hlp-ccklp-99}, or by projecting to a two-dimensional knot diagram and guessing and executing an unknotting sequence of Reidemeister moves~\cite{l-pubrm-15}.
\end{proof}

%%%%%%%%%%%%%%%%%%%%%%%%%%%%%%%%%%%%%%%%%
\section{Polynomials Hitting Cubes}
\label{sec:HittingCubes}
%%%%%%%%%%%%%%%%%%%%%%%%%%%%%%%%%%%%%%%%%
%
In the first part of this section, 
we bound  from above the number of unit cubes that a $d$-variate 
polynomial $p$ 
of bounded degree $\Delta$ can intersect in~$C(d, k)$.
In the second part of this section we finish the proof of \Cref{thm:Smooth-Real-RAM} by applying a well-known lemma for the sum of probabilities. 

Following Cox \etal~\cite{cox2006using} we define a $d$-variate 
polynomial $p$ in $x_1, \ldots, x_d$ with real coefficients
as a finite linear combination of monomials with real 
coefficients. Let  
$p \in \R[x_1, \ldots, x_d]$ denote the set of such polynomials.

We denote by $V(p) := \{ x \in \R^d : p(x) = 0 \}$ 
the \emph{variety} of $p$. 
For any subset $S \subset R^d$, we say that $p$ intersects
 $S$ if $S \cap V(p) \neq \emptyset$. 
 Given a set of polynomials $p_1,\ldots,p_k$,
 we denote their variety as
 $V(p_1,\ldots,p_k) = \bigcap_{i=1,\ldots,k} V(p_i)$.
For any expression $f$ which defines a function,
 we also use the notation $V(f) = \{x: f(x) = 0\}$, although
 it is not necessarily a variety.
 We say $f$ \emph{intersects} a set $S$, if 
 $V(f) \cap S \neq \emptyset$.
We need the notion of the \emph{dimension}
of a variety. 
We assume that most readers have some intuitive understanding,
which is sufficient to follow the arguments.
It is out of scope to define this formally in this
paper, so we refer to the book by Basu, Pollack 
and Roy~\cite[Chapter 5]{basu2006algorithms}.
Specifically, \Cref{lem:HyperplaneRestriction}
has to be
taken for granted.
Given a polynomial $p \in \R[x_1,\ldots,x_d]$, the linear polynomial $\ell\in \R[x_1,\ldots,x_d]$ 
is a factor of $p$, if there exists some 
$q\in \R[x_1,\ldots,x_d]$ such that $\ell \cdot q = p$.

\HittingCubes*
 
Our proof gives a slightly stronger, but more complicated upper bound. 
The proof idea is to consider 
the intersection between a cube $C_z \in C(d, k)$ and the polynomial $p$. Then either a connected 
component of $V(p)$ 
is contained in $C_z$ or $V(p)$ must intersect one of 
the $(d-1)$-dimensional facets of $C_z$.
In order to estimate how often the first situation 
can occur, we use a 
famous theorem by Oleinik-Petrovski/Thom/Milnor, 
in a slightly weaker form. 
See Basu, Pollack and Roy~\cite[Chapter 7]{basu2006algorithms}
for historic remarks and related results.

\begin{theorem}[Milnor~\cite{basu2006algorithms}]
\label{thm:Milnor}
    Given a set of $d$-variate polynomials $q_0,\ldots,q_s$ 
    with maximal degree $\Delta$.
    Then the variety $V(q_0,\ldots,q_s)$ 
    has at most $(2\Delta)^{d}$ connected components.
\end{theorem}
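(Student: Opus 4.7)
The plan is to reduce the bound on $V(q_0,\ldots,q_s)$ to counting critical points of a generic linear functional on a smooth hypersurface, following the classical Oleinik-Petrovski/Thom/Milnor strategy.

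First, I would collapse the system by setting $q := \sum_{i=0}^{s} q_i^2$. Since $q$ is a sum of squares of real polynomials, $q(x) = 0$ iff every $q_i(x) = 0$, so $V(q) = V(q_0,\ldots,q_s)$, and $\deg(q) \leq 2\Delta$. Thus it suffices to bound the number of connected components of the real zero set in $\R^d$ of a single polynomial of degree $D \leq 2\Delta$.

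Second, I would pass to a smooth, compact setting. Intersect with a ball $B_R$ of sufficiently large radius, so that every bounded component of $V(q)$ lies in the interior of $B_R$ and $V(q)\cap\partial B_R$ is transverse (this is achievable for a.e.\ $R$). Then perturb $q$ to $\tilde q := q - \eps$ for a generic small $\eps > 0$. Since $q \geq 0$, Sard's theorem applied to $q$ gives that for almost every small $\eps$ the level set $V(\tilde q)$ is a smooth hypersurface. A semicontinuity/implicit-function-type argument (as $\eps \to 0^+$) shows that distinct connected components of $V(q) \cap B_R$ give rise to distinct components of $V(\tilde q) \cap B_R$, so the count can only grow under perturbation.

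Third, I would apply Morse theory. For a generic linear functional $\ell$, the restriction $\ell|_{V(\tilde q)}$ is a Morse function, and since $V(\tilde q)\cap B_R$ is compact, $\ell$ attains a maximum on each connected component. Hence the number of components is bounded by the number of critical points of $\ell|_{V(\tilde q)}$. After eliminating the Lagrange multiplier, these critical points are the real solutions of a polynomial system of $d$ equations in $d$ variables, each of degree at most $2\Delta$. Bezout's theorem then bounds the number of complex (hence real) solutions by $(2\Delta)^d$.

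The main obstacle is making the perturbation step rigorous: one must show that truncating to $B_R$ and replacing $q$ by $q-\eps$ does not merge distinct components of $V(q)$, and one must guarantee that the resulting critical-point system is zero-dimensional under a generic choice of $\ell$ so that Bezout applies cleanly. This is exactly the delicate part of the proof in Basu-Pollack-Roy, which uses semialgebraic tools such as Thom's lemma and generic triviality of perturbations; once these are in place, the final count by Bezout is routine.
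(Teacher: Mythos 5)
The paper does not prove the stated theorem; it cites it as Theorem 7.23/7.25 of Basu, Pollack, and Roy~\cite{basu2006algorithms}, attributing the result to Oleinik--Petrovski/Thom/Milnor, so there is no in-paper proof to compare against. Your sketch follows the same classical critical-point strategy, and the arithmetic checks out: with $q = \sum_i q_i^2$ of degree at most $2\Delta$, the Lagrange system consists of $\tilde q = 0$ (degree $2\Delta$) together with $d-1$ gradient equations of degree $2\Delta - 1$, so Bezout gives at most $2\Delta(2\Delta - 1)^{d-1} \leq (2\Delta)^d$ solutions (you round all degrees up to $2\Delta$, which is looser but yields the same stated bound).

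The obstacle you flag is real, but I would sharpen where it bites. The difficulty is not mainly that passing to $q - \eps$ could merge components --- near any fixed compact region that is controlled by taking $\eps$ small --- but that $V(q)$ may have unbounded components. Once you cut with $B_R$, the restriction of $\ell$ to $V(\tilde q) \cap B_R$ can achieve its maximum on $\partial B_R$, and such boundary extrema are not solutions of your unconstrained Lagrange system, so they escape the Bezout count. This is precisely the step where the classical proofs first compactify the variety (Milnor's original argument works on a sphere via homogenization; Basu--Pollack--Roy use a related compactification) so that the smooth perturbed hypersurface is compact without boundary and every connected component necessarily contains an interior critical point of $\ell$. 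Your remaining worry about zero-dimensionality of the critical system for generic $\ell$ and $\eps$ is also legitimate but is handled by standard semialgebraic genericity arguments. With the compactification step supplied, your outline is the standard proof.
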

We use Milnor's theorem later for more than one polynomial.

We also need the following folklore lemma.
For more background on polynomials, see the book from Cox, Little, O'Shea~\cite{coxIdeals}.
Specifically Hilbert's Nullstellensatz, which can be found
as Theorem~$4.77$ in the book by Basu, Pollack 
and Roy~\cite{basu2006algorithms} is important.
\begin{lemma}[folklore]
\label{lem:HyperplaneRestriction}
    Let $p\in \R [x_1,\ldots,x_d]$ 
    be a $d$-variate 
    polynomial and 
    $H = \{x\in \R^d : \ell(x) = 0\}$  a $(d-1)$-dimensional
    hyperplane.
    Then $V(p)\cap H$ is the variety of a $(d-1)$-variate
    polynomial or $\ell$ is a polynomial factor of $p$.
\end{lemma}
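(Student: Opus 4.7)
The plan is to reduce the general case to the case where $H$ is the coordinate hyperplane $\{x_d = 0\}$, and then derive the dichotomy from a one-step polynomial division in the single variable $x_d$. First I would pick an affine isomorphism $\tau : \R^d \to \R^d$ that sends $\{x_d = 0\}$ onto $H$. Setting $\tilde p := p \circ \tau$, this map preserves total degree and satisfies $\tau(V(\tilde p) \cap \{x_d = 0\}) = V(p) \cap H$. Moreover, $\ell \circ \tau$ is a linear polynomial vanishing exactly on $\{x_d = 0\}$, so it is a nonzero scalar multiple of $x_d$, and $x_d \mid \tilde p$ iff $\ell \mid p$. Thus it suffices to prove the lemma for $\tilde p$ and the coordinate hyperplane.

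Next, I would view $\tilde p$ as a univariate polynomial in $x_d$ with coefficients in $\R[x_1, \ldots, x_{d-1}]$, and write
\[
    \tilde p(x_1, \ldots, x_d) = \sum_{i=0}^{D} p_i(x_1, \ldots, x_{d-1})\, x_d^i .
\]
The restriction of $\tilde p$ to $\{x_d = 0\}$ is simply $p_0$, a polynomial in the $d-1$ variables $x_1, \ldots, x_{d-1}$. A clean case split now gives the lemma. If $p_0 \not\equiv 0$, then identifying the hyperplane $\{x_d = 0\}$ with $\R^{d-1}$ yields $V(\tilde p) \cap \{x_d = 0\} = V(p_0)$, so after transporting back via $\tau$ we obtain that $V(p) \cap H$ is the variety of a $(d-1)$-variate polynomial. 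If instead $p_0 \equiv 0$, then $\tilde p = x_d \cdot \sum_{i \ge 1} p_i\, x_d^{i-1}$, so $x_d$ divides $\tilde p$ in $\R[x_1, \ldots, x_d]$ and, by the previous paragraph, $\ell$ is a polynomial factor of $p$.

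There is essentially no hard step here: the substitution $x_d = 0$ is performed at the level of the polynomial ring, not at the level of zero loci, so no subtlety about real versus complex zeros intervenes; and factoring out $x_d$ when $p_0$ vanishes identically is an immediate instance of the Euclidean algorithm in $(\R[x_1, \ldots, x_{d-1}])[x_d]$. The only minor obstacle is a bookkeeping one, namely verifying that the affine change $\tau$ preserves the degree and the divisibility statement, which is standard. Notably, despite the preceding paragraph mentioning Hilbert's Nullstellensatz, it is not needed for this elementary restriction lemma; the argument is pure polynomial algebra.
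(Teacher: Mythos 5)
The paper never actually proves this lemma: it labels it ``folklore,'' points vaguely to Cox--Little--O'Shea and to Hilbert's Nullstellensatz in Basu--Pollack--Roy, and explicitly says the lemma ``has to be taken for granted.'' Your proposal therefore supplies something the paper omits. The argument you give is correct and is the standard elementary one: conjugate by an affine automorphism $\tau$ of $\R^d$ to reduce to $H = \{x_d = 0\}$, expand $\tilde p = p\circ\tau$ as $\sum_i p_i(x_1,\dots,x_{d-1})\,x_d^i$, and observe that the restriction of $\tilde p$ to $\{x_d = 0\}$ is exactly $p_0$; if $p_0 \not\equiv 0$ this exhibits $V(p)\cap H$ as the variety of a nonzero $(d-1)$-variate polynomial, and if $p_0\equiv 0$ then $x_d \mid \tilde p$, hence $\ell\mid p$. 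Your closing remark that the Nullstellensatz is not actually used is accurate; the dichotomy is pure one-step division in $\bigl(\R[x_1,\dots,x_{d-1}]\bigr)[x_d]$ and works over any field. One small point worth stating explicitly if you write this up: precomposition with an affine isomorphism of $\R^d$ induces a ring automorphism of $\R[x_1,\dots,x_d]$, which in one stroke justifies both that divisibility by $\ell$ transfers to divisibility by $x_d$ and that total degree is preserved, rather than arguing these two facts separately.
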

In our applications, $\ell$ will be of the form $x_i =a$,
for some constant $a$.

\begin{proof}[Proof of \Cref{thm:HittingCubes}]
Note first that if $p$ has a linear factor $\ell$,
we decompose $p$ into $p = q\cdot \ell$ and
apply the following for $q$ and $\ell$ separately.
This works as the maximum degree of $q$ drops by one
and the maximum degree of $\ell$ equals $1$.
Thus for the rest of the proof, we assume that
$p$ has no linear factors, which in particular,
makes it possible to apply \Cref{lem:HyperplaneRestriction}.

Let us define $f( d)$ as 
the maximum
 number of unit cubes of $C(d,k)$ that can be intersected by 
 a $d$-variate polynomial $p\neq 0$ with maximal degree $\Delta$. 
We will first show that 
\begin{equation}
    f( 1) \le \Delta.    
    \label{eqn:CubeStart}
\end{equation}

Then we will show in a similar manner 
for every  $d$, $k$ and $\Delta$ holds that 
\begin{equation}
    f( d) \leq 2  f( d-1) \cdot d(k+1) + (2\Delta)^{d}.    
    \label{eqn:CubeRecursion}
\end{equation}
Solving the recursion then gives the upper bound of the theorem as follows: 
first, we show by induction that \Cref{eqn:CubeStart} 
and \Cref{eqn:CubeRecursion} 
imply $f(d) \leq (k+1)^{d-1}\Delta 2^d(d+1)!$.
\Cref{eqn:CubeStart} establishes the induction basis.
Using $2\Delta\leq k$, the induction step goes as follows:
\begin{align*}
    f(d) &\leq 2 f( d-1) \cdot d(k+1) + (2\Delta)^{d}\\
     &\leq 2 f( d-1) \cdot d(k+1) + (2\Delta)k^{d-1}\\
     &\leq 2 (k+1)^{d-2}\Delta 2^{d-1}(d)! \cdot d(k+1) + (2\Delta)k^{d-1}\\
     &= (k+1)^{d-1}(2 \Delta)  2^{d-1}(d)! \cdot d + (2\Delta)k^{d-1}\\
     &= (k+1)^{d-1}(2 \Delta) ( 2^{d-1}(d)! \cdot d  + 1)\\
     &\leq (k+1)^{d-1}\Delta 2^{d}(d)! \cdot (d  + 1)\\
     &= (k+1)^{d-1}\Delta 2^{d}(d+1)!
\end{align*}
Now using $k\geq 2\Delta + 2 \geq 3$, we can deduce that 
\[  (k+1)^{d-1} = \frac{(k+1)^{d-1}}{k^{d-1}} \cdot k^{d-1}
      \leq (1.5)^{d} k^{d-1}\]
This implies that 
$f(d) \leq k^{d-1}\Delta 3^{d}(d+1)!$.
It remains to show the validity of \Cref{eqn:CubeStart,eqn:CubeRecursion}.

If $p$ is a univariate polynomial of degree $\Delta$ then 
its variety  $V(p)$ is a set of at most $\Delta$ points and 
therefore $p$ can intersect  at most $\Delta$ disjoint 
unit intervals and 
this implies \Cref{eqn:CubeStart}.

\begin{figure}[bthp]
    \centering
    \includegraphics[page = 2]{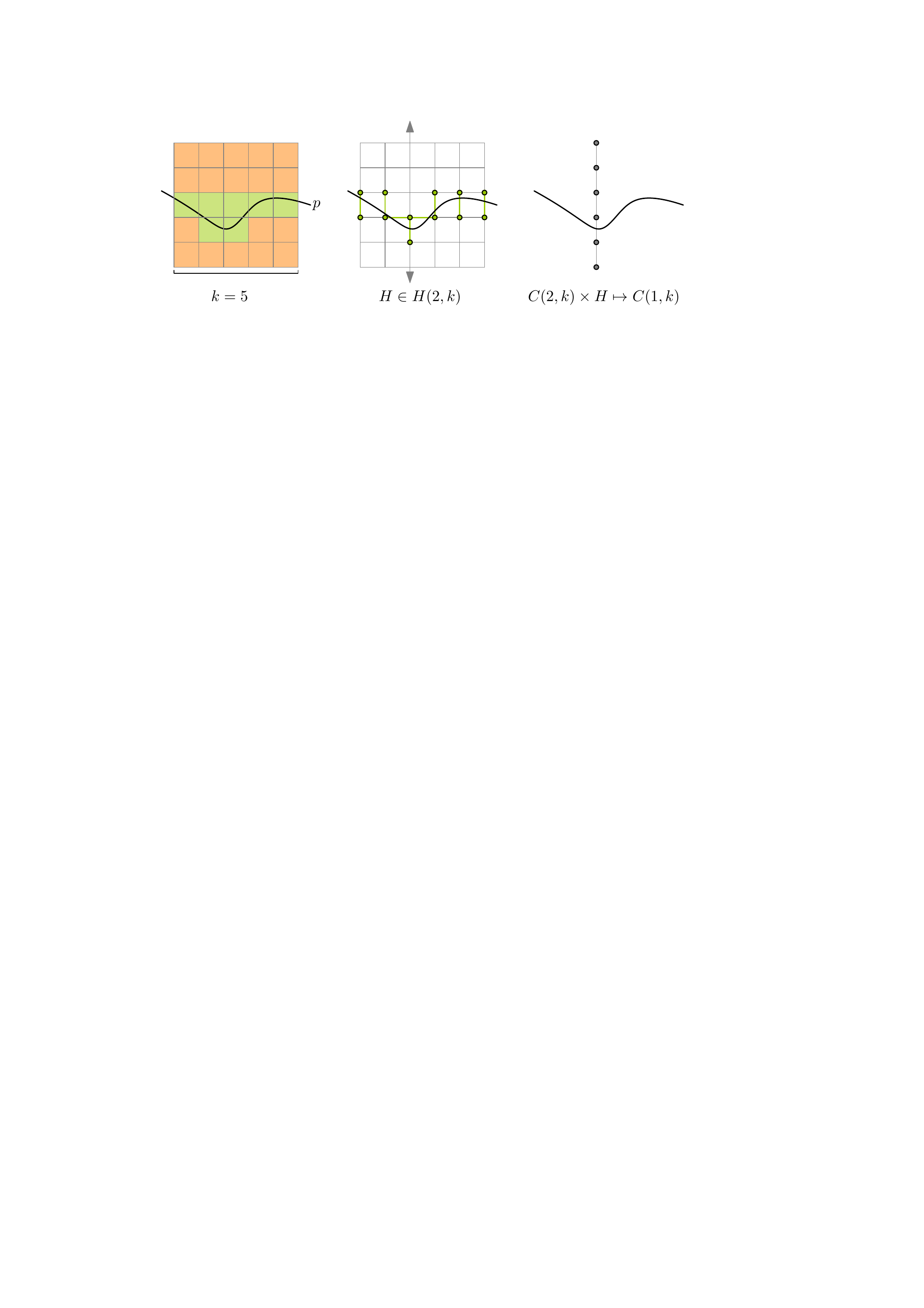}
    \caption{(Left) The polynomial $p$ whose variety intersects some cubes of $C(2, k)$. (Middle) The set $H(2,k)$ of axis-parallel lines and the $1$-dimensional facets of $C(2, k)$ that are intersected by $p$. (Right) The intersection of $C(2, k)$ 
    with a line $H$ gives $C(1, k)$. }
    \label{fig:IntersectGrid}
\end{figure}

To show the correctness of \Cref{eqn:CubeRecursion},
we refer to \Cref{fig:IntersectGrid}. 
Note that there are $d(k+1)$ axis-parallel 
$(d-1)$-dimensional hyperplanes 
with integer coordinates that intersect $C(d,k)$. We denote 
them by $H(d,k)$. 
Formally, we define the hyperplane 
$h(i,a) = \{\,x\in\R^{d} :  x_i = a \, \}$.
This leads to the definition:
$H(d,k) = \{ h(i,a): (i,a)\in [d]\times[k]\}$.
By the comment at the beginning of the proof,
we can assume that $p$ has no linear factors and
we apply \Cref{lem:HyperplaneRestriction}
on $p$ and each $h\in H(d,k)$.
For all cubes in $C(d,k)$, all facets of such a cube 
are contained inside
a $(d-1)$-dimensional hyperplane $h\in H(d,k)$.
By Milnor's theorem, there are at most $(2 \Delta)^d$ 
cubes in $C(d, k)$ which are intersected by $p$ but 
whose boundary is not intersected by $p$.
For any other cube in $C(d, k)$ that is intersected by the variety of $p$, 
the variety of $p$ must intersect a $(d-1)$-dimensional 
facet of that cube. 

Consider one of the $d(k+1)$ hyperplanes $h \in H(d,k)$. 
The set $I = h\cap C(d, k)$ is,
up to affine coordinate transformations, equivalent 
to $C(d-1, k)$. 
Furthermore, by \Cref{lem:HyperplaneRestriction},
$V(p)$ restricted to $h$ is the variety of a
$(d-1)$-dimensional polynom.
Thus by definition,
we know that $p$ intersects at most $f(d-1)$ 
cubes in $I$. 
Each of these $(d-1)$-dimensional cubes can 
coincide with a facet of at most two cubes in $C(d, k)$. 
It follows that $f( d)$ is bound  from above by 
$(2\Delta)^d + 2 \cdot f( d-1) \cdot k(d+1)$. 
This shows \Cref{eqn:CubeRecursion} and finishes the proof.
\end{proof}

%%%%%%%%%%%%%%%%%%%%%%%%%%%%%%%%%%%%%%%%%%%%%%
\section{Smoothing the \realRam}
\label{sec:Smoothing}
%%%%%%%%%%%%%%%%%%%%%%%%%%%%%%%%%%%%%%%%%%%%%%

This section is dedicated to prove the following theorem, which we first recall.
\RealSmooth*

For the \SmoothAna, we \emph{snap} our perturbed real-valued 
input $g_x = (g+x)$ onto a point with limited precision $g'$, the
closest point in $\width\Z^d \cap [0,1]^d$. Ergo, for all $y \in \width\Z^d$, all points in the \voronoicell of~$y$ 
are snapped to $y$. The \voronoicell{}s 
of all these cells (apart from those belonging to grid points on the boundary of $[0,1]^d$) are just $d$-dimensional cubes and we 
shall denote a cube centered at $y$ by $C(y)$.
We denote $\Gamma_\width = \{C(y) : y \in \width\Z^d \cap [0,1]^d \} \cong  C(d,  1 /  \width).$

The perturbed point $g_x$ must lie within some cube $C(y)$ in $\Gamma_\width$.
However the choice of original $g$ and $\delta$ limits the cubes in $\Gamma_\width$ where  $g_x$  can lie in. Specifically (\Cref{fig:subset}) the range $R$ of possible locations for $g_x$ is a cube of width 
$\delta$ centered around $g$. 
The boundary of this range cube $R$ likely does not align with the boundaries of grid cells in $\Gamma_w$. Therefore we make a distinction between two types of cells: the range cube $R$ must contain a cube of cells which is equivalent to $C(d, \lfloor \delta / \width \rfloor)$ and this sub-cube of $R$ we shall denote 
by $\Gamma_\width(g)$. All others cells which are intersected by $R$ but not contained in $\Gamma_\width(g)$ we call the \emph{perimeter cells}.
We can use this observation
together with \Cref{thm:HittingCubes} to estimate the probability
that $\sign(p(g_x)) \neq \sign(p(g'))$:

        \begin{figure}[htbp]
    \centering
    \includegraphics{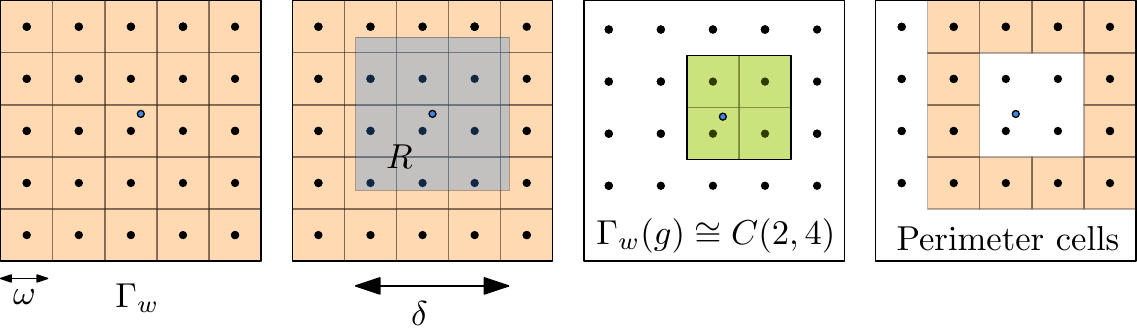}
    \caption{We depict the range $R$ of width $\delta$ around a point $g$. The block of green cells in $\Gamma_w(g)$ and the remaining cells that are intersected by $R$ are the perimeter cells.}
    \label{fig:subset}
    \vspace{-0.3cm}
\end{figure}

\begin{lemma}
    \label{lem:PolynomSnap}
    Let $p,q$ be two $d$-variate polynomials with
    maximum degree $\Delta$.
    Let $g \in \R^d$ be fixed
    and $x\in \Omega = [-\delta/2,\delta/2]^d$ chosen uniformly at random.
    Assume $g_x = g+x$ is snapped to a point $g'\in \width\Z^d \cap [0,1]^d$.
    Then $\sign\left(\frac{p(g_x)}{q(g_x)} \right) \neq \sign\left(\frac{p(g')}{q(g')}\right)$ with probability at most: 
    \[
    (4 \, \width \Delta 3^{d}(d+1)!) / \delta. \]
\end{lemma}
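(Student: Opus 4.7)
The plan is to reduce the event ``$\sign(p/q)$ flips when we snap'' to a geometric event about which \voronoicell of the grid $\Gamma_\width$ contains the perturbed point $g_x$, and then count the offending cells. Observe first that $\sign\!\bigl(p(g_x)/q(g_x)\bigr) \neq \sign\!\bigl(p(g')/q(g')\bigr)$ only if either $\sign(p(g_x)) \neq \sign(p(g'))$ or $\sign(q(g_x)) \neq \sign(q(g'))$. It therefore suffices to bound, for a single $d$-variate polynomial $r$ of degree at most $\Delta$, the probability that $\sign(r(g_x)) \neq \sign(r(g'))$, and then apply a union bound that produces the factor $2$ in front.

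Fix such an $r$. Since $g'$ is the grid point whose Voronoi cube $C(g')\in\Gamma_\width$ contains $g_x$, and since $r$ has constant sign on any connected open region disjoint from $V(r)$, the signs can differ only if $V(r)$ intersects the closure of $C(g')$. Call such a cube \emph{bad}. Because $x$ is uniform on $\Omega=[-\delta/2,\delta/2]^d$, the point $g_x$ is uniform on the range cube $R$ of side $\delta$ centered at $g$. Hence
\[
\Pr\bigl[\sign(r(g_x))\neq\sign(r(g'))\bigr] \le \frac{\text{(total volume of bad cubes meeting }R)}{\delta^d}.
\]
I would now split the cubes meeting $R$ into the aligned sub-grid $\Gamma_\width(g)\cong C(d,k)$ with $k=\lfloor \delta/\width\rfloor$, plus the \emph{perimeter cells}, those cubes of $\Gamma_\width$ that meet $R$ but are not fully contained in it.

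For the interior, \Cref{thm:HittingCubes} (after the affine transformation that takes $\Gamma_\width(g)$ to $C(d,k)$) gives at most $k^{d-1}\Delta\,3^d(d+1)!$ bad cubes; their total volume is at most $k^{d-1}\Delta\,3^d(d+1)!\cdot \width^d$. Using $k\width \le \delta$, the contribution to the probability is at most $\width\,\Delta\,3^d(d+1)!/\delta$. For the perimeter, the simplest crude bound declares \emph{every} perimeter cube bad; there are at most $O(dk^{d-1})$ of them, contributing a term of the same order $O(\width/\delta)$, which can be absorbed into a doubling of the interior bound (the small case $k<2\Delta+2$, where \Cref{thm:HittingCubes} does not apply directly, is handled separately by noting that then $\width/\delta$ is already large enough that the claimed bound is vacuous). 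This yields $\Pr[\sign(r(g_x))\neq\sign(r(g'))] \le 2\width\Delta\,3^d(d+1)!/\delta$, and the union bound over $r\in\{p,q\}$ gives the factor $4$ in the statement.

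The main technical point is the interplay between the perimeter and the Hitting Cubes bound: the theorem is stated for the grid $C(d,k)$ with $k\ge 2\Delta+2$, so I need to either invoke it on the aligned sub-grid and separately absorb the perimeter cells, or extend $R$ to a slightly larger cube aligned with $\Gamma_\width$ and apply the theorem to the enlargement. Either way the asymptotic constant only changes by a small multiplicative factor, giving the claimed bound.
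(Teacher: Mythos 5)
Your proposal is correct and takes essentially the same route as the paper: a union bound over $p$ and $q$; reduction of the sign-flip event to the variety $V(r)$ meeting the Voronoi cube $C(g')$; decomposition of the range cube into the aligned sub-grid $\Gamma_\width(g)\cong C(d,\lfloor\delta/\width\rfloor)$ plus perimeter cells; and bounding the interior via \Cref{thm:HittingCubes} while crudely declaring every perimeter cell bad. The only point where you go beyond the paper's write-up is in explicitly flagging the case $k < 2\Delta+2$ (where \Cref{thm:HittingCubes} does not apply) and noting the bound is then vacuous, a detail the paper silently glosses over.
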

\begin{proof}
    It suffices to show that
    $\sign(p(g_x)) \neq \sign(p(g'))$ with probability at most:
    \[ \frac{2\, \width \Delta 3^{d}(d+1)!}{\delta}.\]
    The statement for $q$ is equivalent and the union bound on these separate probabilities then bounds  from above  the probability that their division does not have the same sign.
    For any polynomial $p$ and points $x, z \in \R^d$ with $p(x) <0$ and $p(z) >0$, there must be a point $y \in \line(x,z)$ for which $p(y) = 0$. 
    It follows that if a cube  $C \in \Gamma_\width$ does not intersect the variety of $p$, all points in $C$ either have a positive or negative evaluation under $p$. 
    Let $g'$ be the closest point to $g_x$ in $\omega \Z \cap [0,1]^d$ and let $C(g')$ be its \voronoicell.     If $C(g')$ does not intersect the variety of $p$ then $\sign(p(g_x)) = \sign(p(g'))$.
    Therefore we are interested in the probability that $g_x$ is contained in a Voronoi cell that is intersected by the variety of $p$. 
    
    As we discussed, the set of possible locations of $g_x$ is a cube which contains  $\Gamma_\width(g) \cong C(d, \lfloor \delta/\width \rfloor)$ together with a collection of perimeter cells. 
    
   We bound  from above the probability that $g_x$ lies within a cell that is 
    intersected by $p$
    in two steps: 
    (1) We  bound from above the number of perimeter cells and make the worst case assumption that they are all intersected by the variety of~$p$.
    (2)
    We bound  from above the number of cells of $\Gamma_\width(g)$ 
    that intersect the variety of $p$.
    These two numbers, divided by the total number of cells in 
    $\Gamma_\width(g)$ gives the upper bound we are looking for. 
    Note that the width of $\Gamma_\width(g)$ is equal to 
    $k = \lfloor \frac{\delta}{\width} \rfloor$ and that the perimeter of
    $\Gamma_\width(g)$ contains $2d \cdot k^{d-1}$ cells.  \Cref{thm:HittingCubes} gives an upper bound on the number of intersected cubes in $C(d, k)$~of:
    \[
    k^{d-1} \Delta 3^d (d+1)! \le 
    \left\lfloor \frac{\delta}{\width} \right\rfloor^{d-1} \Delta 3^d (d+1)!
    \]
    There are  $k = \lfloor \delta/\width \rfloor^d$ cubes 
    in $\Gamma_\width(a)$ and it follows that:
    \[
    \Pr\left(\sign(p(g_x)) \neq\sign(p(g'))\right) 
    \le \frac{(\lfloor \frac{\delta}{\width} \rfloor)^{d-1} \Delta 3^d (d+1)! + 2d(\lfloor \frac{\delta}{\width} \rfloor)^{d-1} }{\lfloor \delta/\width \rfloor^d} \le \frac{2 \width \Delta 3^d (d+1)!}{\delta}
    \]
    Using this, in conjunction with the union bound over the signs of $p(g_x)$ and $q(g_x)$ respectively, finishes the proof.
\end{proof}

\Cref{lem:PolynomSnap} is used to bound  from above the probability that snapping the perturbed input $g_x$ changes a single real-valued comparison in $A$.
We can use the union bound to bound  from above  the probability 
that for the whole algorithm, any real-valued comparison for $g_x$ 
and $g'$ is different.

Recall that we denote by $C(n)$ the total number of possible
polynomials with which a given algorithm~$A$ performs a comparison operation.
As remarked in \Cref{sec:Smoothed-Recognition}, it holds that $C(n)$ is bounded  from above  by $2^{T(n)}$,
where $T(n)$ denotes the running time.
The theorem statement of \Cref{thm:Smooth-Real-RAM} assumes $C(n)$ to be polynomial
in the input-size.

\begin{lemma}[Snapping]
    \label{lem:GridSnap}
    Let $g \in [0,1]^n$ be the input of an algorithm $A$ that 
    makes at most $C(n)$ potential \realRam comparison operations.
    Let $x$ be a perturbation chosen uniformly at random 
    in $[-\delta/2, \delta/2]^n$ and let $g_x = g + x$ 
    be a perturbed instance of the input. 
    For all $\eps \in [0,1]$, if $g_x$ is snapped to a grid of width  
    \[\width \leq  \frac{\eps \delta}{3^d (d+1)! \Delta 4 C(n) },\]
    then the algorithm $A$ makes for $g_x$ and $g'$ the same decision at each comparison instruction with probability at least $1-\eps$.
\end{lemma}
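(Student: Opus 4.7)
The plan is to combine \Cref{lem:PolynomSnap} with a straightforward union bound over the set of \emph{all} polynomials that the algorithm $A$ could ever evaluate, and then argue that if every such polynomial has its sign preserved under snapping, the two executions on $g_x$ and $g'$ must coincide step-by-step.

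First I would fix the set $\mathcal{P}$ of all at most $C(n)$ pairs $(p_i,q_i)$ of $d$-variate polynomials of degree at most $\Delta$ whose quotients $p_i/q_i$ could appear in a real-valued comparison during some execution branch of $A$ on input of length~$n$. This set is well-defined by the \emph{few polynomials} framework and is independent of the actual input. For each single pair, \Cref{lem:PolynomSnap} gives
\[
    \Pr\!\left[\sign\!\left(\tfrac{p_i(g_x)}{q_i(g_x)}\right) \neq \sign\!\left(\tfrac{p_i(g')}{q_i(g')}\right)\right] \;\leq\; \frac{4\,\width\,\Delta\, 3^d (d+1)!}{\delta}.
\]
Next I would apply the union bound over all $C(n)$ such pairs, so that the probability that any polynomial in $\mathcal{P}$ changes sign under snapping is at most
\[
    C(n)\cdot \frac{4\,\width\,\Delta\, 3^d (d+1)!}{\delta}.
\]
Plugging in the hypothesis $\width \le \frac{\eps\,\delta}{3^d(d+1)!\,\Delta\, 4\, C(n)}$ makes this quantity at most $\eps$.

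It remains to promote the coincidence of signs into a coincidence of entire executions. I would do this by a simple induction on the time step $t$: conditioned on the program counters and register contents agreeing at step $t-1$ under inputs $g_x$ and $g'$, the instruction executed at step $t$ is the same for both inputs. If it is a word operation, determinism immediately gives identical results; if it is a real arithmetic assignment, both inputs produce the value of the same rational function of the original real input, so the registers remain consistent as rational functions (this is what allowed us to quantify over all of $\mathcal{P}$ in the first place); if it is a real-valued comparison, the invoked quotient $p_i/q_i$ lies in $\mathcal{P}$ and by assumption has the same sign on $g_x$ and $g'$, so the branch taken is identical. Hence in the high-probability event above the two executions are bit-for-bit identical.

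The only delicate point — and the step I expect to require the most care in the formal write-up — is the bookkeeping that justifies bounding $\mathcal{P}$ a priori by $C(n)$ rather than by the number of polynomials actually evaluated on the particular perturbed input; this is precisely what the \fewPolynomials assumption is there to handle, and it is what makes the counter-example from the \textrm{even-interval} problem fail.
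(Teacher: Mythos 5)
Your proposal is correct and follows essentially the same route as the paper: a union bound over the $C(n)$ potential polynomials, with \Cref{lem:PolynomSnap} supplying the single-polynomial bound, then substituting the hypothesis on $\width$. The induction on time steps that you add is a slight elaboration of something the paper states without proof (that $g_x$ and $g'$ are equivalent iff every $E_i$ fails to occur), but it does not change the argument.
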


\begin{proof}
    By $E_i$ for $i \in [C(n)]$, we denote the event that 
    in the $i$'th polynomial on the inputs $g_x$ and $g'$ 
    get a different outcome. 
    \Cref{lem:PolynomSnap} bounds   from above the probability of $E_i$ occurring by:
    \[
    \Pr(E_i) \le \frac{4\width \Delta 3^d (d+1)!}{\delta}.
    \]
    The probability that $g_x$ and $g'$ are not equivalent is equal to the probability that for at least one event $E_i$, the event occurs. 
    In other words:
    \[
    \Pr(g_x \textnormal{ and } g' \textnormal{ not equivalent}) 
    = \Pr\left(\cup_{i=1}^{C(n)} E_i \right) 
    \le C(n) \cdot 4\width \Delta 3^d (d+1) / \delta.
    \]
     
    In the antecedent, $Pr(g_x \textnormal{ and } g' \textnormal{ not equivalent}) < \eps$ is implied by the inequality \[ C(n) \cdot \frac{4\width \Delta 3^d (d+1)}{\delta} <\eps,\] as stated above.
\end{proof}

% Finally to bound the \emph{expected} \InputBitComp of $A$, we apply a folklore lemma about swapping the order of integration~\cite{Tonelli1909}.
% We refer to \Cref{sec:finishingproof} for details.
% We wish to note that \Cref{thm:Smooth-Real-RAM} only bounds 
% the \InputBitComp of the real-valued input of a \gvAlgo $A$ and not the word size needed to store any intermediate results. \Cref{lem:intermediate} shows then the smoothed \BitComp can be upper bound through the smoothed \InputBitComp if the \AlgDim and \AlgDeg of $A$ are constant. Through linearity of expectation we obtain that the smoothed \BitComp is upper bound by: $O(\BIT_{\textrm{smooth}}(\delta, n, A))$.

% \subsection{Finishing the proof of \Cref{thm:Smooth-Real-RAM}}
% \label{sec:finishingproof}

% We focus on finalising the argument for \Cref{thm:Smooth-Real-RAM}:

% \RealSmooth*

% At the end of \Cref{sub:Smoothed-Recognition} we presented \Cref{lem:GridSnap} which considers a \gvAlgo $A$, that makes $T(n)$ 
%     comparison operations and that has a real-valued input $g \in [0,1]$.
% We consider $x$ as perturbation chosen uniformly at random   in $[-\delta/2, \delta/2]$ and denote by $g_x = g + x$ the perturbed instance of the input.
% The lemma states that for all $\eps \in [0,1]$, if $g_x$ is snapped to a grid of width  
%     \[\width \leq  \frac{\eps \delta}{3^d (d+1)! \Delta 4 T(n) },\]
%     Then the algorithm $A$ makes for $g_x$ and $g'$ the same decision at each comparison instruction with probability at least $1-\eps$.
    
    Now we are ready to bound  from above  the \emph{expected} \InputBitComp of $A$. 
    By definition it holds that
    \[ \E(\, \BIT_{IN}(g_x, A) \, ) = \sum_{k=1}^\infty k \Pr(\, \BIT_{IN}(g_x, A) = k \, ) \]
    
    Recall the following well-known lemma from Tonelli regarding the sum of expectations.
% 
%  \begin{restatable}{lemma}{tonelli}
%     \label{lem:Tonelli}
    Given a function 
    $f: \Omega \rightarrow \{1,2,\ldots  \}$ 
    and assume that $\Pr(f(x)>b) = 0$.
    Then it holds that
    \[\E[f] = \sum_{z=1}^{b} z\Pr(f(x) = z)  \ = \ \sum_{z=1}^{b} \Pr(f(x)\geq z).\]
% \end{restatable}
 Using the lemma with $b\rightarrow \infty$, 
 the expected value of $\BIT_{IN}$ can be expressed as:
\[
\E(\, \BIT_{IN}(g_x, A) \, ) = \sum_{k=1}^\infty k \Pr(\, \BIT_{IN}(g_x, A) = k \, ) \  
= \sum_{k=1}^\infty \  \Pr(\, \BIT_{IN}(g_x, A) \geq k \, ) .
\]
We split the sum at a splitting point~$l$:
\[
\E(\, \BIT_{IN}(g_x, A) \, ) = \sum_{k=1}^{l} \Pr(\, \BIT_{IN}(g_x, A) \geq k \, ) \  +
\sum_{k=l+1}^{\infty} \Pr(\, \BIT_{IN}(g_x, A) \geq k \, ).
\]
Now we note that any probability is at most~$1$ therefore the left sum is at most $l$. 
Through applying \Cref{lem:GridSnap} we note that:
\[
\Pr(\BIT_{IN}(g_x, A) \geq k) = \Pr(\textnormal{GridWidth}(g_x) \geq 2^{-k}) \le 2^{-k} \left( \frac{3^d (d+1)! \Delta 4 C(n) }{\delta }\right),
\]
 
which in turn implies: 
\[
\E(\BIT_{IN}(g_x, A)) \le \sum_{k=1}^{l} 1 \  
+ 4 \left(\frac{3^d (d+1)! \Delta C(n) }{\delta}\right) 
\sum_{k=l+1}^{\infty} 2^{-k}.
\]
Observe that $\sum_{k=l+1}^{\infty} 2^{-k}  = 2^{-l}$. So if we choose 
$l = \lceil \log \frac{3^d  (d+1)! \Delta C(n) }{\delta} \rceil + 2$ we get:
\begin{align*}
\E(\BIT_{IN}(g_x, A)) &\le l +  4\left(\frac{3^d  (d+1)! \Delta C(n) }{\delta} \right) 2^{-l} \\
&\leq 
\left\lceil \log  \frac{3^d (d+1)!  \Delta C(n) }{\delta} \right\rceil
+ 2 +\\
& \vspace{1cm} +\left(\frac{3^d  (d+1)! \Delta C(n) }{\delta} \right) 
\left( \frac{\delta}{3^d  (d+1)! \Delta C(n) } \right) \\
&= 
\left\lceil \log \frac{3^d  (d+1)! \Delta C(n) }{\delta}  \right\rceil + 3 = 
 O\left( d \log \frac{d \Delta C(n) }{\delta}  \right) 
\end{align*}
Since this holds for arbitrary $g$ with $g_x = g + x$, this bounds  from above 
the \InputBitComp.
% We wish to note that 
% so far we have only upper bonded the 
% \InputBitComp and not the word size needed to store any intermediate results.
% \Cref{lem:intermediate} shows then the smoothed \BitComp can be upper bound through the smoothed \InputBitComp if the \AlgDim and \AlgDeg of $A$ are constant. 
% Through linearity of expectation we obtain that the smoothed \BitComp is upper bound by: 
% \[ \BIT_{\textrm{smooth}}(\delta, n, A) = O\left( d \log  \frac{d \Delta n}{\delta} \log c\right). \]
% 
This finishes the proof of \Cref{thm:Smooth-Real-RAM}.

%%%%%%%%%%%%%%%%%%%%%%%%%%%%%%%%%%%%%%%%%%%%%%%%%
\section{Smoothed Analysis of resource augmentation}
\label{sec:ResourceAugmentation}
%%%%%%%%%%%%%%%%%%%%%%%%%%%%%%%%%%%%%%%%%%%%%%%%%
This section is devoted to proving \Cref{thm:augmentation} and its corollaries.
We briefly recall that for any $x$, we consider the minimal word size required for each coordinate in $x$ before a verification algorithm $A$ can verify if $x \in \chi_I[\alpha]$ on the \wordRam . We denote by $\BIT(\chi_I[\alpha]) = \min \{ \BIT_{IN}(x, A) \mid f(x) = \Opt(\chi_I[\alpha]), x \in \chi_I[\alpha]  \}$ the minimal precision 
needed to express an optimal solution in the 
solution space $\chi_I[\alpha]$.
We suppress $I$ in the notation when it is clear from context.

\THMaugmentation*

\begin{proof}
    Let the input $I$ be fixed and let $\alpha$ 
    be the augmentation parameter chosen uniformly 
    at random within the perturbation range $[0, \delta]$. 

    Consider any value $\eps > 0$.
   The variable $\alpha$ is chosen uniformly at random 
   in the interval $(0,\delta]$, and $P$ is \breakproperty 
   which implies that the probability that the interval 
   $[\alpha - \eps, \alpha]$ contains a breakpoint is 
   bounded  from above  by $ \eps \,  N / \delta $ for some choice of $N = n^{O(1)}$. 
    Assume that the interval $[\alpha - \eps, \alpha]$ 
    does not contain a breakpoint then by definition 
    $\Opt(\chi[\alpha - \eps]) = \Opt(\chi[\alpha])$. 
    Let $x$ be an optimal solution in $\chi[\alpha - \eps]$. 
    As the problem $P$ is
    \bitaugmented and $x \in \chi[\alpha - \eps]$, there must be an 
    $x' \in \chi[\alpha - \eps + \eps] = \chi[\alpha]$ with 
    an \InputBitComp of at most $c \log(n / \eps)$ and $x'$ 
    is also optimal for $\chi[\alpha]$.
    It follows that for a random $\alpha \in (0, \delta]$, 
    the probability that there is \emph{no} optimal 
    solution in $\chi[\alpha]$ with an \InputBitComp of 
    $c \log(n / \eps)$ is bounded  from above  by:
    \begin{equation}
    \label{eq:lowerAugment}
         Pr\left(\BIT(\chi[\alpha]) \ge c \log(n / \eps) \right) \le \frac{ \eps \cdot N}{ \delta}.
    \end{equation}
    Note that \Cref{eq:lowerAugment} holds for every \eps.
    We use this probability to obtain an upper bound 
    on the expected \InputBitComp. Using the lemma by Tonelli,
    we note that for all positive integers $l$ holds that:
    \[
\E(\BIT(\chi[\alpha])) = \sum_{k=1}^{l} \Pr(\BIT(\chi[\alpha] \geq k)) \  + \ 
\sum_{k=l+1}^{\infty} \Pr(\BIT(\chi[\alpha] \geq k)) 
\]

Any probability is at most $1$ which means that the 
left sum is bound  from above by $l$. We bound   from above 
$\Pr(\BIT(\chi[\alpha]) \ge k)$ by equating 
$   k = c\log(n/ \eps_k) \Leftrightarrow \eps_k = n/2^{(k/c)}$ and 
applying \Cref{eq:lowerAugment}:

\[
 \ \sum_{k=l+1}^{\infty} \Pr(\BIT(\chi[\alpha] \geq k)) 
=  \sum_{k=l+1}^{\infty} \Pr(\BIT(\chi[\alpha] \geq c\log(n/ \eps_k))) \le  \sum_{k=l+1}^{\infty}   \frac{\eps_k \cdot N}{\delta} \]
% \[ =   \sum_{k=l+1}^{\infty}   \frac{nN}{2^{k/c}\delta} 
% %
% =   \frac{nN}{\delta} \ 
% \sum_{k=l+1}^{\infty}  \left(2^{1/c}\right)^{-k}  
% =   \frac{nN}{\delta} \ 2^{-(l+1)/c} \
% \sum_{k'=0}^{\infty}  \left(2^{-1/c}\right)^{k'}  
% \]

\[
 =   \frac{nN}{\delta}  2^{-(l+1)/c} 
\frac{1}{1-2^{-1/c}}  
 \le   2^{-(l+1)}  \frac{nN}{\delta(1-2^{-1/c})}     
\]

% \begin{align*}
% \E(\BIT(\chi[\alpha])) 
% \ &\le\  l\  + \ \sum_{k=l+1}^{\infty} \Pr(\BIT(\chi[\alpha] \geq k))  \\
% & =\  l \  + \  \sum_{k=l+1}^{\infty} \Pr(\BIT(\chi[\alpha] \geq c\log(n/ \eps_k))) \\ 
% %
% &\le\  l \  + \  \sum_{k=l+1}^{\infty}   \frac{\eps_k \cdot N}{\delta} \\
% & =\   l \  + \  \sum_{k=l+1}^{\infty}   \frac{nN}{2^{k/c}\delta} \\
% %
% &=\   l \  + \  \frac{nN}{\delta} \ 
% \sum_{k=l+1}^{\infty}  \left(2^{1/c}\right)^{-k}  \\
% & =\   l \  + \  \frac{nN}{\delta} \ 2^{-(l+1)/c} \
% \sum_{k'=0}^{\infty}  \left(2^{-1/c}\right)^{k'}  \\
% %
% &=\   l \  + \  \frac{nN}{\delta} \ 2^{-(l+1)/c} \
% \frac{1}{1-2^{-1/c}}  \\
% & \le\   l \  + \  2^{-(l+1)}  \frac{nN}{\delta(1-2^{-1/c})}     
% \end{align*}

We choose $l = \lceil \log(n N / \delta) \rceil $ and note that:
    \[
\E(\BIT(\chi[\alpha])) \le \log(n N /  \delta) \ + \ 
2^{-\log(n N /  \delta)}  \ \frac{ n N}{\delta (1-2^{-1/c})}  + 
1 \le O(\log(n / \delta))
\]
 
This concludes the theorem.
\end{proof}

\paragraph{Applying \Cref{thm:augmentation}.}
Dobbins, Holmsen and Miltzow show 
that under smoothed analysis the \ER-complete art gallery
problem has a solution with logarithmic 
expected \InputBitComp~\cite{ArxivSmoothedART}. 
They show this under various models of perturbation 
with one being \emph{edge-inflation}. 
During \emph{edge-inflation}, for each edge $e$ 
of the polygon, they shift the edge
$e$ by $\alpha$ outwards in a direction orthogonal to $e$. 
Our theorem generalizes the \SmoothAna 
result from~\cite{ArxivSmoothedART}.

\begin{corollary}[\cite{ArxivSmoothedART}]
\label{cor:art}
    For the art gallery problem, with an $\alpha$-augmentation 
    which is an $\alpha$-edge-inflation, the expected \InputBitComp 
    with smoothed analysis over $\alpha$ is at most~$O(\log(n / \delta))$.
\end{corollary}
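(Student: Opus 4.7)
The plan is to derive Corollary~\ref{cor:art} as a direct application of Theorem~\ref{thm:augmentation}. Since the theorem gives the bound $O(\log(n/\delta))$ for any resource-augmentation problem that is \mono, \breakproperty, and \bitaugmented, it suffices to define the solution space for edge-inflation precisely and to verify each of these three properties. The input $I$ is a simple polygon $P$ with $n$ rational vertices, the augmentation parameter $\alpha$ inflates every edge outward by $\alpha$ (yielding a polygon $P_\alpha\supseteq P$), and $\chi_I[\alpha]$ is the set of guard placements $x$ whose union of visibility regions covers $P_\alpha$. The evaluation function $f$ can be taken as $-|x|$ (we wish to minimize the number of guards, which the framework treats symmetrically).

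For monotonicity, I would argue that if $\alpha \le \alpha'$ then the visibility inside $P_{\alpha'}$ at every point of $P_\alpha$ is at least as large as the visibility inside $P_\alpha$ (moving walls outward can only enlarge each visibility polygon), and the extra region $P_{\alpha'}\setminus P_\alpha$ is a thin strip along the edges that is visible from guards placed near those edges. The formulation in~\cite{ArxivSmoothedART} makes $\chi_I[\alpha]\subseteq\chi_I[\alpha']$ essentially by definition, because inflation shrinks the obstacle complement. For the moderate property, I would count breakpoints by observing that the combinatorial structure of the visibility arrangement of a set of at most $n$ guards inside $P_\alpha$ changes only at values of $\alpha$ where three defining features (vertices, edges after inflation, and guard-edge sightlines) become collinear; this happens at most $n^{O(1)}$ times, yielding at most a polynomial number of $\alpha$ at which $\Opt(\chi_I[\alpha])$ can change.

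For the smoothable property, I would take an optimal guard placement $x\in\chi_I[\alpha]$ and round each guard coordinate to the nearest multiple of $\omega := \eps/\mathrm{poly}(n)$, obtaining $x'$ with $\BitInput(x',A)=O(\log(n/\eps))$. The key geometric observation is that a perturbation of each guard by at most $O(\eps/n)$ changes every visibility polygon by a Hausdorff distance of $O(\eps)$, so the rounded guards cover $P_\alpha$ eroded by $O(\eps)$; because $P_{\alpha+\eps}\supseteq P_\alpha$ is an inflation by exactly $\eps$, the rounded $x'$ still covers $P_{\alpha+\eps}$, giving $x'\in\chi_I[\alpha+\eps]$ with $f(x)\le f(x')$ (the cardinality is preserved). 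The real RAM verifier for art-gallery coverage has constant \AlgDim and \AlgDeg (a single guard-visibility predicate reduces to constantly many sign tests on low-degree polynomials in the guard and three polygon vertices), so $\BIT_{IN}(x',A)=O(\log(n/\eps))$ follows from \Cref{lem:intermediate}.

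The step I expect to be the main obstacle is the quantitative visibility-stability argument in the \bitaugmented verification, namely turning the intuitive claim ``moving a guard by $\eps/\mathrm{poly}(n)$ only moves its visibility boundary by $\eps$'' into a bound that is explicit enough to conclude coverage of the inflated polygon. All other pieces, monotonicity and the polynomial bound on breakpoints, follow from the standard combinatorial structure of visibility arrangements in~\cite{ARTETR,ArxivSmoothedART}. Once the three properties are in hand, $m=O(n)$ holds trivially (the integer part of the input is the combinatorial description of $P$), and \Cref{thm:augmentation} gives $\BIT_{\textrm{smooth}}(\delta,n,m,P)=O(\log(n/\delta))$, which is exactly the statement of the corollary.
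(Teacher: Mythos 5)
Your high-level plan is the same as the paper's: identify $\chi_I[\alpha]$ as guard placements covering the edge-inflated polygon $P_\alpha$, verify the three properties, and invoke \Cref{thm:augmentation}. But two of your property-verifications diverge from the paper in ways that matter.

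For the \breakproperty property, your argument about counting collinearity events in the visibility arrangement is more machinery than is needed and also not quite on target: a breakpoint is, by definition, a change in $\Opt(\chi_I[\alpha])$, not a change in the combinatorial structure of some arrangement, and bounding the latter by a polynomial for a \emph{continuously varying} guard placement is not a priori clear. The paper's argument is much shorter and tighter: by monotonicity $\Opt(\chi_I[\alpha])$ is a non-increasing function of $\alpha$; it is integer-valued; and by Fisk's theorem it lies in $\{1,\dots,\lfloor n/3\rfloor\}$. A monotone integer-valued function with range of size $n/3$ has at most $n/3-1$ jumps, so there are at most $O(n)$ breakpoints. You should replace your arrangement argument with this counting argument, which is both correct and self-contained.

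For the \bitaugmented property there is a genuine gap. You round each guard to a grid of width $\omega=\eps/\mathrm{poly}(n)$ and assert that perturbing a guard by $O(\eps/n)$ moves its visibility polygon by Hausdorff distance $O(\eps)$. This is false without further assumptions: if a guard lies within distance $\rho$ of a reflex vertex, moving it by $\eps/n$ can sweep the visibility boundary in a far corridor by $\Theta(D\eps/(n\rho))$ where $D$ is the polygon diameter, which is unbounded as $\rho\to 0$. So the ``erosion by $O(\eps)$'' conclusion does not follow from naive rounding. The actual proof in \cite{ArxivSmoothedART} (their Lemma~7) handles exactly this difficulty using the specific geometry of edge-inflation, and the paper's proof of \Cref{cor:art} simply cites that lemma (together with their Lemma~6 for monotonicity) rather than reproving it. You correctly flagged this step as the ``main obstacle,'' but as written your sketch does not resolve it, and the claimed Hausdorff bound would need to be replaced by the more careful argument from the cited source.

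Finally, a small point: the paper treats $f$ as the number of guards and implicitly works with the minimization convention, while you negate $f$ to fit the maximization statement of \Cref{thm:augmentation}; this is a cosmetic difference and is fine.
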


\begin{proof}
The \mono property and the \bitaugmented property are both shown in the short Lemmas 6 and 7 in \cite{ArxivSmoothedART}.
    The evaluation function $f$ in the art gallery problem 
    counts the number of guards of a solution. 
    The number of guards is a natural number and any 
    polygon can be guarded 
    using at most~$n/3$ guards~\cite{Fisk78a}. 
    This implies the \breakproperty property.
    Together with the \mono property this proves that 
    the number of breakpoints is bounded  from above  by $n/3$
    % $= n^{O(1)}$ 
    and the corollary follows.
\end{proof}

\begin{figure}[bthp]
    \centering
    \includegraphics{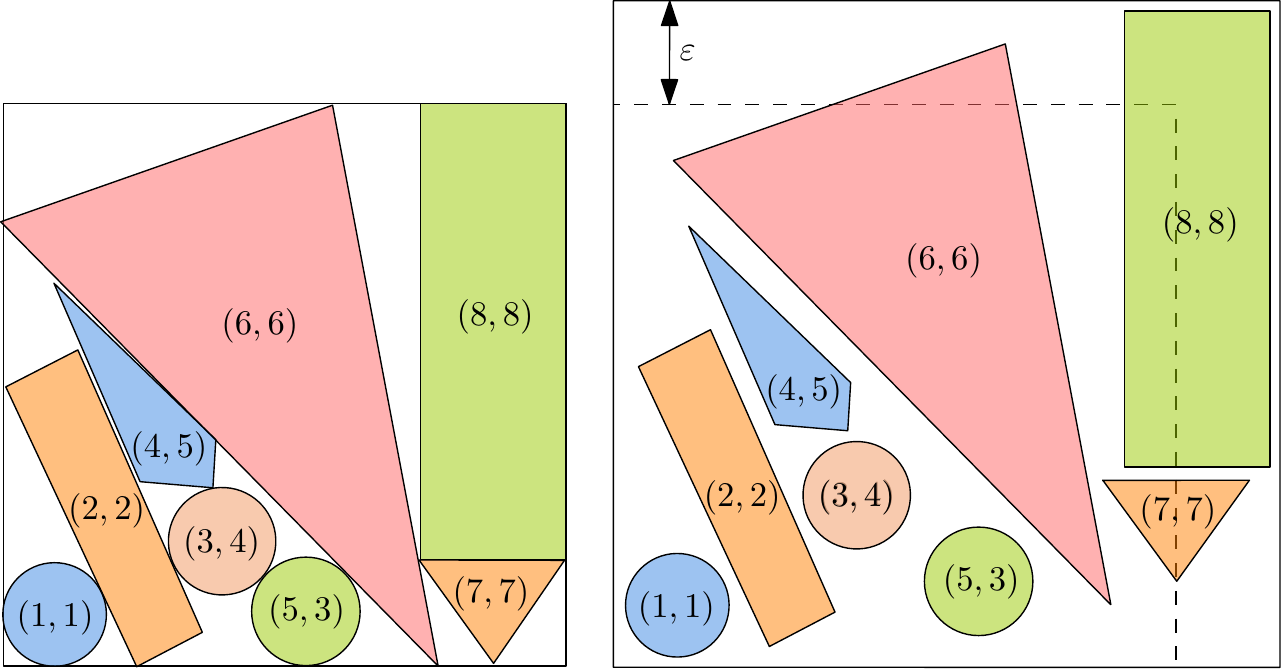}
    \caption{(Left) A set of 8 convex objects which are tightly packed in a square with diameter $(1+\alpha)$. Their enumerations specify a choice for $\pi_x$ and $\pi_y$. Note that this choice is not unique. (Right) The objects packed in a square of diameter $(1+\alpha + \eps)$ where an object with enumeration $(i,j)$ is translated by $(\frac{i\eps}{10}. \frac{j\eps}{10})$.   }
    \label{fig:shifting}
\end{figure}

We can also prove new results such as the following result about geometric packing:

\begin{corollary}
\label{cor:packing}
    For geometric packing of convex objects in a unit-size container, with an $\alpha$-augmentation over the container size, the expected \InputBitComp with smoothed analysis over $\alpha$ is $O(\log(n / \delta))$.
\end{corollary}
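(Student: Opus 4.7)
The plan is to apply \Cref{thm:augmentation} to geometric packing of convex objects, taking $I$ to describe the pieces, the augmentation $\alpha$ to be the additive enlargement of the container side length, $f$ to count the number of pieces packed, and $\chi_I[\alpha]$ to be the set of pairwise-disjoint placements of a subset of the pieces inside the square container of side $1+\alpha$. It then remains to verify the three hypotheses of \Cref{thm:augmentation}.

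The \mono property is immediate, since any placement that is pairwise disjoint inside the container of side $1+\alpha$ is still pairwise disjoint inside the strictly larger container of side $1+\alpha'$ for $\alpha'>\alpha$, so $\chi_I[\alpha]\subseteq\chi_I[\alpha']$. The \breakproperty property follows because $\Opt(\chi_I[\alpha])$ is a non-decreasing integer-valued function bounded above by $n$, so it has at most $n$ breakpoints in $\alpha$.

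For the \bitaugmented property, given an optimal packing $x\in\chi_I[\alpha]$ of $k\le n$ pieces, I would construct a packing $x'\in\chi_I[\alpha+\eps]$ of the same $k$ pieces with $\BIT_{IN}(x',A)=O(\log(n/\eps))$ by the shift-and-round procedure illustrated in \Cref{fig:shifting}. Compute enumerations $\pi_x,\pi_y$ of the packed pieces, translate the piece of rank $(i,j)$ by a vector proportional to $(i,j)$ scaled so that the total shift in each axis is at most $\eps/2$, which keeps the new placement inside the $(1+\alpha+\eps)$-container. The rank-dependent shift inserts $\Omega(\eps/n)$ of additional separation between consecutive objects in each ordering. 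Then round every coordinate of each translation to the nearest multiple of a grid of width $\Theta(\eps/n^2)$; since the rounding error per coordinate is asymptotically dominated by the inserted separation, pairwise disjointness is preserved, and each translation coordinate now takes only $O(\log(n/\eps))$ bits.

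The main obstacle is verifying that the rank-monotone shift itself preserves pairwise disjointness of the convex pieces: translating a later-ranked convex body further to the right than an earlier-ranked one can in principle introduce new intersections when the two bodies were originally separated by an oblique line rather than an axis-parallel one. I would address this by choosing $\pi_x$ and $\pi_y$ from a sweep-line construction that ensures consecutive pieces in each ordering are separable by an axis-parallel line in the original packing, so that the rank-monotone shift acts along a direction which already witnesses their disjointness and can therefore only increase their mutual distance. Once these orderings are constructed and the three hypotheses of \Cref{thm:augmentation} are verified, the stated bound $\BIT_{\textrm{smooth}}(\delta,n,m,P)=O(\log(n/\delta))$ follows immediately.
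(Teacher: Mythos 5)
Your overall plan matches the paper: instantiate \Cref{thm:augmentation}, check the \mono and \breakproperty properties (both of which you handle correctly and just as the paper does), and establish the \bitaugmented property by a rank-monotone shift followed by rounding, using two orderings $\pi_x,\pi_y$ as in \Cref{fig:shifting}. However, there are two genuine gaps in the \bitaugmented argument.

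First, your fix for the disjointness worry does not work. You propose to choose $\pi_x$ and $\pi_y$ so that \emph{consecutive} pieces in each ordering are separable by an \emph{axis-parallel} line. That is impossible in general: two disjoint convex bodies whose $x$-ranges overlap (say, two thin ``NE--SW'' slabs packed side by side) may be separable \emph{only} by oblique lines, yet they are comparable under the horizontal-line sweep and may well be assigned consecutive ranks. The paper's orderings are defined by the order in which a horizontal (resp.\ vertical) line meets the bodies, not by vertical (resp.\ horizontal) separability, and consecutive ranks cannot in general be upgraded to cardinal separability. Your underlying worry is legitimate --- the paper's sentence ``they are separated by at least $\eps/(n+2)$ in both the $x$ and $y$ direction'' is terse --- but the resolution is different. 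What actually makes the shift safe is that the \emph{pair} of orderings pins down a separating direction for every pair $(A,B)$ in the closed quadrant $Q$ whose signs are $(\sign(\pi_x(B)-\pi_x(A)),\;\sign(\pi_y(B)-\pi_y(A)))$: if the arc of separating directions for $(A,B)$ avoided the closed quadrant $Q$, then (since the arc has angular length at most $\pi$) it would contain one of the two cardinal directions bounding the complementary quadrants, and that cardinal separation would force the opposite sign in $\pi_x$ or in $\pi_y$ via the horizontal/vertical-line order or the domination rule, a contradiction. Since the relative shift of $B$ with respect to $A$ lies in the open quadrant $Q$, its dot product with that separating normal is strictly positive, so the gap grows. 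This argument requires using \emph{both} orderings together and makes no axis-parallel-separability assumption.

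Second, you omit the rotation issue entirely. A witness for a packing specifies not only a translation but also a rotation for each piece, and the original optimal packing may use rotations of unbounded \InputBitComp. Rounding the translations to a grid does not address this, and one cannot simply round the entries of a rotation matrix to $O(\log(n/\eps))$ bits, since the result would no longer be orthogonal with determinant one. The paper dedicates \Cref{sub:rotations} to constructing, for any desired rotation, a nearby rotation whose matrix entries (involving a single square root of a bounded-bit rational) can be written with $O(\log(n/\eps))$ bits; the $\Omega(\eps/n)$ slack created by the shift is precisely what licenses replacing each rotation by such a rounded one without creating overlaps. Without this step the \bitaugmented bound $O(\log(n/\eps))$ does not follow.
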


\begin{proof}
    In order to prove the corollary, we only have to prove that geometric packing with $\alpha$-augmentation over the container size is \mono, \breakproperty and \bitaugmented. For an input $I$ and unit container size the solution space $\chi_I[\alpha]$ is the set of all solutions which contain a set of geometric objects from $I$ packed within a container of size $(1+\alpha)$. 
    First, we show that this algorithmic problem has the \mono property.
    Specifically, we show that for all $\alpha, \alpha' \in [0,1]$ with $\alpha \le \alpha'$, $\chi_I[\alpha]$ contains subsets of $I$ which can be packed in a container of size $(1+\alpha)$ and therefore the elements can be packed in a container of size  $(1+\alpha')$.

Next, we show the problem is \breakproperty.
The evaluation function $f$ counts the number 
of geometric objects which are correctly packed 
in the solution. Thus the function $f$ can evaluate 
to at most $n$ distinct values and therefore the number 
of breakpoints is bounded from above  by~$n$.
    
    Lastly we show {\bitaugmented} property.
    We show that for all optimal packings $x$ in $\chi_I[\alpha]$ and for all $\eps > 0$,  there is a solution
    $x'\in\chi_I[\alpha+\eps]$ with \InputBitComp with respect to its \realVerify of
    $O( \log ( n / \eps ) )$
    and $f(x) \le f(x')$ (since packing is a maximization problem).
    In other words: given an optimal packing $x$, that packs $k$ objects in a container of size $(1 + \alpha)$ we must show that there is a packing $x'$ in a container of size $(1 + \alpha + \eps)$, that packs at least $k$ objects \emph{and} that the solution $x'$ can be specified using $O( \log ( n / \eps ))$ bits per coordinate.
    
    To this end, let $x \in \chi[\alpha]$ 
be an optimal packing of $k$ elements of the input in a 
container of size $(1+\alpha)$.
Fix a value $\eps>0$ and consider $x$ placed in a container of width $(1 + \alpha + \eps)$. Using the extra space and the fact that the objects are convex, we will translate the objects in $x$ such that any two objects in $x$ are at least $\frac{\eps}{n+2}$ apart from each other and the boundary by translating them in the two cardinal directions (refer to \Cref{fig:shifting}). 

Specifically, we assign a linear order $\pi_x$ on the $n$ elements such 
that: (1) if an object dominates another in the $x$ direction 
it is further in the ordering and (2) if you take an 
arbitrary horizontal line, the order of intersection with this 
line respects the order $\pi_x$. We define $\pi_y$ 
symmetrically 
for vertical lines. Since the input objects are convex, such a linear extension to $\pi_x$ must always exist (Figure~\ref{fig:pointplacement}). Let a convex object $O$ in 
$x \in \chi_I[\alpha]$ be the $i$'th element in $\pi_x$ and 
the $j$'th element in $\pi_y$ (we start counting from $1$). 
We translate $O$ by $\left( \frac{i\eps}{n+2}, \frac{j \eps}{n+2} \right)$. 
Observe that:
\begin{enumerate}
    \item all objects are contained in a container 
of diameter $(1 + \alpha + \eps)$ (since an element 
is translated by at most $\frac{n \eps}{n+2}$ 
in any cardinal direction) and
\item all objects 
are separated by at least $\frac{\eps}{n + 2}$ due to the translation. Indeed, since all objects were originally disjoint their distance was originally at least zero. Now they are separated by at least $\frac{\eps}{n + 2}$ in both the $x$ and $y$ direction.
\end{enumerate}
 This proves that we can place the packing $x$ in a container of size $(1 + \alpha + \eps)$, such that any two objects are separated by at least $\frac{\eps}{n+2}$.
 What remains to show, to prove that we can always specify this packing using logarithmic bits per coordinate.

Since any two objects are separated by at least $\frac{\eps}{n+2}$, we can freely translate every object by $O(\eps / n)$ and freely rotate every object by $O(\eps / n)$ degrees such that all objects are still mutually disjoint.
We claim that we can now move and rotate every object in $x$ by at most $O(\eps / n)$, such that all objects are still pairwise disjoint and such that their rotation and translation can be described with $O(\log n / \eps )$ bits. 
For translation, this is immediately true (simply snap one of the points of the convex object to a corresponding nearest gridpoint).
When describing a rotation however, it is not immediately obvious that one can describe a suitable rotation with $O(\log n / \eps )$ bits. As we are not aware of any publication that specifies how to describe rotations with limited \BitComp, we give an example of how to do this in Section~\ref{sub:rotations}. 

Thus, we showed that given any optimal packing $x \in \chi[\alpha]$ there exists a packing $x' \in \chi[\alpha + \eps]$, that packs at least as many objects and can be described with $O(\log n / \eps)$ bits per objects. Therefore, geometric packing is \bitaugmented. 
\end{proof}

\begin{figure}[h]
    \centering
    \includegraphics{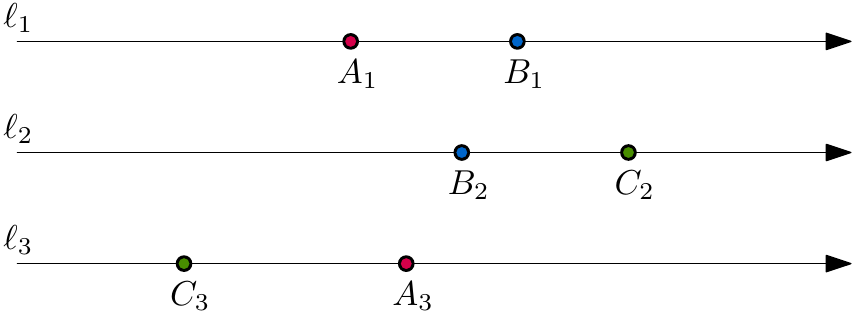}
    \caption{Consider for the sake of contradiction three planar pairwise disjoint convex objects $A, B, C$ with $\pi_x(A) < \pi_x(B)$, $\pi_x(B) < \pi_x(C)$ and $\pi_x(C) < \pi_x(A)$. Then, without loss of generality (due to symmetry), there exist three lines $\ell_1$, $\ell_2$, $\ell_3$ as shown in the figure, where $A$ intersects $\ell_1$ before $B$ in the intersection point $A_1$, $B$ intersects $\ell_2$ before $C$ in the intersection point $B_2$ and $C$ intersects $\ell_3$ before $A$ in the point $C_3$.
    Since $A$ is convex, there exists a segment connecting $A_1$ and $A_3$ contained in $A$. 
    This segment has to cross $\ell_2$, left of $B_2$ (else this segments intersects the segment between $B_1$ and $B_2$, and via therefore the convex object $B$).
    It follows that the segment connecting $A_1$ and $A_3$ intersects the segment connecting $C_2$ and $C_3$ which contradicts that these objects were pairwise disjoint.
    }
    \label{fig:pointplacement}
\end{figure}
 
Lastly to demonstrate the wide applicability of 
\Cref{thm:augmentation} we investigate a classical 
problem within computational geometry which is not an \ETR-hard 
problem: computing the minimum-link path. In this problem the 
input is a polygonal domain and two points contained in it and 
one needs to connect
the points by a polygonal path with minimum number of edges. 
Recently, Kostitsyna, L{\"o}ffler, Polishchuk and
Staals~\cite{kostitsyna2017complexity} showed that even 
if the polygonal domain $P$ is a simple polygon where the $n$~vertices
of the polygon each have coordinates with $\log n$ bits each, 
then still the minimal \InputBitComp needed to represent the 
minimum-link path is $O(k \log n)$ where $k$ is the length
of the path and they present 
a construction where $k = \Omega(n)$.

Just like the art gallery problem, the minimum link path problem has a simple polygon as input and we propose to augment the minimum link path problem in the same way as the art gallery problem was augment in~\cite{ArxivSmoothedART}: by edge-inflation. Two points in the minimum link path may be connected if and only if they are mutually visible. Hence, with an analysis identical to \Cref{cor:art} we can immediately state that:
\begin{corollary}
\label{cor:minlinkpath}
    For computing the minimum link path, with an $\alpha$-augmentation which is edge-inflation, the expected \InputBitComp with smoothed analysis over $\alpha$ is $O(\log(n / \delta))$.
\end{corollary}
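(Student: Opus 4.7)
The plan is to invoke \Cref{thm:augmentation} directly, so I would verify its three hypotheses for the minimum-link-path problem under $\alpha$-edge-inflation. Since the input type (a simple polygon with two query points) and the augmentation mechanism (edge-inflation) mirror the art gallery setup, the proof structure parallels \Cref{cor:art}, and a real-verification algorithm is obvious: guess the path vertices, check that each consecutive pair is mutually visible inside the inflated polygon, and count links. The cost function $\mathcal{C}(\cdot)$ is to be read as the \InputBitComp needed to write down such an optimal path.

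For \mono, I would note that the $\alpha$-inflated polygon $P_\alpha$ is monotonically contained in $P_{\alpha'}$ for $\alpha \le \alpha'$, so any polygonal path connecting the two query points inside $P_\alpha$ also lies inside $P_{\alpha'}$; hence $\chi_I[\alpha]\subseteq \chi_I[\alpha']$. For \breakproperty, the evaluation function $f$ counts links, takes positive integer values, and is bounded above by the trivial $O(n)$ upper bound on minimum-link paths in simple polygons (see \cite{kostitsyna2017complexity}). Since $f$ is non-increasing in $\alpha$ by monotonicity, the number of breakpoints is at most $O(n)=n^{O(1)}$. (Recall that we treat this as a minimization problem, in which case the symmetric formulation of \bitaugmented from \Cref{sub:resourceAugmentation} is what we use.)

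The substantive step is \bitaugmented. Fix an optimal $k$-link path $x \in \chi_I[\alpha]$ with interior vertices $v_1,\dots,v_{k-1}$. I would follow the strategy of \cite{ArxivSmoothedART}: each segment $v_iv_{i+1}$ witnesses visibility inside $P_\alpha$, and in $P_{\alpha+\eps}$ this segment is surrounded by a tube of free space of thickness at least $\Omega(\eps)$. Consequently, each interior vertex may be snapped to the nearest point of a grid of width $\omega = \Theta(\eps/n)$ while preserving (i) membership in $P_{\alpha+\eps}$ and (ii) mutual visibility of every consecutive pair. The query endpoints are not perturbed. The snapped path $x'$ has the same number of links $k$, hence $f(x') \le f(x)$, and every coordinate is expressible with $O(\log(n/\eps))$ bits, verifying the \bitaugmented property with $c=O(1)$. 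Plugging all three properties into \Cref{thm:augmentation} yields the claimed bound $\BIT_{\textrm{smooth}}(\delta,n,m,P) = O(\log(n/\delta))$.

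The main obstacle is justifying the visibility robustness in step three, namely that the extra $\eps$-inflation gives enough slack for any simultaneous $O(\eps/n)$-perturbation of all interior vertices. This is exactly the quantitative geometric ingredient established in Lemmas 6 and 7 of \cite{ArxivSmoothedART} for the art gallery problem, and the argument transfers unchanged: what was ``guard sees polygon'' there becomes ``path-segment endpoints see each other'' here, with the same sweep of the visibility witness through the inflated tube. No new probabilistic machinery is required beyond what \Cref{thm:augmentation} already provides.
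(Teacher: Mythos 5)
Your proposal matches the paper's own argument in structure and substance: both reduce to \Cref{thm:augmentation} by checking the three properties, borrow \mono and \bitaugmented from Lemmas~6 and~7 of~\cite{ArxivSmoothedART} (transporting ``guard sees region'' to ``consecutive path vertices see each other''), and get \breakproperty from a polynomial bound on the number of links together with integrality and monotonicity of $f$. The paper's proof is terser (it simply asserts the analysis is ``identical to \Cref{cor:art}'' after noting the visibility characterization of link-path connectivity), whereas you helpfully unpack the $\eps$-tube/snapping argument for \bitaugmented and note the minimization symmetry explicitly, but no genuinely different idea is introduced.
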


\subsection{Describing a rotation using limited \InputBitComp}
\label{sub:rotations}

To finalise the argument for Corollary~\ref{cor:packing}, we claimed that for every object in a packing $x$, we could slightly rotate that object so that its rotation can be described with a limited number of bits. 
For completeness, we dedicate this section to providing a way to describe such a rotation. 

In computer applications (and specifically applications that contain geometry), it is not uncommon to want to rotate geometric objects.
However, describing a rotated geometric object is nontrivial on a \realRam and problematic on a \wordRam. More often than not, rotations are the result of user input and from an application perspective it would be nice if a user could specify an angle $\phi$ and obtain a geometric object that is rotated by $\phi$ degrees. However, this is infeasible in both the \wordRam \emph{and} \realRam.
Rotating a vector $v$ by $\phi$ degrees is equal to multiplying the vector $v$ with the matrix:
\[
 R 
% \cdot v 
=
\begin{pmatrix}
 \cos \phi & -\sin \phi \\
\sin \phi & \cos \phi
\end{pmatrix} 
% \cdot {v} = {v}_R.
\]

This rotation matrix $R$ is a matrix with two key
properties: (1) the transposed matrix $R^T$ is equal to 
the inverse of $R$,i.e., $R^T = R^{-1}$ and 
(2) the determinant of $R$ equals~$1$.
These two properties 
ensure that the rotated version of a vector still has the
same length and that adjacent edges that get rotated still
have the same angle between them. 
% In other words, they
% guarantee that the rotation matrix does not \emph{morph}
% any geometric object that it is applied to. 
In order to compute $R$, given $\phi$, requires 
the computation of the $\cos$ and $\sin$ function, which
cannot be simulated by the \realRam, as explained
in \Cref{sec:DefRealRam}.
If we use the \wordRam, we can compute an approximation
of $\cos$ and $\sin$, but the resulting matrix, will
only in exceptional cases have properties~(1) and~(2).
However, note that we can specify any rotation $R$ by providing a unit
vector $V_R$ with:
\[
R \cdot  \begin{pmatrix}
 1\\
0
\end{pmatrix}  = v_R  \quad \Rightarrow \quad
    \begin{pmatrix}
 a & -b\\
b & a
\end{pmatrix} \cdot   \begin{pmatrix}
 1\\
0
\end{pmatrix}   = \begin{pmatrix}
 a\\
b
\end{pmatrix} = {v}_R. 
\]

Clearly the vector $v_R$ might have real coefficients and therefore a rotation that can be expressed in the \realRam may not always be used in the \wordRam. However, we show that for any $\wordsize$ and any desired rotation $v_R \in \mathbb{R}^2$ we can express a rotation matrix $R'$ which uses an \InputBitComp of $\wordsize$ bits such that: $|| R \cdot (1,0) - R'(1,0) ||< 2^{-w}$.
That is the goal of this subsection, we show that it is possible to obtain such a $R'$ with logarithmic \emph{\InputBitComp} such that the resulting rotation that we obtain is not far off.

Let $v_R = (a, b)$ be the desired rotation and let $v_R$ not be an orthonormal vector (if it is, then the desired rotation can be expressed using a constant-complexity rotation matrix). $V_R$ must lie in one of the four quantiles of the plane (We chose the quantiles to lie diagonally to the $x$ and $y$ axis, see \Cref{fig:rotations}). 
Given $v_R$, we consider the unique orthonormal vector that lies in the quantile opposite of $V_R$. Let without loss of generality $v_R$ be a vector lying in the left quantile, then the orthonormal vector that we consider is $(1, 0)$.
We now assume that $v_R$ lies above the $x$-axis, the argument for below the $x$-axis is symmetrical.

We round the coordinates $(a,b)$ to the closest point $(a', b')$ that lies above the $x$-axis, outside of the unit circle \emph{and} $a'$ and $b'$ can be described using \wordsize bits each.
Observe that since $a,b \in [0,1]$, it must be that $|| (a,b) - (a', b') || \le 2 \cdot 2^{-w}$.

    \begin{figure}[htbp]
        \centering
        \includegraphics[]{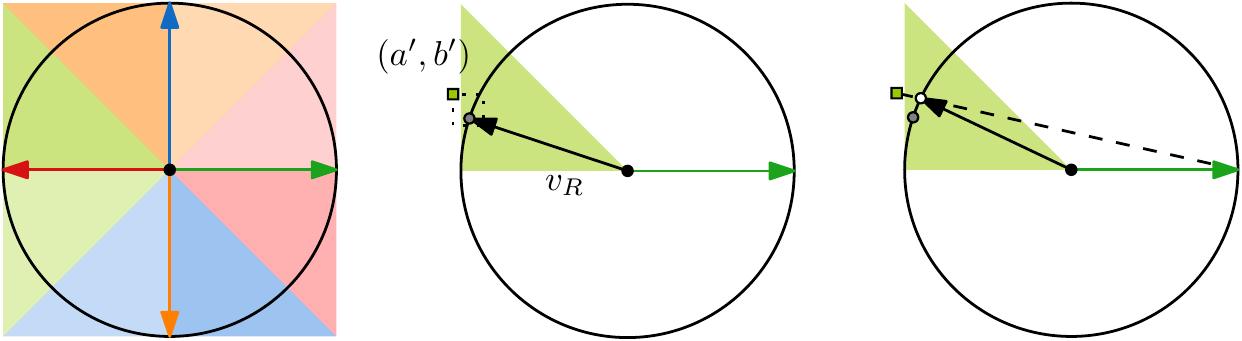}
        \caption{
        (Left) the four quantiles that we consider in orange, green, blue and red. If the desired vector $v_R$ lies in a quantile, we construct a rotation using the orthonormal vector in the opposite quantile. (Middle) We round $v_R = (a,b)$ to the closest top-right point that can be expressed with \wordsize bits. (Right) The points $(1,0)$ and $(a', b')$ and their point of intersection on the unit circle.   }
        \label{fig:rotations}
    \end{figure}

\begin{lemma}
\label{lem:rotation}
Consider the line through $(a', b')$ and $(1, 0)$ and its point of intersection with the unit circle $(x, y)$. It must be that:
\begin{enumerate}
    \item $|| (a, b) - (x, y) || \le 2 \cdot 2^{-w}$.
    \item The rotation matrix  $R' (1, 0) = (x,y)$ can be described using limited \InputBitComp.
\end{enumerate}
\end{lemma}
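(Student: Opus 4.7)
The plan is to handle the two claims separately: part~(2) via explicit rational formulas for $(x, y)$, and part~(1) via a geometric perturbation argument that exploits the fact that $(a, b)$ sits in the quantile opposite to $(1, 0)$.

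For part~(2), I would parameterize the line through $(1, 0)$ and $(a', b')$ as $(1 + t(a' - 1),\, tb')$, substitute into $X^2 + Y^2 = 1$, and divide out the obvious root $t = 0$ (which corresponds to $(1,0)$) to obtain
\[
x \;=\; \frac{b'^2 - (a' - 1)^2}{(a' - 1)^2 + b'^2}, \qquad y \;=\; \frac{2(1 - a')\,b'}{(a' - 1)^2 + b'^2}.
\]
Both are rational functions of $(a', b')$ of constant \AlgDim and constant \AlgDeg, so \Cref{lem:intermediate} bounds the \BitLength of evaluating them at inputs of \BitLength at most $w$ by $O(w)$; in particular, $x$ and $y$ are rationals of \BitLength $O(w)$. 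Since $x^2 + y^2 = 1$ holds by construction, the matrix $R' = \bigl(\begin{smallmatrix} x & -y \\ y & x \end{smallmatrix}\bigr)$ is a genuine rotation with $R'(1, 0)^\top = (x, y)^\top$ and with the required bit complexity.

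For part~(1), I would argue that the map sending a chord direction at $(1, 0)$ to the far intersection of that chord with the unit circle is Lipschitz with a universal constant, provided the relevant chord lengths are bounded below. Since $(a, b)$ lies in the quantile opposite to $(1, 0)$, both $r_1 = \|(1, 0) - (a, b)\|$ and $r_2 = \|(1, 0) - (a', b')\|$ are bounded below by a constant (roughly $\sqrt{2+\sqrt{2}}$, given that $a \le -\sqrt{2}/2$). Writing $\phi$ and $\phi'$ for the angles at $(1, 0)$ of the chord through $(a, b)$ and the chord through $(a', b')$, the cross-product identity
\[
\sin(\phi - \phi') \cdot r_1 r_2 \;=\; (a - 1) b' - (a' - 1) b \;=\; (a - 1)(b' - b) + b (a - a'),
\]
combined with the bound $\|(a', b') - (a, b)\| = O(2^{-w})$ and the $O(1)$ bounds on $|a-1|$ and $|b|$, yields $|\phi - \phi'| = O(2^{-w})$. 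A direct chord parameterization (equivalently, the inscribed-angle theorem) shows that the polar angle on the unit circle of the far intersection is a linear function of $\phi$ with slope $2$, so the polar-angle displacement of $(x, y)$ from $(a, b)$ is $2|\phi - \phi'| = O(2^{-w})$, and $\|(a, b) - (x, y)\|$ is bounded from above by this polar-angle displacement.

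The main obstacle will be pinning down the sharp constant $2$ in the claimed bound $2 \cdot 2^{-w}$. Tracking the explicit factors through the cross-product identity, through the factor-of-two doubling between chord angle and polar angle, and through the lower bound on the chord lengths obtained from the quantile choice will require careful bookkeeping, and the constant may need to be absorbed either by strengthening the rounding rule that defines $(a', b')$ or by a more precise analysis at the endpoints. The $O(2^{-w})$ asymptotic rate, however, falls out immediately from the Lipschitz property of the chord-to-intersection construction.
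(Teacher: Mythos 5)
Your part~(2) argument is essentially the same idea as the paper's (compute the second intersection explicitly and observe it has small bit length), but your route is actually a bit cleaner: by parameterizing the line and dividing out the known root $t=0$, you get
\[
x = \frac{b'^2 - (a'-1)^2}{(a'-1)^2 + b'^2}, \qquad y = \frac{2(1-a')\,b'}{(a'-1)^2 + b'^2},
\]
which are manifestly \emph{rational} in $(a',b')$, whereas the paper keeps a $\sqrt{\alpha^2 - \beta^2 + 1}$ in its quadratic formula and only then remarks that it must be of low bit length because $(1,0)$ is the other root. Your version makes the bit-length claim immediate, and the identity $x^2+y^2=1$ is a straightforward algebraic check. (Invoking \Cref{lem:intermediate} here is heavier machinery than needed, but harmless.)

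Part~(1) is where you genuinely diverge from the paper. The paper's argument is purely geometric containment: take the axis-aligned square $C$ centered at $(a,b)$ with $(a',b')$ as its far corner; since $(a',b')$ lies outside the unit disk (by the rounding rule), the slope of the chord lies in $[-1,0]$, and the near corners of $C$ lie inside the disk, the chord must cross the circle while still inside $C$, hence $\|(a,b)-(x,y)\|$ is at most the half-diagonal of $C$, which is exactly $\|(a,b)-(a',b')\| \le 2\cdot 2^{-w}$. This containment argument gives the constant~$2$ for free. Your cross-product-plus-inscribed-angle argument is correct and establishes the $O(2^{-w})$ rate just as well, and it has the virtue of not needing the ``outside the disk'' rounding rule or the slope-in-$[-1,0]$ fact; but, as you anticipate, it loses constants at three separate places (the component bounds feeding the cross product, the $\sin\theta \approx \theta$ approximation, and the factor-of-two chord-to-polar doubling), and a naive accounting lands you somewhere around $3\cdot 2^{-w}$ rather than $2\cdot 2^{-w}$. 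For the purposes of the corollary (which only needs $O(\log n/\varepsilon)$ bits), that slack is irrelevant, so your proof is adequate for the downstream use; but if you want the stated constant, the paper's square-trapping argument is the cheaper way to get it.
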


\begin{proof}
The proof is illustrated by \Cref{fig:rotations}.
Consider the square $C$ centered around $(a,b)$ with $(a',b')$ as its top left corner. Both of its right corners must be contained within the unit disk.
The line $\ell$ through $(1,0)$ and $(a',b')$ is equal to $y = \frac{b'(x-1)}{a' - 1}$. Per construction, the values $a'$ and $b'$ can be expressed using \wordsize bits, therefore the line can be described by $y = \alpha x - \beta$ with $\alpha$ and $\beta$ having $O(\wordsize)$ bits. Per construction, the slope $\alpha$ is between $0$ and $-1$ (since $(a', b')$ must lie within the same quantile as $(a,b)$ and, above the $x$-axis). It follows that the line $\ell$ intersects $C$ in its right facet and therefore the point of intersection $(x,y)$ must be contained within $C$ which proves the first item.

Item two: The unit circle is given by $y^2 + x^2 = 1$. The point of intersections between the line and the circle is:
\[
x = \frac{\pm \sqrt{\alpha^2 - \beta^2 + 1} - \alpha \beta}{\alpha^2 + 1}, \quad y = \frac{\beta \pm \alpha \sqrt{\alpha^2 - \beta^2 + 1}}{\alpha^2 + 1}
\]
We know that one point of intersection is $(1, 0)$. Given that $\alpha$ and $\beta$ can both be realized by $O(\wordsize)$ bits, this implies that the values $\sqrt{\alpha^2 - \beta^2 + 1}$ and $\sqrt{\alpha^2 - \beta^2 + 1}$ can also be realized by $O(\wordsize)$ bits.
\end{proof}

\newpage

\bibliographystyle{plain}
\bibliography{references}

\begin{thebibliography}{10}

\bibitem{abel}
Zachary Abel, Erik Demaine, Martin Demaine, Sarah Eisenstat, Jayson Lynch, and
  Tao Schardl.
\newblock Who needs crossings? {H}ardness of plane graph rigidity.
\newblock In {\em 32nd International Symposium on Computational Geometry (SoCG
  2016)}, pages 3:1--3:15, 2016.

\bibitem{ARTETR}
Mikkel Abrahamsen, Anna Adamaszek, and Tillmann Miltzow.
\newblock The art gallery problem is $\exists \mathbb{R}$-complete.
\newblock In {\em STOC}, pages 65--73, 2018.

\bibitem{TrainNN}
Mikkel Abrahamsen, Linda Kleist, and Tillmann Miltzow.
\newblock Training neural networks is er-complete.
\newblock {\em CoRR}, abs/2102.09798, 2021.

\bibitem{etrPacking}
Mikkel Abrahamsen, Tillmann Miltzow, and Nadja Seiferth.
\newblock A framework for $\exists\mathbb{R}$-completeness of
  two-dimensional~packing~problems.
\newblock {\em in Preparation}, 2020.

\bibitem{arthur2006worst}
David Arthur and Sergei Vassilvitskii.
\newblock Worst-case and smoothed analysis of the icp algorithm, with an
  application to the k-means method.
\newblock In {\em FOCS 2016}, pages 153--164. IEEE, 2006.

\bibitem{basu2006algorithms}
Saugata Basu, Richard Pollack, and Marie-Fran\c{c}oise Roy.
\newblock {\em Algorithms in real algebraic geometry. ISBN-10 3-540-33098-4}.
\newblock Springer, Berlin Heidelberg, 2006.
\newblock second edition.

\bibitem{KnapsackSmooth}
Ren{\'e} Beier and Berthold V{\"o}cking.
\newblock Random knapsack in expected polynomial time.
\newblock In {\em STOC}, pages 232--241. ACM, 2003.

\bibitem{bo-arcgi-79}
Jon Bentley and Thomas Ottmann.
\newblock Algorithms for reporting and counting geometric intersections.
\newblock {\em IEEE Transactions on Computers}, C-28(9):643--647, 1979.

\bibitem{bentley1979algorithms}
Jon~Louis Bentley and Thomas~A Ottmann.
\newblock Algorithms for reporting and counting geometric intersections.
\newblock {\em IEEE Computer Architecture Letters}, 28(09):643--647, 1979.

\bibitem{bms-scram-85}
Alberto Bertoni, Giancarlo Mauri, and Nicoletta Sabadini.
\newblock Simulations among classes of random access machines and equivalence
  among numbers succinctly represented.
\newblock {\em Annals of Discrete Mathematics}, 25:65--90, 1985.

\bibitem{bienstock1991some}
Daniel Bienstock.
\newblock Some provably hard crossing number problems.
\newblock {\em Discrete \& Computational Geometry}, 6(3):443--459, 1991.

\bibitem{bcss-crc-98}
Lenore Blum, Felipe Cucker, Michael Shub, and Steve Smale.
\newblock {\em Complexity and Real Computation}.
\newblock Springer, 1998.

\bibitem{blum1989theory}
Lenore Blum, Mike Shub, and Steve Smale.
\newblock On a theory of computation and complexity over the real numbers: $ np
  $-completeness, recursive functions and universal machines.
\newblock {\em BulletinAmer. Math. Soc.}, 21(1):1--46, 1989.

\bibitem{EptasClique}
Marthe Bonamy, Edouard Bonnet, Nicolas Bousquet, Pierre Charbit, and
  St{\'{e}}phan Thomass{\'{e}}.
\newblock {EPTAS} for max clique on disks and unit balls.
\newblock In {\em {FOCS}}, pages 568--579. {IEEE} Computer Society, 2018.

\bibitem{cardinal2017intersection}
Jean Cardinal, Stefan Felsner, Tillmann Miltzow, Casey Tompkins, and Birgit
  Vogtenhuber.
\newblock Intersection graphs of rays and grounded segments.
\newblock In {\em International Workshop on Graph-Theoretic Concepts in
  Computer Science}, pages 153--166. Springer, 2017.

\bibitem{cardinal2017recognition}
Jean Cardinal and Udo Hoffmann.
\newblock Recognition and complexity of point visibility graphs.
\newblock {\em Discrete \& Computational Geometry}, 57(1):164--178, 2017.

\bibitem{c-spfrnc-87}
Stephen Cook.
\newblock Short propositional formulas represent nondeterministic computations.
\newblock {\em Inform. Proc. Lett.}, 26:269--270, 1987/88.

\bibitem{cr-tbram-73}
Stephen Cook and Robert Reckhow.
\newblock Time bounded random access machines.
\newblock {\em J. Comput. Syst. Sci.}, 7(4):354--375, 1973.

\bibitem{cook1971complexity}
Stephen~A Cook.
\newblock The complexity of theorem-proving procedures.
\newblock In {\em Proceedings of the third annual ACM symposium on Theory of
  computing}, pages 151--158. ACM, 1971.

\bibitem{cox2006using}
David Cox, John Little, and Donal O'Shea.
\newblock {\em Using algebraic geometry}.
\newblock Springer Science \& Business Media, 2006.

\bibitem{coxIdeals}
David Cox, John Little, and Donal O'Shea.
\newblock {\em Ideals, Varieties and Algorithms}.
\newblock Springer, 2007.

\bibitem{dadush2018friendly}
Daniel Dadush and Sophie Huiberts.
\newblock A friendly smoothed analysis of the simplex method.
\newblock In {\em STOC}, pages 390--403. ACM, 2018.

\bibitem{demaine2020finding}
Erik~D Demaine, Adam~C Hesterberg, and Jason~S Ku.
\newblock Finding closed quasigeodesics on convex polyhedra.
\newblock In {\em 36th International Symposium on Computational Geometry (SoCG
  2020)}. Schloss Dagstuhl-Leibniz-Zentrum f{\"u}r Informatik, 2020.

\bibitem{AreasKleist}
Michael~G. Dobbins, Linda Kleist, Tillmann Miltzow, and Pawe\l{} Rz{\c
  a}{\.z}ewski.
\newblock {$\forall \exists \mathbb{R}$-completeness and area-universality}.
\newblock {\em ArXiv 1712.05142}, 2017.

\bibitem{ArxivSmoothedART}
Michael~Gene Dobbins, Andreas Holmsen, and Tillmann Miltzow.
\newblock Smoothed analysis of the art gallery problem.
\newblock {\em CoRR}, abs/1811.01177, 2018.

\bibitem{NestedPolytopesER}
Michael~Gene Dobbins, Andreas Holmsen, and Tillmann Miltzow.
\newblock A universality theorem for nested polytopes.
\newblock {\em arXiv}, 1908.02213, 2019.

\bibitem{englert2007worst}
Matthias Englert, Heiko R{\"o}glin, and Berthold V{\"o}cking.
\newblock Worst case and probabilistic analysis of the 2-opt algorithm for the
  tsp.
\newblock In {\em SODA}, pages 1295--1304, 2007.

\bibitem{erickson2019optimal}
Jeff Erickson.
\newblock Optimal curve straightening is $\exists\mathbb{R}$-complete.
\newblock {\em arXiv:1908.09400}, 2019.

\bibitem{FOCSRobustComputation}
Jeff Erickson, Ivor {van der Hoog}, and Tillmann Miltzow.
\newblock Smoothing the gap between np and er.
\newblock In {\em FOCS 2020}, 2020.

\bibitem{MaxCUTsmoothed}
Michael Etscheid and Heiko R{\"o}glin.
\newblock Smoothed analysis of local search for the maximum-cut problem.
\newblock {\em ACM Transactions on Algorithms (TALG)}, 13(2):25, 2017.

\bibitem{Fisk78a}
Steve Fisk.
\newblock A short proof of {C}hv{\'{a}}tal's watchman theorem.
\newblock {\em J. Comb. Theory, Ser. {B}}, 24(3):374, 1978.

\bibitem{fortune1993efficient}
Steven Fortune and Christopher Van~Wyk.
\newblock Efficient exact arithmetic for computational geometry.
\newblock In {\em Proceedings of the ninth annual symposium on Computational
  geometry}, pages 163--172. ACM, 1993.

\bibitem{fw-sitbf-93}
Michael Fredman and Dan Willard.
\newblock Surpassing the information theoretic bound with fusion trees.
\newblock {\em Journal of Computer and System Sciences}, 48(3):424--436, 1993.

\bibitem{fw-tams-94}
Michael Fredman and Dan Willard.
\newblock Trans-dichotomous algorithms for minimum spanning trees and shortest
  paths.
\newblock {\em Journal of Computer and System Sciences}, 48(3):533--551, 1994.

\bibitem{garg2015etr}
Jugal Garg, Ruta Mehta, Vijay~V. Vazirani, and Sadra Yazdanbod.
\newblock {ETR}-completeness for decision versions of multi-player (symmetric)
  {N}ash equilibria.
\newblock In {\em Proceedings of the 42nd International Colloquium on Automata,
  Languages, and Programming (ICALP 2015), part 1}, Lecture Notes in Computer
  Science (LNCS), pages 554--566, 2015.

\bibitem{ExpCoord}
Jacob Goodman, Richard Pollack, and Bernd Sturmfels.
\newblock Coordinate representation of order types requires exponential
  storage.
\newblock In {\em Proceedings of the Twenty-first Annual ACM Symposium on
  Theory of Computing}, STOC '89, pages 405--410, New York, NY, USA, 1989. ACM.

\bibitem{h-sswr-98}
Torben Hagerup.
\newblock Sorting and searching on the word ram.
\newblock In Michel Morvan, Christoph Meinel, and Daniel Krob, editors, {\em
  Proc. 15th Ann. Symp. Theoretical Aspects Comput. Sci.}, number 1373 in
  Lecture Notes Comput. Sci., pages 366--398. Springer Berlin Heidelberg, 1998.

\bibitem{hs-pmram-74}
Juris Hartmanis and Janos Simon.
\newblock On the power of multiplciation in random-access machines.
\newblock In {\em Proc. 15th Annu. IEEE Sympos. Switching Automata Theory},
  pages 13--23, 1974.

\bibitem{hlp-ccklp-99}
Joel Hass, Jeffrey~C. Lagarias, and Nicholas Pippenger.
\newblock The computational complexity of knot and link problems.
\newblock {\em J. ACM}, 46(2):185--211, 1999.

\bibitem{heffernan1995optimal}
Paul Heffernan and Joseph Mitchell.
\newblock An optimal algorithm for computing visibility in the plane.
\newblock {\em SIAM Journal on Computing}, 24(1):184--201, 1995.

\bibitem{kammer2014practical}
Frank Kammer, Maarten L{\"o}ffler, Paul Mutser, and Frank Staals.
\newblock Practical approaches to partially guarding a polyhedral terrain.
\newblock In {\em International Conference on Geographic Information Science},
  pages 318--332. Springer, 2014.

\bibitem{kang2011sphere}
Ross Kang and Tobias M{\"u}ller.
\newblock Sphere and dot product representations of graphs.
\newblock In {\em 27th Annual Symposium on Computational Geometry (SoCG)},
  pages 308--314. ACM, 2011.

\bibitem{kettner2008classroom}
Lutz Kettner, Kurt Mehlhorn, Sylvain Pion, Stefan Schirra, and Chee-Keng Yap.
\newblock Classroom examples of robustness problems in geometric computations.
\newblock {\em Computational Geometry}, 40(1):61--78, 2008.

\bibitem{kr-ubsir-84}
David Kirkpatrick and Stefan Reisch.
\newblock Upper bounds for sorting integers on random access machines.
\newblock {\em Theor. Comput. Sci.}, 28(3):263--276, 1984.

\bibitem{klee1970good}
Victor Klee and George Minty.
\newblock How good is the simplex algorithm.
\newblock Technical report, Washington Univ. Seattle Dept. of Mathematics,
  1970.

\bibitem{LindaPHD}
Linda Kleist.
\newblock {\em Planar graphs and faces areas -- Area-Universality}.
\newblock PhD thesis, Technische Universit\"at Berlin, 2018.
\newblock PhD thesis.

\bibitem{kostitsyna2017complexity}
Irina Kostitsyna, Maarten L{\"o}ffler, Valentin Polishchuk, and Frank Staals.
\newblock On the complexity of minimum-link path problems.
\newblock {\em Journal of Computational Geometry}, 8(2):80--108, 2017.

\bibitem{l-pubrm-15}
Marc Lackenby.
\newblock A polynomial upper bound on {Reidemeister} moves.
\newblock {\em Ann. Math.}, 182(2):491--564, 2015.

\bibitem{levin1973universal}
Leonid~Anatolevich Levin.
\newblock Universal sequential search problems.
\newblock {\em Problemy peredachi informatsii}, 9(3):115--116, 1973.

\bibitem{li2005recent}
Chen Li, Sylvain Pion, and Chee-Keng Yap.
\newblock Recent progress in exact geometric computation.
\newblock {\em The Journal of Logic and Algebraic Programming}, 64(1):85--111,
  2005.

\bibitem{liotta1998robust}
Giuseppe Liotta, Franco~P Preparata, and Roberto Tamassia.
\newblock Robust proximity queries: An illustration of degree-driven algorithm
  design.
\newblock {\em SIAM Journal on Computing}, 28(3):864--889, 1998.

\bibitem{AnnaPreparation}
Anna Lubiw, Tillmann Miltzow, and Debajyoti Mondal.
\newblock The complexity of drawing a graph in a polygonal region.
\newblock In {\em International Symposium on Graph Drawing and Network
  Visualization}, 2018.

\bibitem{mairson1988reporting}
Harry Mairson and Jorge Stolfi.
\newblock Reporting and counting intersections between two sets of line
  segments.
\newblock In {\em Theoretical Foundations of Computer Graphics and {CAD}},
  pages 307--325. Springer, 1988.

\bibitem{matousek2014intersection}
Ji{\v{r}}{\'{\i}} Matou{\v{s}}ek.
\newblock Intersection graphs of segments and $\exists \mathbb{R}$.
\newblock {\em ArXiv 1406.2636}, 2014.

\bibitem{mcdiarmid2013integer}
Colin McDiarmid and Tobias M{\"u}ller.
\newblock Integer realizations of disk and segment graphs.
\newblock {\em Journal of Combinatorial Theory, Series B}, 103(1):114--143,
  2013.

\bibitem{ContinuousCSP}
Tillmann Miltzow and Reinier~F. Schmiermann.
\newblock On classifying continuous constraint satisfaction problems.
\newblock {\em CoRR}, abs/2106.02397, 2021.

\bibitem{mnev1988universality}
Nicolai Mn{\"e}v.
\newblock The universality theorems on the classification problem of
  configuration varieties and convex polytopes varieties.
\newblock In Oleg~Y. Viro, editor, {\em Topology and geometry -- Rohlin
  seminar}, pages 527--543. Springer-Verlag Berlin Heidelberg, 1988.

\bibitem{nemhauser1969discrete}
George Nemhauser and Zev Ullmann.
\newblock Discrete dynamic programming and capital allocation.
\newblock {\em Management Science}, 15(9):494--505, 1969.

\bibitem{ps-cgi-85}
Franco Preparata and Michael~Ian Shamos.
\newblock {\em Computational Geometry: An Introduction}.
\newblock Texts and Monographs in Computer Science. Springer, 1985.

\bibitem{richardson1969some}
Daniel Richardson.
\newblock Some undecidable problems involving elementary functions of a real
  variable.
\newblock {\em The Journal of Symbolic Logic}, 33(4):514--520, 1969.

\bibitem{richter1995realization}
J{\"u}rgen Richter-Gebert and G{\"u}nter~M. Ziegler.
\newblock Realization spaces of 4-polytopes are universal.
\newblock {\em Bulletin of the American Mathematical Society}, 32(4):403--412,
  1995.

\bibitem{r-rfcs-91}
John Robson.
\newblock An {$O(T \log T)$} reduction from {RAM} computations to
  satisfiability.
\newblock {\em Theor. Comput. Sci.}, 82(1):141--149, 1991.

\bibitem{salesin1989epsilon}
David Salesin, Jorge Stolfi, and Leonidas Guibas.
\newblock Epsilon geometry: building robust algorithms from imprecise
  computations.
\newblock In {\em Proceedings of the fifth annual symposium on Computational
  geometry}, pages 208--217. ACM, 1989.

\bibitem{Schaefer2010}
Marcus Schaefer.
\newblock Complexity of some geometric and topological problems.
\newblock In {\em Proceedings of the 17th International Symposium on Graph
  Drawing (GD 2009)}, Lecture Notes in Computer Science (LNCS), pages 334--344.
  Springer, 2009.

\bibitem{schaefer2013realizability}
Marcus Schaefer.
\newblock Realizability of graphs and linkages.
\newblock In J\'{a}nos Pach, editor, {\em Thirty Essays on Geometric Graph
  Theory}, chapter~23, pages 461--482. Springer-Verlag New York, 2013.

\bibitem{Schaefer-ETR}
Marcus Schaefer and Daniel \v{S}tefankovi\v{c}.
\newblock Fixed points, {N}ash equilibria, and the existential theory of the
  reals.
\newblock {\em Theory of Computing Systems}, 60(2):172--193, 2017.

\bibitem{s-pram-79}
Arnold Sch{\"o}nhage.
\newblock On the power of random access machines.
\newblock In {\em Proc. 6th Internat. Colloq. Automata Lang. Program.}, Lecture
  Notes in Computer Science., pages 520--529. Springer-Verlag, 1979.

\bibitem{s-cg-79}
Michael~Ian Shamos.
\newblock {\em Computational Geometry}.
\newblock PhD thesis, Yale Unviersity, 1979.

\bibitem{shitov2016universality}
Yaroslav Shitov.
\newblock A universality theorem for nonnegative matrix factorizations.
\newblock {\em ArXiv 1606.09068}, 2016.

\bibitem{Shitov16a}
Yaroslav Shitov.
\newblock The complexity of positive semidefinite matrix factorization.
\newblock {\em {SIAM} Journal on Optimization}, 27(3):1898--1909, 2017.

\bibitem{shor1991stretchability}
Peter Shor.
\newblock Stretchability of pseudolines is {NP}-hard.
\newblock In Peter Gritzmann and Bernd Sturmfels, editors, {\em Applied
  Geometry and Discrete Mathematics: The Victor Klee Festschrift}, DIMACS --
  Series in Discrete Mathematics and Theoretical Computer Science, pages
  531--554. American Mathematical Society and Association for Computing
  Machinery, 1991.

\bibitem{spielman2004smoothed}
Daniel Spielman and Shang-Hua Teng.
\newblock Smoothed analysis of algorithms: Why the simplex algorithm usually
  takes polynomial time.
\newblock {\em Journal of the ACM (JACM)}, 51(3):385--463, 2004.

\bibitem{van2019smoothed}
Ivor van~der Hoog, Tillmann Miltzow, and Martijn van Schaik.
\newblock Smoothed analysis of order types.
\newblock {\em arXiv:1907.04645}, 2019.

\bibitem{e-mms-90}
Peter van Emde~Boas.
\newblock Machine models and similations.
\newblock In {\em Algorithms and Complexity}, Handbook of Theoretical Computer
  Science, chapter~1, pages 1, 3--66. Elsevier Science, 1990.

\bibitem{yap1997towards}
Chee-Keng Yap.
\newblock Towards exact geometric computation.
\newblock {\em Computational Geometry}, 7(1-2):3--23, 1997.

\end{thebibliography}

\end{document}